\newif\ifelsarticle
	\def\@seccntformat#1{\@ifundefined{#1@cntformat}%
		{\csname the#1\endcsname.\hskip0.5em}    
		{\csname #1@cntformat\endcsname}
	}
	\patchcmd{\appendix}{\appendixname}{}{}{}
	\appto{\appendix}{%
		\newcommand{\section@cntformat}{\appendixname\ \thesection.\hskip0.5em}
	}
\theoremstyle{definition}
\newtheorem{definition}{Definition}[section]
\theoremstyle{plain}
\newaliascnt{theorem}{definition}
\newtheorem{theorem}[theorem]{Theorem}
\newaliascnt{lemma}{definition}
\newtheorem{lemma}[lemma]{Lemma}
\newaliascnt{proposition}{definition}
\newaliascnt{corollary}{definition}
\newtheorem{corollary}[corollary]{Corollary}
\newtheorem{claim}{Claim}
\theoremstyle{remark}
\newtheorem*{remark}{Remark}
\newenvironment{claimproof}{\proof[Proof of claim]}{\endproof}
\DeclarePairedDelimiter{\abs}{\lvert}{\rvert}
\newcommand{\symdif}{\mathbin{\triangle}}
\newcommand{\N}{\mathds{N}}
\newcommand{\Q}{\mathds{Q}}
\newcommand{\R}{\mathds{R}}
\newcommand{\abstSys}[1]{#1}
\newcommand{\sys}[1]{\mathbf{#1}}
\newcommand{\sysSpace}{\mathfrak{S}}
\newcommand{\sysIntSpace}[2]{\sysSpace_{#1,#2}} 
\newcommand{\tuple}[1]{\mathbf{#1}}
\newcommand{\val}{v}
\newcommand{\tim}{t}
\newcommand{\valsp}{\mathcal{V}}
\newcommand{\timesp}{\mathcal{T}}
\newcommand{\inSet}{\Lambda}
\newcommand{\inInSet}{\mathcal{I}}
\newcommand{\outInSet}{\mathcal{O}}
\newcommand{\inIn}{i}
\newcommand{\outIn}{o}
\newcommand{\fixedPoints}[4]{\Phi_{#2, #3, #4}^{#1}}
\newcommand{\fpChoiceFun}{\phi}
\newcommand{\chosenFixedPoint}[4]{\fpChoiceFun_{#2, #3}^{#1}(#4)}
\newcommand{\genFunSys}[3]{\mathfrak{F}(#1, #2, #3)}
\newcommand{\intSetFunct}{\lambda}
\newcommand{\parac}{\mathbin{\parallel}}
\newcommand{\connectSet}{\Gamma}
\newcommand{\connectFunc}{\gamma}
\newcommand{\connect}[2]{\connectFunc_{\{#1, #2\}}}
\newcommand{\monoSysSpace}{\mathfrak{M}}
\newcommand{\monoSysIntSpace}[2]{\monoSysSpace_{#1,#2}} 
\newcommand{\contSysSpace}{\mathfrak{N}}
\newcommand{\contSysIntSpace}[2]{\contSysSpace_{#1,#2}} 
\newcommand{\causalSysSpace}{\mathfrak{C}}
\newcommand{\causalSysIntSpace}[2]{\causalSysSpace_{#1,#2}} 
	\journal{Journal of Theoretical Computer Science}
	\title{\textbf{Toward an Algebraic Theory of Systems\thanks{
	© 2018. This manuscript version is made available under the CC-BY-NC-ND 4.0 license
	\href{https://creativecommons.org/licenses/by-nc-nd/4.0/}{\nolinkurl{https://creativecommons.org/licenses/by-nc-nd/4.0/}}.
	The version published in \emph{Theoretical Computer Science} is available at \href{https://doi.org/10.1016/j.tcs.2018.06.001}{\nolinkurl{https://doi.org/10.1016/j.tcs.2018.06.001}}.}}}
	\author[1]{Christian~Matt\thanks{Present address: Department of Computer Science, University of California, Santa Barbara, CA 93106, USA.}}
	\author[1]{Ueli~Maurer}
	\author[2]{Christopher~Portmann\thanks{Present address: Department of Computer Science, ETH Zurich, 8092 Zurich, Switzerland.}}
	\author[2]{\authorcr Renato~Renner}
	\author[3]{Björn~Tackmann\thanks{Present address: IBM Research -- Zurich, 8803 Rüschlikon, Switzerland.}}
	\affil[1]{Department of Computer Science, ETH Zurich, 8092 Zurich, Switzerland. \authorcr \href{mailto:cmatt@cs.ucsb.edu}{\nolinkurl{cmatt@cs.ucsb.edu}}, \href{mailto:maurer@inf.ethz.ch}{\nolinkurl{maurer@inf.ethz.ch}}}
	\affil[2]{Institute for Theoretical Physics, ETH Zurich, 8093 Zurich, Switzerland. \authorcr
	\href{mailto:chportma@inf.ethz.ch}{\nolinkurl{chportma@inf.ethz.ch}}, \href{mailto:renner@phys.ethz.ch}{\nolinkurl{renner@phys.ethz.ch}}}
	\affil[3]{Department of Computer Science and Engineering, University of California, San Diego, La Jolla, CA 92093, USA. \authorcr \href{mailto:bta@zurich.ibm.com}{\nolinkurl{bta@zurich.ibm.com}}}
	\date{}
\begin{document}

\ifelsarticle
	\begin{frontmatter}
	
	\title{Toward an Algebraic Theory of Systems}
		
	\author[ETHCS]{Christian~Matt\corref{cor}\fnref{fn1}}
	\ead{cmatt@cs.ucsb.edu}
	
	\author[ETHCS]{Ueli~Maurer}
	\ead{maurer@inf.ethz.ch}
	
	\author[ETHPH]{Christopher~Portmann\fnref{fn2}}
	\ead{chportma@inf.ethz.ch}
	
	\author[ETHPH]{Renato~Renner}
	\ead{renner@phys.ethz.ch}
	
	\author[UCSD]{Björn~Tackmann\fnref{fn3}}
	\ead{bta@zurich.ibm.com}
		
	\cortext[cor]{Corresponding author}
	\fntext[fn1]{Present address: Department of Computer Science, University of California, Santa Barbara, CA 93106, USA}
	\fntext[fn2]{Present address: Department of Computer Science, ETH Zurich, 8092 Zurich, Switzerland}
	\fntext[fn3]{Present address: IBM Research -- Zurich, 8803 Rüschlikon, Switzerland}
		
	\address[ETHCS]{Department of Computer Science, ETH Zurich, 8092 Zurich, Switzerland}
	\address[ETHPH]{Institute for Theoretical Physics, ETH Zurich, 8093 Zurich, Switzerland}
	\address[UCSD]{Department of Computer Science and Engineering, University of California, San Diego, La Jolla, CA 92093, USA}
\else
	\maketitle
\fi

\begin{abstract}
	We propose the concept of a system algebra with a parallel composition operation and an interface connection operation, and formalize composition-order invariance, which postulates that the order of composing and connecting systems is irrelevant, a generalized form of associativity. Composition-order invariance explicitly captures a common property that is implicit in any context where one can draw a figure (hiding the drawing order) of several connected systems, which appears in many scientific contexts. This abstract algebra captures settings where one is interested in the behavior of a composed system in an environment and wants to abstract away anything internal not relevant for the behavior. This may include physical systems, electronic circuits, or interacting distributed systems.

	One specific such setting, of special interest in computer science, are functional system algebras, which capture, in the most general sense, any type of system that takes inputs and produces outputs depending on the inputs, and where the output of a system can be the input to another system. The behavior of such a system is uniquely determined by the function mapping inputs to outputs. We consider several instantiations of this very general concept. In particular, we show that Kahn networks form a functional system algebra and prove their composition-order invariance.

	Moreover, we define a functional system algebra of causal systems, characterized by the property that inputs can only influence future outputs, where an abstract partial order relation captures the notion of ``later''. This system algebra is also shown to be composition-order invariant and appropriate instantiations thereof allow to model and analyze systems that depend on time.
	
	\ifelsarticle\else
		\paragraph{Keywords:} Systems, composition, abstraction, order invariance, fixed points, causality.
	\fi
\end{abstract}

\ifelsarticle
	\begin{keyword}
		Systems\sep composition\sep abstraction\sep order invariance\sep fixed points\sep causality.
	\end{keyword}
	
	\end{frontmatter}
\fi

\section{Introduction}
\subsection{Motivation}
A universal concept in many disciplines is to characterize the behavior of an object, often called a module or a system, and thereby to intentionally ignore internal aspects considered irrelevant. We will here use the term system. The purpose of a system is to be used or embedded in an environment, which can for example consist of several other systems. A system's behavior is the exact characterization of the effect a system can have when embedded in an environment. In such a consideration, anything internal, not affecting the behavior, is by definition considered irrelevant, and two systems with the same behavior are considered to be equal. Which aspects are considered irrelevant, and hence are not modeled as part of the behavior, depends strongly on the concrete investigated question.
For example, if for a software module computing a function, one is only interested in the input-output behavior, then this function characterizes the behavior of the software module, and aspects like the underlying computational model, the program, the complexity, timing guarantees, etc., are irrelevant.  In another consideration, for example, one may want to include the timing aspects as part of the observable (and hence relevant) behavior. In yet another consideration one may be interested in the memory requirements for the specific computational model, etc.

One can think of a system as connecting to the environment via an \emph{interface}. More generally, if one wants to model a system composed of several (sub-)systems, then one can consider each system to have several interfaces and that systems can be composed by connecting an interface of one system with an interface of another system. The term ``interface'' is here used in the sense of capturing the potential connection to another system; it is indeed used in some contexts to refer to the specification of the behavior of an object, such as a software module, but here we use the term in a more general and more abstract sense. Our notion of connected systems corresponds to drawing a diagram with boxes, each having several lines (interfaces), and where some interfaces of systems are connected (by lines) and some interfaces remain free, i.e., accessible by the environment; see \autoref{fig:motivation}. A composed system appears to the environment as having only the free (not connected) interfaces and its behavior is observed only via these free interfaces; the internal topology becomes irrelevant and is not part of the behavior.

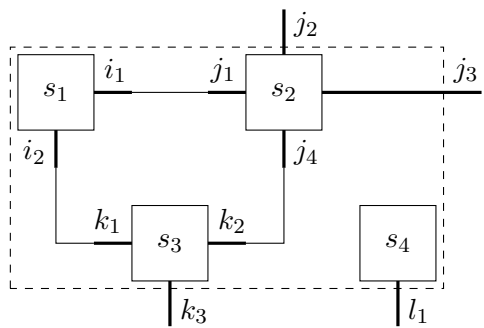
\begin{figure}[b]
	\centering
	\begin{tikzpicture}[interface/.style={draw,very thick},connection/.style={draw}]
		\draw (0, 2) rectangle node[midway] {$\abstSys{s}_1$} (1, 3);
		\draw (3, 2) rectangle node[midway] {$\abstSys{s}_2$} (4, 3);
		\draw (1.5, 0) rectangle node[midway] {$\abstSys{s}_3$} (2.5, 1);
		\draw (4.5, 0) rectangle node[midway] {$\abstSys{s}_4$} (5.5, 1);
		\draw[dashed] (-0.1, -0.1) rectangle (5.6, 3.1);
		
		\draw[interface] (1, 2.5) -- (1.5, 2.5);
		\node[above right] at (1, 2.5) {$i_1$};
		\draw[interface] (0.5, 2) -- (0.5, 1.5);
		\node[below left] at (0.5, 2) {$i_2$};
		
		\draw[interface] (3, 2.5) -- (2.5, 2.5);
		\node[above left] at (3, 2.5) {$j_1$};
		\draw[interface] (3.5, 3) -- (3.5, 3.6);
		\node[above right] at (3.5, 3.1) {$j_2$};
		\draw[interface] (4, 2.5) -- (6.1, 2.5);
		\node[above right] at (5.6, 2.5) {$j_3$};
		\draw[interface] (3.5, 2) -- (3.5, 1.5);
		\node[below right] at (3.5, 2) {$j_4$};
		
		\draw[interface] (1.5, 0.5) -- (1, 0.5);
		\node[above left] at (1.5, 0.5) {$k_1$};
		\draw[interface] (2.5, 0.5) -- (3, 0.5);
		\node[above right] at (2.5, 0.5) {$k_2$};
		\draw[interface] (2, 0) -- (2, -0.6);
		\node[below right] at (2, -0.1) {$k_3$};
		
		\draw[interface] (5, 0) -- (5, -0.6);
		\node[below right] at (5, -0.1) {$l_1$};
		
		\draw[connection] (1.5, 2.5) -- (2.5, 2.5);
		\draw[connection] (0.5, 1.5) |- (1, 0.5);
		\draw[connection] (3.5, 1.5) |- (3, 0.5);
	\end{tikzpicture}
	\caption{A system composed of the four subsystems $\abstSys{s}_1$, $\abstSys{s}_2$, $\abstSys{s}_3$, and $\abstSys{s}_4$. Interfaces are labeled to allow specifying connections, e.g., $i_1$ and $j_1$ are connected, and $\abstSys{s}_4$ is not connected to the other systems. The resulting system has interfaces $j_2$, $j_3$, $k_3$, and $l_1$.}
	\label{fig:motivation}
\end{figure}

In some applications, certain internal details of a system matter. One can then define the relevant internal aspects as being part of the behavior, to make them, by definition, visible from the outside. If too many internal details become relevant, our approach might be less suitable than directly using a model of systems that considers these internals. Indeed, a majority of existing work models systems by defining their internal operations, e.g., via states and transition functions. In many cases, however, such a detailed description of the internal operations is unnecessary and cumbersome, and our abstract approach would be beneficial. We now describe two examples for which ignoring internal details appear particularly useful.

\paragraph{Distributed systems\ifelsarticle\else.\fi}
In distributed systems, where systems are connected to other systems with which they can communicate, one is often interested in certain properties of the composed system. As an example, we present a simple impossibility proof of bit-broadcast for three parties with one dishonest party. This famous result was first proven by Lamport et al.~\cite{LaShPe82,PSL80}; the proof we present here is in the spirit of that given by Fisher et al.~\cite{FLM86}.
This proof only requires that the involved systems can be connected and rearranged as described; the communication between them and how they operate internally is irrelevant. Ignoring these internal details not only simplifies the proof but also makes the result more general, e.g., it also holds if the systems communicate via some sort of analog signals.

The goal of a bit-broadcast protocol is to allow a sender to send a bit such that all honest receivers output the same bit (consistency) and if the sender is honest, they output the bit that was sent (validity). Assume an honest sender uses the system~$\abstSys{s}_a$ for broadcasting a bit~$a\in\{0,1\}$, and honest receivers use systems~$\abstSys{r}_1$ and $\abstSys{r}_2$ that decide on a bit $b_1$, and $b_2$, respectively. If these systems implement a broadcast protocol with the required guarantees, each condition in Figures~\ref{fig:bc-correctness} to \ref{fig:bc-dishonest-r2} must hold for all systems~$\abstSys{e}$, which capture the possible behaviors of a dishonest party. Assume toward a contradiction that this is the case and consider the system in \autoref{fig:bc-proof}. We can view the system in the dotted box composed of $\abstSys{s}_0$ and $\abstSys{s}_1$ as a system~$\abstSys{e}$ in \autoref{fig:bc-dishonest-s} to obtain $b_1 = b_2$. We can also view the system in the densely dotted box composed of $\abstSys{s}_0$ and $\abstSys{r}_1$ as a system~$\abstSys{e}$ in \autoref{fig:bc-dishonest-r1} to obtain~$b_2 = 1$. Finally, the system in the dashed box can be viewed as a system~$\abstSys{e}$ in \autoref{fig:bc-dishonest-r2}, which implies $b_1 = 0$, a contradiction. Hence, there are no systems~$\abstSys{s}_0$, $\abstSys{s}_1$, $\abstSys{r}_1$, and $\abstSys{r}_2$ that satisfy these constraints.
In \autoref{sec:bcImp}, we provide a more formal proof within our theory.


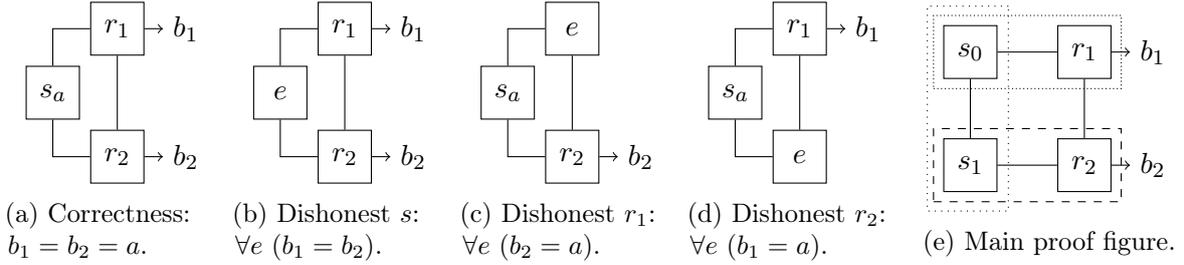
\begin{figure}[t]
	\centering
	\begin{subfigure}{0.18\textwidth}
		\centering
		\begin{tikzpicture}[sys/.style={draw,minimum size=7mm,node distance=12mm},bit/.style={node distance=9mm}]
			\node (s) [sys] {$\abstSys{s}_a$};
			\node (r1) [sys,above right of=s] {$\abstSys{r}_1$};
			\node (r2) [sys,below right of=s] {$\abstSys{r}_2$};
			\node (b1) [bit,right of=r1] {$b_1$};
			\node (b2) [bit,right of=r2] {$b_2$};
			
			\draw (s) |- (r1);
			\draw (s) |- (r2);
			\draw (r1) to (r2);
			\draw [->] (r1) to (b1);
			\draw [->] (r2) to (b2);
		\end{tikzpicture}
		\caption{Correctness:\\ $b_1 = b_2 = a$.}
		\label{fig:bc-correctness}
	\end{subfigure}
	\begin{subfigure}{0.18\textwidth}
		\centering
		\begin{tikzpicture}[sys/.style={draw,minimum size=7mm,node distance=12mm},bit/.style={node distance=9mm}]
			\node (s) [sys] {$\abstSys{e}$};
			\node (r1) [sys,above right of=s] {$\abstSys{r}_1$};
			\node (r2) [sys,below right of=s] {$\abstSys{r}_2$};
			\node (b1) [bit,right of=r1] {$b_1$};
			\node (b2) [bit,right of=r2] {$b_2$};
			
			\draw (s) |- (r1);
			\draw (s) |- (r2);
			\draw (r1) to (r2);
			\draw [->] (r1) to (b1);
			\draw [->] (r2) to (b2);
		\end{tikzpicture}
		\caption{Dishonest $\abstSys{s}$:\\$\forall \abstSys{e} \ (b_1 = b_2)$.}
		\label{fig:bc-dishonest-s}
	\end{subfigure}
	\begin{subfigure}{0.18\textwidth}
		\centering
		\begin{tikzpicture}[sys/.style={draw,minimum size=7mm,node distance=12mm},bit/.style={node distance=9mm}]
			\node (s) [sys] {$\abstSys{s}_a$};
			\node (r1) [sys,above right of=s] {$\abstSys{e}$};
			\node (r2) [sys,below right of=s] {$\abstSys{r}_2$};
			\node (b2) [bit,right of=r2] {$b_2$};
			
			\draw (s) |- (r1);
			\draw (s) |- (r2);
			\draw (r1) to (r2);
			\draw [->] (r2) to (b2);
		\end{tikzpicture}
		\caption{Dishonest $\abstSys{r}_1$: \\$\forall \abstSys{e} \ (b_2 = a)$.}
		\label{fig:bc-dishonest-r1}
	\end{subfigure}
	\begin{subfigure}{0.18\textwidth}
		\centering
		\begin{tikzpicture}[sys/.style={draw,minimum size=7mm,node distance=12mm},bit/.style={node distance=9mm}]
			\node (s) [sys] {$\abstSys{s}_a$};
			\node (r1) [sys,above right of=s] {$\abstSys{r}_1$};
			\node (r2) [sys,below right of=s] {$\abstSys{e}$};
			\node (b1) [bit,right of=r1] {$b_1$};
			
			\draw (s) |- (r1);
			\draw (s) |- (r2);
			\draw (r1) to (r2);
			\draw [->] (r1) to (b1);
		\end{tikzpicture}
		\caption{Dishonest $\abstSys{r}_2$: \\$\forall \abstSys{e} \ (b_1 = a)$.}
		\label{fig:bc-dishonest-r2}
	\end{subfigure}
	\begin{subfigure}{0.22\textwidth}
		\centering
		\begin{tikzpicture}[sys/.style={draw,minimum size=7mm,node distance=15mm},bit/.style={node distance=9mm}]
			\node (s0) [sys] {$\abstSys{s}_0$};
			\node (s1) [sys,below of=s0] {$\abstSys{s}_1$};
			\node (r1)  [sys,right of=s0] {$\abstSys{r}_1$};
			\node (r2)  [sys,below of=r1] {$\abstSys{r}_2$};
			\node (b1)  [bit,right of=r1] {$b_1$};
			\node (b2)  [bit,right of=r2] {$b_2$};
			
			\draw (s0) to (r1) to (r2) to (s1) to (s0);
			\draw [->] (r1) to (b1);
			\draw [->] (r2) to (b2);
			
			\node [rectangle,scale=1.1,draw=black,dotted,fit=(s0)(s1),xshift=-0.3mm] {};
			\node [rectangle,draw=black,densely dotted,fit=(s0)(r1)] {};
			\node [rectangle,draw=black,dashed,fit=(s1)(r2)] {};
		\end{tikzpicture}
		\caption{Main proof figure.}
		\label{fig:bc-proof}
	\end{subfigure}
	\caption{The impossibility of broadcast for three parties with one dishonest party. Figures~\ref{fig:bc-correctness} to \ref{fig:bc-dishonest-r2} correspond to the requirements for a broadcast protocol. \autoref{fig:bc-proof} can be subdivided in different ways and is used in the proof. Interface labels are omitted to increase readability.}
	\label{fig:bc}
\end{figure}

\paragraph{Cryptography\ifelsarticle\else.\fi}
Cryptographic schemes are often defined as some sort of efficient algorithms. While efficiency is of course relevant in practice, one can separate the computational aspects from the functionality to simplify the analysis. \emph{Constructive cryptography} by Maurer and Renner~\cite{MR11,Mau12} allows one to model what cryptographic protocols achieve using a system algebra that abstracts away cumbersome details. To this end, one considers so-called \emph{resource systems}, which provide a certain functionality to the parties connected to their interfaces, and \emph{converter systems}, which can be connected to resources to obtain a new resource. Typical resources with interfaces for two honest parties and an adversary are a shared secret key, which provides a randomly generated key at the interfaces for the honest parties and nothing at the interface for the adversary,\footnote{One could also view a shared secret key as having no interface for the adversary, but as defined in constructive cryptography, all resources involved in a construction have an interface for each party.} and different types of channels with different capabilities for the adversary. The goal of a cryptographic protocol is then to \emph{construct} a resource~$S$ from a resource~$R$. Such a protocol consists of a converter for each honest party, and it achieves the construction if the system obtained from $R$ by connecting the protocol converters to the interfaces of the honest parties is \emph{indistinguishable} from the system obtained from $S$ by connecting some converter system, called \emph{simulator}, to the interface of the adversary. The notion of indistinguishability can be defined in several ways leading to different types of security, e.g., as the systems being identical or via a certain class of distinguishers.

A crucial property of this construction notion is that it is \emph{composable}, i.e., if some protocol constructs a resource~$S$ from a resource~$R$ and another protocol constructs a resource~$T$ from $S$, these protocols can be composed to obtain a construction of $T$ from $R$. Turned around, one can also decompose the construction of $T$ from $R$ into two separate constructions. Since these two constructions can be analyzed independently, this approach provides modularity and simplifies the analysis of complex protocols by breaking them down into smaller parts.

An example of a construction is that of a secure channel, which leaks only the length of the sent messages to the adversary and does not allow modifications of them, from the resource consisting of a shared secret key and an authenticated channel, which leaks the sent messages to the adversary but also does not allow modifications of them, by a symmetric encryption scheme. See \autoref{fig:symEnc} for an illustration of the involved systems. To achieve this construction, the simulator must, knowing only the length of the sent messages, output bit-strings that are indistinguishable from encryptions of these messages. If the used encryption scheme is, e.g., the one-time pad, the two systems in \autoref{fig:symEnc} are identical for an appropriate simulator, i.e., they have the same input-output behavior \cite{Mau12}, which provides the strongest possible security guarantee. Note that internally, these systems are very different, but this is intentionally ignored.

\begin{figure}[t]
	\centering
	\begin{subfigure}{0.45\textwidth}
		\centering
		\begin{tikzpicture}
			\def\delta{0.5}
			\draw (-1.5, 0.5*\delta) rectangle node[midway] {$\mathsf{Key}$} (1.5, 1.25);
			\draw (-1.5, -1.25) rectangle node[midway] {$\mathsf{AuthChannel}$} (1.5, -0.5*\delta);
			\draw (-3, -1.25) rectangle node[midway] {$\mathsf{enc}$} (-1.5-\delta, 1.25);
			\draw (1.5+\delta, -1.25) rectangle node[midway] {$\mathsf{dec}$} (3, 1.25);
			\draw (-3, 0) -- (-3.5, 0);
			\draw (3, 0) -- (3.5, 0);	
			\draw (-1.5, 0.25*\delta + 0.625) -- (-1.5-\delta, 0.25*\delta + 0.625);
			\draw (1.5, 0.25*\delta + 0.625) -- (1.5+\delta, 0.25*\delta + 0.625);
			\draw (-1.5, -0.25*\delta - 0.625) -- (-1.5-\delta, -0.25*\delta - 0.625);
			\draw (1.5, -0.25*\delta - 0.625) -- (1.5+\delta, -0.25*\delta - 0.625);
			\draw (-0.2, 0.5*\delta) -- (-0.2, -0.5*\delta);
			\draw[dotted] (-0.2, -0.5*\delta) -- (-0.2, -1.25);
			\draw (-0.2, -1.25) -- (-0.2, -1.75);
			\draw (0.2, -1.25) -- (0.2, -1.75);
			\draw[dashed] (-3.1, -1.35) rectangle (3.1, 1.35);
		\end{tikzpicture}
	\end{subfigure}
	\qquad
	\begin{subfigure}{0.45\textwidth}
		\centering
		\begin{tikzpicture}
			\def\delta{0.5}
			\draw (-1.5, 0.5*\delta) rectangle node[midway] {$\mathsf{SecChannel}$} (1.5, 1.25);
			\draw (-0.5, -1.25) rectangle node[midway] {$\mathsf{sim}$} (0.5, -0.5*\delta);
			\draw (-1.5, 0.25*\delta + 0.625) -- (-2, 0.25*\delta + 0.625) -- (-2, 0) -- (-3.5, 0);
			\draw (1.5, 0.25*\delta + 0.625) -- (2, 0.25*\delta + 0.625) -- (2, 0) -- (3.5, 0);
			\draw (0, 0.5*\delta) -- (0, -0.5*\delta);
			\draw (-0.2, -1.25) -- (-0.2, -1.75);
			\draw (0.2, -1.25) -- (0.2, -1.75);
			\draw[dashed] (-3.1, -1.35) rectangle (3.1, 1.35);
		\end{tikzpicture}
	\end{subfigure}
	\caption{The systems appearing in the construction of a secure channel from an authenticated channel and a shared secret key via symmetric encryption. The construction notion requires the two systems in the dashed boxes to be indistinguishable.}
	\label{fig:symEnc}
\end{figure}
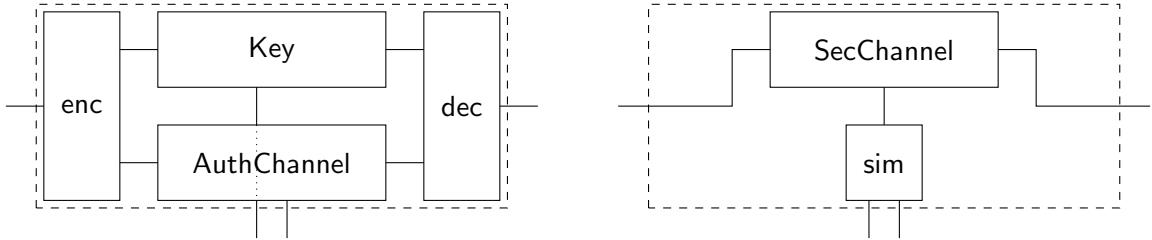

\subsection{Contributions}
We develop a theory of systems with different levels of abstraction. To achieve generality and to strive for simplicity, theorems are proved at the highest level of abstraction at which they hold. See \autoref{fig:syslevels} for an overview of the types of systems we consider. This is not meant to be a complete picture of all systems one can consider; systems we do not consider in this paper but could be treated within our theory include probabilistic systems, physical systems, circuits, etc.

\begin{figure}[htb]
	\centering
	\begin{tikzpicture}[level 1/.style={sibling distance=3.2cm},
	                    level 2/.style={sibling distance=3.6cm},
	                    level distance=1.6cm,
	                    edge from parent path={(\tikzparentnode.south) -- ++(0, -0.25cm) -| (\tikzchildnode.north)},
	                    inp/.style={rectangle, draw, align=center},          
	                    ninp/.style={rectangle, draw, dashed, align=center}] 
		\node[inp]{systems\\(\autoref{sec:abstSys})}
			child{ node[inp]{functional systems\\(\autoref{sec:funSys})}
				child{ node[inp]{monotone systems\\(\autoref{sec:monotoneSys})} }
				child{ node[inp]{continuous systems\\(\autoref{sec:continuousSys})}
					child{ node[inp]{Kahn networks\\(\autoref{sec:Kahn})} } }
				child{ node[inp]{causal systems\\(\autoref{sec:causalSys})} } };
	\end{tikzpicture}
	\caption{The hierarchy of the systems treated in this paper, where systems at lower levels are special cases of their parents.}
	\label{fig:syslevels}
\end{figure}
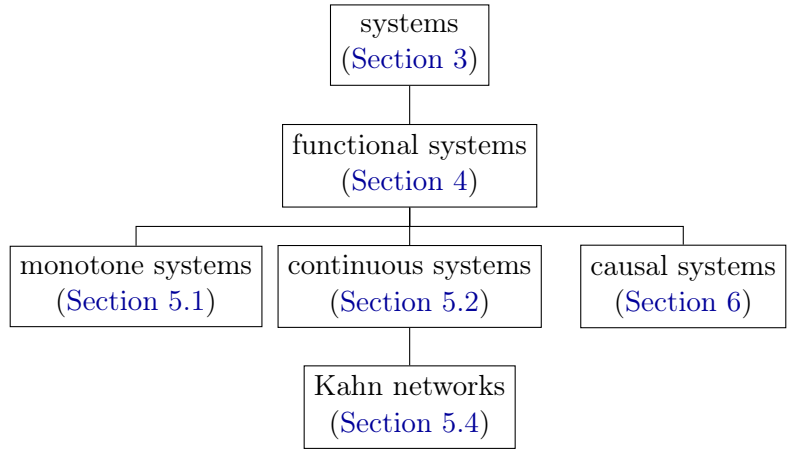

\paragraph{Abstract system algebras and composition-order invariance\ifelsarticle\else.\fi}
At the highest level of abstraction, we do not specify what systems are, but only postulate two operations~$\parac$ and $\connectFunc$, depicted in \autoref{fig:composition}; the former for combining two systems in parallel as in \autoref{fig:parallel-composition} and the latter for connecting interfaces as in \autoref{fig:connection-composed}. Using these operations one can build ``graphs'' such as those depicted in \autoref{fig:motivation}, by first taking the systems in parallel and then connecting the interfaces. We call a set of systems together with these operations, and the specification which interfaces a system has and which of them can be connected, a \emph{system algebra}, see \autoref{sec:abstSys}.

\begin{figure}[htb]
	\centering
	\begin{subfigure}[b]{0.42\textwidth}
		\centering
		\begin{tikzpicture}[system/.style={rectangle,draw,minimum width=16mm,minimum height=1cm}]
			\draw node (sys1) [system] {$\abstSys{s}_1$};
			\draw node (sys3) [system,right of=sys1,node distance=20mm] {$\abstSys{s}_2$};
	
			\draw ($(sys1.north) - (6mm,0)$) -- ($(sys1.north) - (6mm,-3mm)$);
			\draw (sys1.north) -- ($(sys1.north) + (0,3mm)$);
			\draw ($(sys1.north) + (6mm,0)$) -- ($(sys1.north) + (6mm,3mm)$);
			\draw ($(sys1.south) - (3mm,0)$) -- ($(sys1.south) - (3mm,3mm)$);
			\draw ($(sys1.south) + (3mm,0)$) -- ($(sys1.south) + (3mm,-3mm)$);

			\draw ($(sys1.north) - (6mm,-5mm)$) node {$1$};
			\draw ($(sys1.north) + (0,5mm)$) node {$2$};
			\draw ($(sys1.north) + (6mm,5mm)$) node {$3$};
			\draw ($(sys1.south) - (3mm,5mm)$) node {$4$};
			\draw ($(sys1.south) + (3mm,-5mm)$) node {$5$};
	
			\draw ($(sys3.north) - (3mm,0)$) -- ($(sys3.north) - (3mm,-3mm)$);
			\draw ($(sys3.north) + (3mm,0)$) -- ($(sys3.north) + (3mm,3mm)$);
			\draw ($(sys3.south) - (6mm,0)$) -- ($(sys3.south) - (6mm,3mm)$);
			\draw (sys3.south) -- ($(sys3.south) + (0,-3mm)$);
			\draw ($(sys3.south) + (6mm,0)$) -- ($(sys3.south) + (6mm,-3mm)$);

			\draw ($(sys3.north) + (-3mm,5mm)$) node {$A$};
			\draw ($(sys3.north) + (3mm,5mm)$) node {$B$};
			\draw ($(sys3.south) - (6mm,5mm)$) node {$C$};
			\draw ($(sys3.south) - (0cm,5mm)$) node {$D$};
			\draw ($(sys3.south) + (6mm,-5mm)$) node {$E$};
	
			\draw [dashed] (-0.9cm,-6mm) rectangle (2.9cm,6mm);
		\end{tikzpicture}
		\caption{The system~$\abstSys{s}_1 \parac \abstSys{s}_2$ obtained as the parallel composition of $\abstSys{s}_1$ and $\abstSys{s}_2$.}
		\label{fig:parallel-composition}
	\end{subfigure}
	\qquad
	\begin{subfigure}[b]{0.42\textwidth}
		\centering
		\begin{tikzpicture}[system/.style={rectangle,draw,minimum width=3cm,minimum height=1cm}]
			\path[use as bounding box] (-43pt, -36.02286pt) rectangle (43pt, 36.02286pt);
			
			\draw node (sys1) [system] {$\abstSys{s}_1$};
	
			\draw ($(sys1.north) - (1cm,0)$) -- ($(sys1.north) - (1cm,-3mm)$);
			\draw (sys1.north) -- ($(sys1.north) + (0,3mm)$);
			\draw ($(sys1.north) + (1cm,0)$) -- ($(sys1.north) + (1cm,3mm)$);
			\draw ($(sys1.south) - (5mm,0)$) -- ($(sys1.south) - (5mm,3mm)$);
			\draw ($(sys1.south) + (5mm,0)$) -- ($(sys1.south) + (5mm,-3mm)$);
	
			\draw ($(sys1.north) - (1cm,-5mm)$) node {$1$};
			\draw ($(sys1.north) + (0,5mm)$) node {$2$};
			\draw ($(sys1.north) + (13mm,5mm)$) node {$3$};
			\draw ($(sys1.south) - (8mm,5mm)$) node {$4$};
			\draw ($(sys1.south) + (5mm,-5mm)$) node {$5$};

			\draw [dashed] ($(sys1.south) + (-5mm,-3mm)$) .. controls +(0,-2) and +(0,2) .. ($(sys1.north) + (1cm,3mm)$);
		\end{tikzpicture}
		\caption{The system~$\connect{3}{4}(\abstSys{s}_1)$ obtained by connecting the interfaces $3$ and $4$ of system $\abstSys{s}_1$.}
		\label{fig:connection-composed}
	\end{subfigure}
	\caption{Composition operations in a system algebra.}
	\label{fig:composition}
\end{figure}
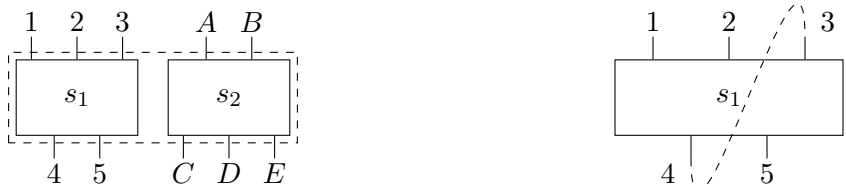

A natural property of such an algebra that can be specified at this level of abstraction is \emph{composition-order invariance}, that is, a composed system is completely described by its ``graph'', and the order in which the operations are applied to build the graph does not matter. This property is not only very natural, but also necessary for many applications. For example, the impossibility proof for broadcast sketched above relies on it since otherwise, rearranging and subdividing systems as in \autoref{fig:bc-proof} would not be allowed. To illustrate this, we formalize the systems occurring in that figure in an abstract system algebra and show how composition-order invariance appears in the proof in \autoref{sec:bcImp}. Composition-order invariance was also used by Maurer and Renner to prove the composition theorem of constructive cryptography \cite{MR11}. While appearing natural and innocent, examples throughout our paper indicate that composition-order invariance is actually a nontrivial property and requires a proof.

\paragraph{Functional system algebras\ifelsarticle\else.\fi}
An important type of system in computer science takes inputs and produces outputs depending on these inputs. The behavior of such a system can be fully described by a function mapping inputs to outputs. We define a type of system algebra, called \emph{functional system algebra}, in \autoref{sec:funSysDefs} where the systems have input interfaces and output interfaces and correspond to such functions. Connecting an input interface to an output interface is understood as setting the value input at the former to be equal to the value output at the latter. Determining the resulting system thus involves finding a fixed point of the underlying function; if multiple fixed points exist, the system algebra has to specify which one to select. By appropriately choosing the domains and functions, various types of systems can be modeled in this way, including interactive systems that take many inputs in different rounds and systems that depend on time.

We prove several basic results at this level of abstraction, i.e., without specifying which functions are considered in \autoref{sec:funSysProps}. For example, we show that not all functional system algebras are composition-order invariant, but if there are always unique fixed points for interface connections and connections can be reordered, composition-order invariance holds.

While this paper focuses on the deterministic case, we point out that functional systems can be used as a basis to model probabilistic systems. For example, one can consider systems that take randomness as an explicit input at a dedicated interface, one can include random variables in the domains of the functions, or one can consider probability distributions over deterministic systems. Systematically understanding probabilistic systems in this way is future work.

\paragraph{Instantiations of functional system algebras\ifelsarticle\else.\fi}
To instantiate the concept of a functional system algebra, we need to specify the domains of the functions and the set of functions to consider. To be able to define the interface connection, we have to ensure that all functions have the required fixed points. One way to guarantee fixed points that is well-studied in mathematics, especially domain theory, is to equip the domain with a partial order such that all chains have a supremum and to consider \emph{monotone functions}. A related concept are \emph{continuous functions}, which are defined as preserving these suprema. In both cases the functions have a least fixed point. While continuity is a stronger requirement than monotonicity, a slightly weaker assumption on the domains is sufficient to guarantee least fixed points. We show in \autoref{sec:monotoneSys} and \autoref{sec:continuousSys} that if least fixed points are chosen for interface connections, monotone and continuous functions form a functional system algebra, respectively. Under the additional assumption that nonempty chains have an infimum, we show in \autoref{sec:monoCompInv} that these system algebras are composition-order invariant.

Monotone and continuous functions are not only a mathematical convenience to obtain fixed points, but they also encompass as a special case an intuitive and useful model known as \emph{Kahn networks}~\cite{Kahn74} (or \emph{Kahn process networks}; in Kahn's paper only defined for continuous functions, but monotone functions can be used as well), which we consider in \autoref{sec:Kahn}. Kahn networks have been developed to provide a semantics for parallel programming languages~\cite{Kahn74}, but they have also been used in other contexts, including embedded systems~\cite{SZTKD04} and signal processing~\cite{LP95}. The domains of the functions there consist of sequences of values and the partial order is defined to be the initial segment (or prefix) relation. An interpretation of a function is that it maps input histories to output histories. Such functions therefore correspond to interactive systems that take one input after the other at each input interface and produce outputs depending on the inputs. Monotonicity means that additional inputs can only yield additional outputs, an output cannot be ``taken back''. Even though it appears to be a very natural question whether the order in which interfaces of Kahn networks are connected matters, we are not aware of any result in this direction. Our proof of composition-order invariance, which indeed turned out to be nontrivial, therefore also provides new insights into this well-studied model.

In \autoref{sec:causalSys}, we finally provide an instantiation of functional system algebras consisting of \emph{causal systems} in which inputs to the system can only influence ``later'' outputs. We formalize this by considering a partially ordered set (where ``less than'' can be interpreted as ``before'') and letting the domains of the functions consist of subsets thereof. As an example, consider the partially ordered set containing pairs~$(\val, \tim)$, which can be interpreted as the value~$\val$ being input (or output) at time~$\tim$, where the order is naturally defined as the one induced by the second component. The domains are then sets of such pairs. This allows us, as for Kahn networks, to model systems that take several inputs at each input interface and produce several outputs. We define causality for such systems and prove that the corresponding functions have unique fixed points. Therefore, we obtain a composition-order invariant functional system algebra. This system algebra can in particular be used to model and analyze systems that depend on time, such as clocks and channels with certain delays.

\subsection{Related Work}\label{sec:related}
There exists a large body of work on modeling certain types of systems mathematically. Some models can be understood as special cases in our theory, but also very general theories and models that do not fit in our theory exist. The work we are aware of, however, only captures partial aspects of our theory. We now describe some of this work and compare it to ours.

\paragraph{Abstract models\ifelsarticle\else.\fi}
The abstract concept of a system algebra in which complex systems are built from components has been informally described in the context of cryptography by Maurer and Renner~\cite{MR11}. They also, again informally, introduced composition-order independence, which corresponds to our composition-order invariance. We provide in this paper a formalization that matches their requirements.

Hardy has developed an abstract theory, in which composition-order invariance (there called order independence) plays an important role \cite{Har13}. That work, however, focuses on physical systems. Lee and Sangiovanni-Vincentelli~\cite{LeeSan98} also introduce an abstract system model, but it is specific to systems that consider some form of time and does not follow an algebraic approach.

Closely related to our abstract system algebras are block algebras as introduced by de Alfaro and Henzinger in the context of interface theories~\cite{dAH01a}. Our systems and interfaces are there called blocks and ports, respectively, and they also define parallel composition and port connection operations. A major difference compared to our system algebras is that port connections do not hide the connected ports. Moreover, while de Alfaro and Henzinger require the parallel composition to be commutative and associative, they do not define a notion that corresponds to our composition-order invariance, i.e., their port connections not necessarily commute with other port connections and parallel composition.

Process algebras allow the modeling of communicating concurrent processes \cite{BBR09}, which correspond to our systems. Examples of process algebras include Milner's Calculus of Communicating Systems (CCS) \cite{Mil80} and Hoare's Communicating Sequential Processes (CSP) \cite{Hoa78}. These theories are similar to ours in that they consider certain operations, including parallel composition, on processes and postulate axioms they need to satisfy. These axioms also include guarantees similar to our composition-order invariance. In contrast to our theory, CCS does not have an explicit operation for connecting interfaces, there called ports. Rather, ports with matching labels are implicitly connected when systems are taken in parallel. As in the work by de Alfaro and Henzinger~\cite{dAH01a} discussed above, ports are not hidden by this connection; hiding them is an explicit operation in CCS. In CSP, processes communicate via special input and output commands. As opposed to our abstract level, these process algebras differentiate between inputs and outputs and allow to specify the behavior of processes. Thus, with respect to the level of abstraction, they belong somewhere between our abstract and functional levels, or even below.

Milner's Flowgraphs~\cite{Mil79} model composed systems as generalized graphs, in which the nodes correspond to subsystems. This is considerably different from our theory since we abstract away all internal details of a system, in particular the subsystems that compose a system.


\paragraph{Functional models\ifelsarticle\else.\fi}
A line of work on system models based on functions has been initiated by Kahn's seminal paper~\cite{Kahn74} on networks of autonomous computing systems. These systems may be sensitive to the order in which messages arrive on one interface, but they are oblivious to the relative order of incoming messages on different interfaces. He shows that least fixed points exist, based on earlier work by Tarski~\cite{Tar55}, Scott~\cite{Scott70}, and Milner~\cite{Milner73},\footnote{Other sources attribute the original theorem to Kleene or Knaster, see~\cite{LaNgSo82}.} and therefore connecting systems is well-defined.
Tackmann~\cite{Tac14} considered the case where systems are fully oblivious to the order of their incoming messages, which can be seen as a special case of Kahn networks where each interface contains at most one message.
Micciancio and Tessaro~\cite{MicTes13} start from the same type as Kahn but extend it to tolerate certain types of order-dependent behavior within complex systems.

\paragraph{Timed models\ifelsarticle\else.\fi}
Several works have defined \emph{causal} system models.
Lee and Sangiovanni-Vincentelli~\cite{LeeSan98} define \emph{delta causality}, which intuitively requires that each output must be provoked by an input that occurred at least a $\delta$-difference earlier.
They show that fixed points exist, based on Banach's theorem.
Cataldo et al.~\cite{CLLMZ06} generalize this to a notion of ``superdense'' time where multiple events may occur simultaneously.
Portmann et al.~\cite{PRMMT16}, in the quantum scenario, describe a type of strict causality based on a \emph{causality function} that can be seen as a generalization of delta causality.
Naundorf~\cite{Nau00} considers \emph{strict} causality without any minimal time distance, and proves that fixed points still exist. Matsikoudis and Lee~\cite{ML15} then show a \emph{constructive} fixed point theorem for the same notion, which they refer to as \emph{strictly contracting}. They show that it is implied by a more natural notion of (strict) causality where outputs can be influenced only by inputs that occur strictly earlier, under the assumption that the ordering of inputs is well-founded\footnote{A partial order on a set~$T$ is \emph{well-founded} if every nonempty subset of~$T$ has one or more minimal elements.}.
We show in \autoref{app:equivalence-causal} that the strict causality notion of~\cite{ML15} is essentially equivalent to the definition we introduce in this work.

Except for the work of Portmann et al., none of the previously mentioned definitions of causal functions explicitly capture systems with multiple interfaces as the work of Kahn~\cite{Kahn74} or our work. Also, the mentioned papers investigating causal functions do not define how to connect systems such that one obtains a system of the same type, and therefore they do not provide a system algebra as we do.
The model by Portmann et al.~\cite{PRMMT16} captures quantum information-processing systems and can be seen as a generalization of our causal systems. Restricting that model to classical, deterministic inputs and outputs yields, however, a more complex and less general model than our causal systems. For example, the causality definition in that paper is more restrictive and in contrast to our causal systems, the systems there are not allowed to produce infinitely many outputs in finite time.

Partial orders have also been used to model causal or temporal dependencies, for example in Pratt's theory of partially ordered multisets (pomsets) \cite{Pra86,Pra85}. As in our causal systems, there is a partial order on the possible events. In contrast to our causal systems, however, systems in that theory are not necessarily functions and can exhibit nondeterminism. Our model is therefore more specific, which allows a more accessible presentation.

\paragraph{Stateful models\ifelsarticle\else.\fi}
Several models of systems have been proposed that model the systems as objects that explicitly contain state. I/O automata initially discussed by Lynch and Tuttle~\cite{LynTut89} and interface automata by de Alfaro and Henzinger~\cite{dAH01} enhance stateful automata by interactive communication.
Timed automata by Alur and Dill~\cite{AD94} and timed I/O automata by Kaynar et al.~\cite{KLSV10} extend them to include a notion of time as our causal systems.
Interactive Turing machines basically equip Turing machines with additional tapes that they share with other machines and have been used widely in (complexity-theoretic) cryptography \cite{Goldre01,Can01}.
All these models are substantially different from ours since we want to hide all internal details of systems, including their state.

\section{Preliminaries} \label{sec:preliminaries}
\subsection{Functions and Notation for Sets and Tuples}
A \emph{function} $f \colon X \to Y$ is a subset of $X \times Y$ such that for every $x \in X$, there is exactly one $y \in Y$ such that $(x,y) \in f$, where we will usually write $f(x) = y$ instead of $(x,y) \in f$. For two sets~$X$ and $Y$, the set of all functions $X \to Y$ is denoted by $Y^X$. A \emph{partial function} $f \colon X \to Y$ is a function $X' \to Y$ for some $X' \subseteq X$. For a subset~$S \subseteq X$, $f(S) \coloneqq \{f(s) \mid s \in S\}$ denotes the \emph{image of $S$ under $f$}. For $X' \subseteq X$, we define the \emph{restriction of $f$ to $X'$} as $f|_{X'} \coloneqq \{(x,y) \in f \mid x \in X'\} \in Y^{X'}$. Note that for $z \notin X$ and $y \in Y$, we have $f \cup \{(z, y)\} \in Y^{X \cup \{z\}}$. An element in $Y^X$ can equivalently be interpreted as a tuple of elements in $Y$ indexed by elements in $X$. In case we interpret a function as a tuple, we usually use a boldface symbol to denote it. For a tuple~$\tuple{x} \in X^I$ with $\tuple{x}(i) = x_i$ for all $i \in I$, we also write $\tuple{x} = (x_i)_{i \in I}$ and if $I = \{i_1, \ldots, i_n\}$, we write $\tuple{x} = (x_{i_1}, \ldots, x_{i_n})$.
The \emph{symmetric difference} of two sets $X$ and $Y$ is defined as
\begin{equation*}
	X \symdif Y \coloneqq (X \setminus Y) \cup (Y \setminus X).
\end{equation*}
Finally, we denote the \emph{power set} of a set~$X$ by $\mathcal{P}(X) \coloneqq \{S \mid S \subseteq X\}$.

\subsection{Order Relations}
We first recall some basic definitions about relations.
\begin{definition}
	Let $X$ be a set. A \emph{(binary) relation on $X$} is a subset of $X \times X$. We write $x \mathrel{R} y$ for $(x, y) \in R$. A relation $R \subseteq X \times X$ is called \emph{reflexive} if $x \mathrel{R} x$ for all $x \in X$. It is called \emph{symmetric} if $x \mathrel{R} y \ \Longrightarrow y \mathrel{R} x$ for all $x,y \in X$ and \emph{antisymmetric} if $(x \mathrel{R} y \ \wedge \ y \mathrel{R} x) \ \Longrightarrow \ x = y$. A relation $R \subseteq X \times X$ is called \emph{transitive} if $(x \mathrel{R} y \ \wedge \ y \mathrel{R} z) \ \Longrightarrow  \ x \mathrel{R} z$ for all $x,y,z \in X$.
\end{definition}

For sets $X$ and $I$ and a binary relation~$\mathrel{R}$ on $X$, we define the relation~$\mathrel{R}$ on $X^I$ as the componentwise relation, i.e., for $\tuple{x}, \tuple{y} \in X^I$,
\begin{equation*}
	\tuple{x} \mathrel{R} \tuple{y} \ \Longleftrightarrow \ \forall i \in I \ \bigl(\tuple{x}(i) \mathrel{R} \tuple{y}(i) \bigr).
\end{equation*}
\begin{definition}
	A \emph{partial order} on $X$ is a binary relation on~$X$ that is reflexive, antisymmetric, and transitive. A \emph{partially ordered set (poset)}~$(X, \preceq)$ is a set~$X$ together with a partial order~$\preceq$ on~$X$.
\end{definition}

We will typically denote partial orders by $\leq$, $\preceq$, or $\sqsubseteq$, and define the relation $<$ by $x < y \ :\Longleftrightarrow \ x \leq y \ \wedge \ x \neq y$, and analogously $\prec$ and $\sqsubset$.

\begin{definition}
	Let $(X, \preceq)$ be a poset. Two elements $x, y \in X$ are \emph{comparable} if $x \preceq y$ or $y \preceq x$, and \emph{incomparable} otherwise. If all $x,y \in X$ are comparable, $X$ is \emph{totally ordered}. A totally ordered subset of a poset is called a \emph{chain}.
\end{definition}

\begin{definition}
	Let $(X, \preceq)$ be a poset. An element $x \in X$ is the \emph{least element} of~$X$ if $x \preceq y$ for all $y \in X$. Similarly, $x \in X$ is the \emph{greatest element} of~$X$ if $y \preceq x$ for all $y \in X$. The least element and greatest element of $X$ are denoted by $\min X$ and $\max X$, respectively. An element~$x \in X$ is a \emph{minimal element} of~$X$ if there is no $y \in X$ with $y \prec x$, and $x$ is a \emph{maximal element} if there is no $y \in X$, $y \succ x$. For a subset $S \subseteq X$, $x \in X$ is a \emph{lower bound} of~$S$ if $x \preceq s$ for all $s \in S$ and an \emph{upper bound} of~$S$ if $x \succeq s$ for all $s \in S$. If the set of lower bounds has a greatest element, it is called the \emph{infimum} of~$S$, denoted $\inf S$; the \emph{supremum} of~$S$, denoted~$\sup S$, is the least upper bound of~$S$.
\end{definition}

\begin{definition}
	A poset~$(X, \preceq)$ is \emph{well-ordered} if every nonempty subset of $X$ has a least element.
\end{definition}
Note that every well-ordered poset~$(X, \preceq)$ is totally ordered because $\{x, y\}$ has a least element.

\begin{definition}
	Let $(X, \leq)$ and $(Y, \preceq)$ be posets. An \emph{order isomorphism} is a bijection~$\psi \colon X \to Y$ such that $\psi(x_1) \preceq \psi(x_2) \Longleftrightarrow x_1 \leq x_2$ for all $x_1, x_2 \in X$. The posets~$(X, \leq)$ and $(Y, \preceq)$ are called \emph{order isomorphic} if such order isomorphism exists.
\end{definition}

\subsection{Ordinals and Transfinite Induction}
We briefly recall some basics of set theory, following Halbeisen \cite{Hal12} and Jech \cite{Jec03}. A \emph{class} is a collection of sets. More formally, a class~$C$ corresponds to a logical formula and we write $x \in C$ if $x$ satisfies that formula. Every set is a class but not all classes are sets; for example, the class of all sets and the class of all ordinals are not sets. A class that is not a set is called a \emph{proper class}.
\begin{definition}
	An \emph{ordinal} is a set~$\alpha$ such that
	 $\forall x \in \alpha \ (x \subseteq \alpha)$,
	 $\forall x_1, x_2 \in \alpha \ (x_1 \in x_2 \ \vee \ x_1 = x_2 \ \vee \ x_2 \in x_1)$, and
	 $\forall S \subseteq \alpha, S \neq \emptyset \ \exists x \in S \ \forall y \in x \ (y \notin S)$.
\end{definition}

For ordinals~$\alpha, \beta$ with $\alpha \in \beta$, we write $\alpha < \beta$. For all ordinals~$\alpha \neq \beta$, we have either $\alpha < \beta$ or $\beta < \alpha$ (but not both) and $\alpha \nless \alpha$. It can be shown that every nonempty class of ordinals has a least element (according to the relation~$<$) \cite[Theorem~3.12]{Hal12}. For an ordinal~$\alpha$, we define
\begin{equation*}
	\alpha + 1 \coloneqq \alpha \cup \{ \alpha \}.
\end{equation*}
We have that $\alpha + 1$ is the least ordinal greater than~$\alpha$ \cite[Corollary~3.13]{Hal12}. An ordinal~$\alpha$ is called \emph{successor ordinal} if $\alpha = \beta + 1$ for some ordinal~$\beta$. A \emph{limit ordinal} is an ordinal that is not a successor ordinal.

Every well-ordered set~$X$ is order isomorphic to exactly one ordinal~\cite[Theorem~2.12]{Jec03}. This ordinal is called the \emph{order type} of~$X$.

We define the natural numbers as $0 \coloneqq \emptyset$, $1 \coloneqq 0 + 1 = \{ 0 \}$, $2 \coloneqq 1 + 1 = \{ 0, 1 \}$, and so on. That is, a number~$n \in \N$ is the set of all numbers less than~$n$. The set of natural numbers~$\N$ is also an ordinal, denoted by $\omega$. Note that $0$ and $\omega = \N$ are limit ordinals and all nonzero natural numbers are successor ordinals. A method for proving a statement about all ordinals is via the transfinite induction theorem \cite[Theorem~2.14]{Jec03}.
\begin{theorem}[Transfinite Induction]\label{thm:transfInd}
	Let $C$ be a class of ordinals such that
	\begin{enumerate}[label=(\roman*), itemsep=0pt]
		\item $0 \in C$,
		\item if $\alpha \in C$, then $\alpha + 1 \in C$, and
		\item if $\alpha$ is a nonzero limit ordinal and $\beta \in C$ for all $\beta < \alpha$, then $\alpha \in C$.
	\end{enumerate}
	Then, $C$ is the class of all ordinals.
\end{theorem}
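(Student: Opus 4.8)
The plan is to argue by contradiction via a \emph{minimal counterexample}, exploiting the fact recalled just above the statement that every nonempty class of ordinals has a least element with respect to~$<$ \cite[Theorem~3.12]{Hal12}. Suppose, toward a contradiction, that $C$ is not the class of all ordinals. Then the collection $D$ of all ordinals that are \emph{not} in~$C$ is a nonempty class, so it has a least element~$\alpha$. By minimality of $\alpha$ in $D$, every ordinal $\beta < \alpha$ lies in~$C$; I would record this observation explicitly, since all three cases below rely on it.

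Next I would split into the three mutually exclusive and exhaustive cases corresponding to the three hypotheses, matching the trichotomy of ordinals into zero, successors, and nonzero limits. If $\alpha = 0$, then hypothesis~(i) gives $0 \in C$, contradicting $\alpha \in D$. If $\alpha = \beta + 1$ is a successor ordinal, then $\beta \in \alpha$ (since $\alpha = \beta \cup \{\beta\}$), hence $\beta < \alpha$, so $\beta \in C$ by the minimality observation; applying hypothesis~(ii) then yields $\alpha = \beta + 1 \in C$, again a contradiction. If $\alpha$ is a nonzero limit ordinal, then the minimality observation says precisely that $\beta \in C$ for all $\beta < \alpha$, and hypothesis~(iii) gives $\alpha \in C$, a final contradiction. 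Since every ordinal falls into exactly one of these three cases, $D$ must be empty, i.e., $C$ is the class of all ordinals.

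The step I expect to require the most care is not any single case but the legitimacy of choosing a least element of $D$: in general $D$ is a \emph{proper} class, so the ordinary well-ordering principle for sets does not apply, and one genuinely needs the class-level least-element principle cited above. The remaining points are bookkeeping: verifying that the three cases are exhaustive (using that $0$ together with the nonzero limit ordinals are exactly the non-successor ordinals, while every other ordinal is a successor), and, in the successor case, confirming $\beta < \alpha$ from $\alpha = \beta \cup \{\beta\}$ so that the minimality observation can be invoked. No transfinite recursion or explicit construction is needed; the whole argument is a single appeal to well-foundedness followed by the case split.
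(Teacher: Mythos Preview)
The paper does not actually prove this theorem; it is stated without proof and attributed to Jech~\cite[Theorem~2.14]{Jec03}, so there is no argument in the paper to compare against. Your proof is the standard and correct one: the well-foundedness of~$<$ on the class of ordinals (which the paper explicitly records just before the statement) gives a least counterexample, and the three hypotheses cover the exhaustive trichotomy $\alpha = 0$, $\alpha$ a successor, $\alpha$ a nonzero limit. Your remarks about the subtlety of applying the least-element principle to a proper class, and about verifying $\beta < \beta + 1$, are appropriate and the argument goes through as written.
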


The following lemma will later be useful.
\begin{lemma}\label{lem:chainFixHartogs}
	Let $(X, \preceq)$ be a poset and assume that for every ordinal~$\alpha$, there is an $x_{\alpha} \in X$ such that $x_{\beta} \preceq x_{\alpha}$ for all $\beta \leq \alpha$.\footnote{More formally, we consider a \emph{class function}~$F$ from the class of all ordinals to~$X$, and write $x_\alpha$ for $F(\alpha)$.} Then, there exists an ordinal $\hat{\alpha}$ such that $x_{\hat{\alpha}} = x_{\hat{\alpha}+1}$.
\end{lemma}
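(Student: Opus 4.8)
The plan is to argue by contradiction and to exploit the fact that the ordinals form a proper class while $X$ is merely a set; this is exactly the phenomenon the name of the lemma alludes to. Suppose, toward a contradiction, that $x_{\alpha} \neq x_{\alpha+1}$ for \emph{every} ordinal~$\alpha$. The hypothesis, applied to the ordinal $\alpha+1$ with $\beta = \alpha \leq \alpha+1$, already gives $x_{\alpha} \preceq x_{\alpha+1}$, so our assumption upgrades this to $x_{\alpha} \prec x_{\alpha+1}$ for all~$\alpha$.

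The first real step is to promote this one-step strictness to full injectivity of the class function $F \colon \alpha \mapsto x_{\alpha}$. I claim $x_{\beta} \neq x_{\alpha}$ whenever $\beta < \alpha$. Indeed, $\beta < \alpha$ forces $\beta + 1 \leq \alpha$, since $\beta + 1$ is the least ordinal greater than~$\beta$; hence the hypothesis yields $x_{\beta} \preceq x_{\beta+1} \preceq x_{\alpha}$. If we had $x_{\beta} = x_{\alpha}$, then from $x_{\beta+1} \preceq x_{\alpha} = x_{\beta}$ together with $x_{\beta} \preceq x_{\beta+1}$, antisymmetry of~$\preceq$ would give $x_{\beta} = x_{\beta+1}$, contradicting one-step strictness. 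Thus $F$ is injective on the entire class of ordinals.

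The final and genuinely delicate step is to derive a contradiction from this injectivity. Inverting $F$ on its image gives a class function from a subset of the set~$X$ onto the class of all ordinals; by the axiom of replacement its image—the class of all ordinals—would then be a set, contradicting that the ordinals form a proper class. This is precisely Hartogs' theorem in disguise, and it is the one point where care is needed: everything preceding it is elementary order theory, but here one must treat the transfinite sequence as a genuine class function (as the footnote anticipates) and invoke replacement rather than reasoning as if $(x_\alpha)$ were an ordinary set-indexed family. Concretely, I would phrase this by fixing an ordinal~$\kappa$ admitting no injection into~$X$ (Hartogs' ordinal of~$X$); then the restriction of $F$ to~$\kappa$ cannot be injective, so there are $\beta < \gamma < \kappa$ with $x_{\beta} = x_{\gamma}$, whence $x_{\beta} \preceq x_{\beta+1} \preceq x_{\gamma} = x_{\beta}$ and antisymmetry give $x_{\beta} = x_{\beta+1}$. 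Taking $\hat{\alpha} = \beta$ then establishes the claim directly, without even passing through the contrapositive.
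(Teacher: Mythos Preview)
Your proposal is correct and ultimately takes the same approach as the paper: invoke Hartogs' ordinal $\eta(X)$ to obtain $\beta < \gamma$ with $x_{\beta} = x_{\gamma}$, then sandwich $x_{\beta} \preceq x_{\beta+1} \preceq x_{\gamma} = x_{\beta}$ and use antisymmetry. The initial contradiction framing and the separate injectivity claim are unnecessary detours---your own final paragraph is essentially the paper's entire proof, and you could have started there.
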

\begin{proof}
	A result by Hartogs implies that for any set~$S$, there exists an ordinal~$\eta(S)$ such that there is no injective function~$\eta(S) \to S$ \cite[Lemma~7.1]{Joh87}. Hence, there exist ordinals~$\hat{\alpha} < \beta$ such that $x_{\hat{\alpha}} = x_{\beta}$, since otherwise the function $\eta(X) \to X, \gamma \mapsto x_{\gamma}$ would be injective. Since $\hat{\alpha} \leq \hat{\alpha}+1 \leq \beta$, we have $x_{\hat{\alpha}} \preceq x_{\hat{\alpha} + 1} \preceq x_{\beta}$. Thus, $x_{\hat{\alpha}} = x_{\beta}$ implies that $x_{\hat{\alpha}} = x_{\hat{\alpha}+1}$.
\end{proof}

\subsection{Complete Posets and Fixed Points of Monotone and Continuous Functions}
A natural requirement for functions between posets is that they preserve order. Order-preserving functions are also called monotone and are defined below.
\begin{definition}
	Let $(X, \leq)$ and $(Y, \preceq)$ be posets. A function~$f \colon X \to Y$ is \emph{monotone} if
	\begin{equation*}
		\forall x_1, x_2 \in X \ \bigl( x_1 \leq x_2 \ \rightarrow \ f(x_1) \preceq f(x_2) \bigr).
	\end{equation*}
\end{definition}

Note that a monotone bijection is not necessarily an order isomorphism: For $X = \{x_0, x_1\}$ with incomparable $x_0$ and $x_1$ and $Y = \{0,1\}$ with $0 \preceq 1$, the bijection $f \colon X \to Y, x_i \mapsto i$ is trivially monotone but not an order isomorphism.

\begin{definition}
	Let $X$ be a set and $f \colon X \to X$ be a function. Then, $x \in X$ is called a \emph{fixed point} of $f$ if $f(x) = x$.
\end{definition}

\begin{definition}
	A \emph{complete partially ordered set (CPO)} is a poset in which every chain has a supremum.
\end{definition}

Note that the empty set is a chain and every element is an upper bound of~$\emptyset$. Therefore, a CPO contains a least element.

\begin{theorem}[{\cite[Theorem~2.5]{Esi09}}]\label{thm:monotoneFP}
	Let $(X, \preceq)$ be a CPO and $f \colon X \to X$ be monotone. Then, $f$ has a least fixed point, which equals~$x_{\hat{\alpha}}$ for some ordinal~$\hat{\alpha}$, where $x_0 = \min(X)$, $x_{\alpha+1} = f(x_{\alpha})$ for any ordinal $\alpha$, and for nonzero limit ordinals $\alpha$, $x_{\alpha} = \sup \{ x_{\beta} \mid \beta < \alpha\}$. We further have $x_{\alpha} \leq x_{\beta}$ for $\alpha \leq \beta$.
\end{theorem}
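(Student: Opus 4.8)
The plan is to construct the transfinite sequence $(x_\alpha)$ described in the statement by transfinite recursion and then extract a fixed point from it. Concretely, set $x_0 = \min X$ (which exists because a CPO contains a least element), put $x_{\alpha+1} = f(x_\alpha)$ at successors, and $x_\alpha = \sup\{x_\beta \mid \beta < \alpha\}$ at nonzero limits. The subtlety is that this recursion is not obviously well-defined: to form the supremum at a limit stage I need $\{x_\beta \mid \beta < \alpha\}$ to actually be a chain, so that completeness of $X$ guarantees the supremum exists. Well-definedness of the sequence and the monotonicity claim $x_\beta \preceq x_\alpha$ for $\beta \le \alpha$ are therefore interdependent, and I would prove them together by a single transfinite induction (\autoref{thm:transfInd}) on the statement ``$x_\gamma$ is defined for all $\gamma \le \alpha$, and $x_\delta \preceq x_\gamma$ whenever $\delta \le \gamma \le \alpha$''.

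For this induction, the base case is immediate since $x_0 = \min X$ lies below everything. At a limit stage $\alpha$, the induction hypothesis makes $\{x_\beta \mid \beta < \alpha\}$ a chain, so its supremum exists in the CPO and defines $x_\alpha$, which is by construction an upper bound and hence dominates all earlier elements. The step that needs genuine care is verifying $x_\gamma \preceq x_{\gamma+1} = f(x_\gamma)$, i.e.\ that applying $f$ does not decrease the value. At a successor $\gamma$ this follows from monotonicity of $f$ applied to the previous inequality; at a limit $\gamma$ I would argue that for every $\beta < \gamma$ one also has $\beta+1 < \gamma$, whence $x_\beta \preceq x_{\beta+1} = f(x_\beta) \preceq f(x_\gamma) = x_{\gamma+1}$ using monotonicity and $x_\beta \preceq x_\gamma$, so $x_{\gamma+1}$ is an upper bound of $\{x_\beta \mid \beta < \gamma\}$ and therefore $x_\gamma = \sup\{x_\beta \mid \beta < \gamma\} \preceq x_{\gamma+1}$. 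I expect this limit case of the ``one step up'' inequality to be the main obstacle.

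Once monotonicity is established, the hypothesis of \autoref{lem:chainFixHartogs} holds verbatim: for every ordinal $\alpha$ there is an $x_\alpha \in X$ with $x_\beta \preceq x_\alpha$ for all $\beta \le \alpha$. The lemma then yields an ordinal $\hat\alpha$ with $x_{\hat\alpha} = x_{\hat\alpha+1}$, and since $x_{\hat\alpha+1} = f(x_{\hat\alpha})$ by definition, $x_{\hat\alpha}$ is a fixed point of $f$.

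Finally, to see that $x_{\hat\alpha}$ is the \emph{least} fixed point, I would take an arbitrary fixed point $y = f(y)$ and show $x_\alpha \preceq y$ for all ordinals $\alpha$ by transfinite induction: $x_0 = \min X \preceq y$; if $x_\alpha \preceq y$ then $x_{\alpha+1} = f(x_\alpha) \preceq f(y) = y$ by monotonicity; and at a limit $\alpha$ the element $y$ is an upper bound of $\{x_\beta \mid \beta < \alpha\}$, so $x_\alpha \preceq y$. In particular $x_{\hat\alpha} \preceq y$, which gives minimality and completes the proof.
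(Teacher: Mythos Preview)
The paper does not prove this theorem; it is quoted from \cite{Esi09} and used as a black box. Your proposal is a correct, standard proof of the result: the joint transfinite induction establishing well-definedness and monotonicity of $(x_\alpha)$ is set up correctly (in particular your handling of the limit case of $x_\gamma \preceq x_{\gamma+1}$ is exactly the right argument), the appeal to \autoref{lem:chainFixHartogs} is legitimate, and the final minimality induction is routine.
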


The above theorem is constructive in the sense that it not only guarantees the existence of a least fixed point, but also provides a procedure to find it. However, this procedure might only terminate after transfinitely many steps. The situation improves if the function is not only monotone but also continuous in the sense that it preserves suprema. In this case, a weaker requirement on the domain of the function is sufficient, namely only chains that correspond to infinite sequences need to have a supremum.

\begin{definition}
	Let $(X, \preceq)$ be a poset. An \emph{$\omega$-chain} in $X$ is a sequence~$(x_n)_{n \in \omega}$ such that $x_i \preceq x_j$ for all $i \leq j$. We say $(X, \preceq)$ is an \emph{$\omega$-chain complete partially ordered set ($\omega$-CPO)} if it has a least element and every $\omega$-chain has a supremum.
\end{definition}

\begin{definition}
	Let $(X, \leq)$ and $(Y, \preceq)$ be $\omega$-CPOs. A function $f \colon X \to Y$ is \emph{$\omega$-continuous} if for every $\omega$-chain $C = (x_n)_{n \in \omega}$ in~$X$, $\sup f(C)$ exists and $f(\sup C) = \sup f(C)$.
\end{definition}


The next lemma shows that $\omega$-continuity implies monotonicity (the converse is not true).

\begin{lemma}\label{lem:contImplMon}
	Let $(X, \leq)$ and $(Y, \preceq)$ be $\omega$-CPOs and $f \colon X \to Y$ an $\omega$-continuous function. Then, $f$ is monotone.
\end{lemma}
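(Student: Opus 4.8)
The plan is to convert the order comparison $x_1 \leq x_2$ into a statement about the supremum of an $\omega$-chain, so that the $\omega$-continuity hypothesis applies directly. So fix $x_1, x_2 \in X$ with $x_1 \leq x_2$; the goal is to derive $f(x_1) \preceq f(x_2)$. The key idea is to encode this single comparison as an $\omega$-chain whose supremum is $x_2$, namely the eventually constant sequence $(c_n)_{n \in \omega}$ given by $c_0 = x_1$ and $c_n = x_2$ for all $n \geq 1$.

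First I would verify that $C \coloneqq (c_n)_{n \in \omega}$ is indeed an $\omega$-chain: for $i \leq j$ one checks $c_i \leq c_j$, using the hypothesis $x_1 \leq x_2$ in the case $i = 0 < j$ and reflexivity of $\leq$ in all remaining cases. Next, since $x_2$ occurs in the sequence (e.g.\ as $c_1$) and is an upper bound of every term, $x_2$ is the greatest element of $C$, and therefore $\sup C = x_2$.

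Applying $\omega$-continuity to the chain $C$ then yields $f(x_2) = f(\sup C) = \sup f(C)$, where the existence of this supremum is guaranteed by the definition of $\omega$-continuity. Since $f(C) = \{f(x_1), f(x_2)\}$ and $\sup f(C)$ is in particular an upper bound of $f(C)$, we conclude $f(x_1) \preceq \sup f(C) = f(x_2)$. As $x_1 \leq x_2$ were arbitrary, $f$ is monotone.

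There is no real obstacle in this argument; the only point requiring care is the choice of the witnessing chain. The naive temptation is to feed the two-element comparison $x_1 \leq x_2$ into the continuity hypothesis directly, but $\omega$-continuity is stated for $\omega$-chains, i.e.\ sequences indexed by $\omega$, so one must pad the comparison out to an infinite, eventually constant sequence whose supremum is still $x_2$. Everything else is routine bookkeeping with the definitions of $\omega$-chain, image, and supremum.
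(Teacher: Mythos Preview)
Your proof is correct and follows essentially the same approach as the paper: both use the eventually constant $\omega$-chain $(x_1, x_2, x_2, \ldots)$ and apply $\omega$-continuity to conclude $f(x_1) \preceq \sup f(C) = f(\sup C) = f(x_2)$. Your write-up simply spells out in more detail the verifications (that $C$ is an $\omega$-chain and that $\sup C = x_2$) that the paper leaves implicit.
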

\begin{proof}
	Let $x_1, x_2 \in X$ such that $x_1 \leq x_2$. Then, $\{x_1, x_2, x_2, x_2, \ldots\}$ is an $\omega$-chain. Therefore,
	\begin{equation*}
		f(x_1) \preceq \sup f \bigl(\{x_1, x_2, x_2, \ldots\} \bigr) = f \bigl(\sup \{x_1, x_2, x_2, \ldots\} \bigr) = f(x_2). \qedhere
	\end{equation*}
\end{proof}

\begin{theorem}[{\cite[Theorem~8.15]{DB02}}]\label{thm:continuousFP}
	Let $(X, \preceq)$ be an $\omega$-CPO and $f \colon X \to X$ be $\omega$-continuous. Then, $f$ has a least fixed point, which equals~$\sup \{x_n \mid n \in \omega\}$, where $x_0 = \min(X)$ and $x_{n+1} = f(x_n)$ for $n \in \omega$.
\end{theorem}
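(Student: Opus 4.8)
The plan is to follow the standard Kleene-style iteration argument, using the sequence $x_0 = \min(X)$ and $x_{n+1} = f(x_n)$ already specified in the statement. First I would verify that this sequence is an $\omega$-chain, so that its supremum $x^* \coloneqq \sup\{x_n \mid n \in \omega\}$ exists by the $\omega$-CPO assumption. This is a short induction on $n$: the base case $x_0 \preceq x_1$ holds because $x_0 = \min(X)$ is the least element, and for the inductive step, if $x_n \preceq x_{n+1}$ then applying $f$ and using that $f$ is monotone (\autoref{lem:contImplMon}, since $\omega$-continuity implies monotonicity) gives $x_{n+1} = f(x_n) \preceq f(x_{n+1}) = x_{n+2}$.

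Second, I would show that $x^*$ is a fixed point by computing $f(x^*)$ directly from the $\omega$-continuity hypothesis. Since $(x_n)_{n \in \omega}$ is an $\omega$-chain, continuity yields $f(x^*) = f\bigl(\sup\{x_n \mid n \in \omega\}\bigr) = \sup f\bigl(\{x_n \mid n \in \omega\}\bigr) = \sup\{x_{n+1} \mid n \in \omega\}$. It then remains to observe that dropping the least element $x_0$ does not change the supremum: any upper bound of $\{x_{n+1} \mid n \in \omega\} = \{x_1, x_2, \ldots\}$ bounds $x_1$, and since $x_0 \preceq x_1$ it also bounds $x_0$, so it is an upper bound of the full set $\{x_n \mid n \in \omega\}$. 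The two sets therefore have identical sets of upper bounds and hence the same least upper bound, giving $f(x^*) = \sup\{x_n \mid n \in \omega\} = x^*$.

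Third, to establish that $x^*$ is the \emph{least} fixed point, I would take an arbitrary fixed point $y$ with $f(y) = y$ and prove $x_n \preceq y$ for all $n \in \omega$ by induction: $x_0 = \min(X) \preceq y$ by minimality, and if $x_n \preceq y$ then $x_{n+1} = f(x_n) \preceq f(y) = y$ by monotonicity. Thus $y$ is an upper bound of $\{x_n \mid n \in \omega\}$, and since $x^*$ is the least such upper bound we conclude $x^* \preceq y$.

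The argument is essentially routine once the definitions are unwound; the only point requiring slight care is the supremum bookkeeping in the second step, where one must justify that $\sup\{x_{n+1} \mid n \in \omega\}$ genuinely coincides with $\sup\{x_n \mid n \in \omega\}$ rather than silently reindexing. Everything else is elementary induction together with a single invocation of continuity, and crucially no transfinite iteration is needed here: continuity is exactly what makes the ordinary $\omega$-indexed supremum already a fixed point, in contrast to the merely monotone case of \autoref{thm:monotoneFP}, where the iteration may have to be continued through limit ordinals.
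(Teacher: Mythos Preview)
Your proof is correct and is exactly the standard Kleene iteration argument. Note, however, that the paper does not actually give its own proof of this theorem: it is stated with a citation to \cite[Theorem~8.15]{DB02} and used as a black box, so there is no in-paper proof to compare against. Your write-up would serve perfectly well as a self-contained justification if one were desired.
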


\begin{remark}
	In the literature, CPOs and continuity are often defined in terms of so-called directed subsets instead of chains \cite{DB02,Esi09}. For our purposes, chains are more intuitive and directly applicable in our proofs. This definitional inconsistency is not an issue since the two types of definitions have been shown to be equivalent \cite{Mar76}.
\end{remark}

\section{Abstract System Algebras}\label{sec:abstSys}
We first define system algebras at an abstract level described, but not formalized, by Maurer and Renner~\cite{MR11}, where systems are objects with interfaces via which they can be connected to other systems. Our motivation for defining system algebras at this level of abstraction is twofold: First, we thereby introduce a common language and notation that can be used for more concrete system algebras in later sections as well as in future papers. Secondly, one can already make meaningful statements and prove theorems at this level of abstraction. As examples, we provide a proof of broadcast impossibility in \autoref{sec:bcImp} and refer to the paper by Maurer and Renner~\cite{MR11}, where the composition theorem of constructive cryptography is proven based on abstract systems. Whenever a proof is possible at this level of abstraction, it is clearly preferable to proofs at lower levels due to increased generality and simplicity.

\subsection{Definitions}
A system algebra is a set of systems with certain operations that allow one to compose several systems to obtain a new system. In this way, complex systems can be decomposed into independent components. At this level of abstraction, we only specify how systems can be composed, but not what systems are or how they interact with other systems. In the same sense as the elements of an algebraic ring are abstract objects without concrete meaning, abstract systems have no particular meaning attached, beyond how they can be composed. In a concrete instantiation, systems could, e.g., communicate via discrete inputs and outputs at their interfaces or via analog signals, et cetera. We define two operations; an operation~$\parac$ for taking two systems in parallel, and an operation~$\connectFunc$ for connecting two interfaces of a system. See \autoref{fig:composition} for a depiction of these two operations. Several systems can be connected by first taking them in parallel and then connecting their interfaces.
A similar definition has been given by Tackmann~\cite[Definition~3.4]{Tac14}.
\begin{definition}\label{def:sysAlg}
	Let $\inSet$ be a set (the set of interface labels). A \emph{$\inSet$-system algebra}~$(\sysSpace, \intSetFunct, \parac, \connectSet, \connectFunc)$ consists of a set~$\sysSpace$ (the set of systems), a function $\intSetFunct \colon \sysSpace \to \mathcal{P}(\inSet)$ (assigning to each system its set of interface labels), a partial function~$\parac \colon \sysSpace \times \sysSpace \to \sysSpace$ (the \emph{parallel composition} operation), a function~$\connectSet \colon \sysSpace \to \mathcal{P}\bigl(\bigl\{ \{\inIn, \inIn'\} \mid \inIn, \inIn' \in \inSet \bigr\} \bigr)$ (specifying for each system the set of interface-label pairs that can be connected), and a partial function~$\connectFunc \colon \bigl\{ \{\inIn, \inIn'\} \mid \inIn, \inIn' \in \inSet \bigr\} \times \sysSpace \to \sysSpace$ (the \emph{interface connection} operation), such that
	\begin{itemize}
		\item for all $\abstSys{s} \in \sysSpace$, $\intSetFunct(\abstSys{s})$ is finite and for all $\{\inIn, \inIn'\} \in \connectSet(\abstSys{s})$, we have $\inIn, \inIn' \in \intSetFunct(\abstSys{s})$,
		\item for $\abstSys{s}_1, \abstSys{s}_2 \in \sysSpace$, $\mathop{\parac}(\abstSys{s}_1, \abstSys{s}_2)$, denoted $\abstSys{s}_1 \parac \abstSys{s}_2$, is defined if and only if $\intSetFunct(\abstSys{s}_1) \cap \intSetFunct(\abstSys{s}_2) = \emptyset$, and in this case, $\intSetFunct(\abstSys{s}_1 \parac \abstSys{s}_2) = \intSetFunct(\abstSys{s}_1) \cup \intSetFunct(\abstSys{s}_2)$ and for $j \in \{1,2\}$ and for all $\inIn, \inIn' \in \intSetFunct(\abstSys{s}_j)$, we have $\{\inIn, \inIn'\} \in \connectSet(\abstSys{s}_1 \parac \abstSys{s}_2) \Longleftrightarrow \{\inIn, \inIn'\} \in \connectSet(\abstSys{s}_j)$, and
		\item for $\inIn, \inIn' \in \inSet$ and $\abstSys{s} \in \sysSpace$, $\connectFunc(\{\inIn, \inIn'\}, \abstSys{s})$, denoted by $\connect{\inIn}{\inIn'}(\abstSys{s})$, is defined if and only if $\{\inIn, \inIn'\} \in \connectSet(\abstSys{s})$, and in this case, $\intSetFunct\bigl(\connect{\inIn}{\inIn'}(\abstSys{s}) \bigr) = \intSetFunct(\abstSys{s}) \setminus \{\inIn, \inIn'\}$.
	\end{itemize}
\end{definition}

We will usually identify a system algebra with the set of systems~$\sysSpace$ and use the same symbols $\intSetFunct$, $\parac$, $\connectSet$, and $\connectFunc$ for different system algebras. The parallel composition of two system is only allowed if they have disjoint interface sets. This means in particular that one cannot consider the parallel composition of a system with itself. One can imagine that each system exists only once and therefore cannot be used twice within another system. This is not an issue because~$\sysSpace$ can contain many ``copies'' of a system with different interface labels, and different systems can have different interface labels. One could also introduce an interface-renaming operation, but we will not formalize this because it is not needed here.
The set~$\connectSet(\abstSys{s})$ determines which interfaces of a system~$\abstSys{s}$ are compatible, i.e., can be connected to each other. It might or might not be possible to connect an interface to itself. Figuratively speaking, one could imagine that interfaces come with different types of plugs and one can only connect interfaces with matching plugs, where $\connectSet(\abstSys{s})$ contains all unordered pairs of matching interfaces. For example, we will later consider system algebras with separate interfaces for inputs and outputs, where one can only connect input interfaces to output interfaces, but not two interfaces of the same type. Since the connection operation is defined for unordered pairs of interfaces, one always connects two interfaces to each other, without a direction. The condition on~$\connectSet$ for the parallel composition ensures that if one can connect two interfaces of a system, one can still do so after taking another system in parallel (i.e., $\{\inIn, \inIn'\} \in \connectSet(\abstSys{s}_j) \Longrightarrow \{\inIn, \inIn'\} \in \connectSet(\abstSys{s}_1 \parac \abstSys{s}_2)$ for $j \in \{1,2\}$), and additional connections are only created between the two systems, not for a single system (i.e., for $j \in \{1,2\}$ and $\inIn, \inIn' \in \intSetFunct(\abstSys{s}_j)$,  $\{\inIn, \inIn'\} \in \connectSet(\abstSys{s}_1 \parac \abstSys{s}_2) \Longrightarrow \{\inIn, \inIn'\} \in \connectSet(\abstSys{s}_j)$). The intuition behind this condition is that the two systems are independent and do not influence what is possible for the other system. After connecting interfaces, however, it is possible that connections that were allowed before become disallowed. For example, one might want to consider a system algebra in which one cannot create cycles by connecting systems, e.g., when modeling systems that correspond to Boolean circuits. Then, certain connections are only allowed as long as other interfaces are not connected.

We restrict ourselves to systems with finitely many interfaces because we are only interested in systems that are composed of finitely many components. Therefore, we can define parallel composition as a binary operation and interface connection for a single pair of interfaces, whereas in general, one would define the parallel composition of potentially infinitely many systems and the connection of potentially infinitely many pairs of interfaces. In our simplified setting, repeated applications of the binary parallel composition and the connection of two interfaces are sufficient.

An important property system algebras can have is \emph{composition-order invariance} (called composition-order independence by Maurer and Renner~\cite{MR11}). Loosely speaking, it guarantees that a system that is composed of several systems is independent of the order in which they have been composed. Put differently, a figure in which several systems are connected by lines uniquely determines the overall system; the order in which the figure was drawn is irrelevant.

\begin{definition}
	For a $\inSet$-system algebra~$(\sysSpace, \intSetFunct, \parac, \connectSet, \connectFunc)$, we say $\connectSet$ \emph{permits reordering} if for all $\abstSys{s} \in \sysSpace$, $\{\inIn, \inIn'\} \in \connectSet(\abstSys{s})$, and $\{j, j'\} \in \connectSet \bigl(\connect{\inIn}{\inIn'}(\abstSys{s}) \bigr)$, we have $\{j, j'\} \in \connectSet(\abstSys{s})$ and $\{\inIn, \inIn'\} \in \connectSet \bigl(\connect{j}{j'}(\abstSys{s}) \bigr)$. 
	If additionally $\connect{j}{j'}\bigl( \connect{\inIn}{\inIn'}(\abstSys{s}) \bigr) = \connect{\inIn}{\inIn'}\bigl( \connect{j}{j'}(\abstSys{s}) \bigr)$, $(\sysSpace, \intSetFunct, \parac, \connectSet, \connectFunc)$ is called \emph{connection-order invariant}.
	A connection-order invariant system algebra is called \emph{composition-order invariant} if the operation~$\parac$ is associative and commutative
	and for all $\abstSys{s}_1, \abstSys{s}_2 \in \sysSpace$ and $\{ \inIn, \inIn' \} \in \connectSet(\abstSys{s}_1)$ such that $\intSetFunct(\abstSys{s}_1) \cap \intSetFunct(\abstSys{s}_2) = \emptyset$, we have $\connect{\inIn}{\inIn'}(\abstSys{s}_1) \parac \abstSys{s}_2 = \connect{\inIn}{\inIn'}(\abstSys{s}_1 \parac \abstSys{s}_2)$.
\end{definition}

All system algebras we consider in this paper have associative and commutative parallel composition. Note however, that one can also imagine system algebras where this is not the case: Consider a set of systems that correspond to software components that are compiled together when two systems are composed in parallel. Depending on compiler optimizations, the efficiency of the resulting program might depend on the order in which components are compiled together.

\subsection{An Abstract Proof of Broadcast Impossibility}\label{sec:bcImp}
In this section, we provide a formal proof at the level of abstract systems of the impossibility of bit-broadcast for three parties with one dishonest party, as sketched in the introduction. This exemplifies how a proof that involves drawing figures can be justified at the level of abstract systems and why composition-order invariance is crucial for doing so. Since there is no notion of outputting a bit for abstract systems, one cannot directly formulate the requirements for broadcast at this level. We therefore first prove a more abstract statement and afterwards argue how this implies the impossibility for more concrete systems. We assume in the following that we have a system algebra that allows to connect systems arbitrarily, i.e., two (different) interfaces of a system can always be connected (formally, $\{\inIn, \inIn' \} \in \connectSet(\abstSys{s})$ for all $\inIn, \inIn' \in \intSetFunct(\abstSys{s})$ with $\inIn \neq \inIn'$). This is a reasonable assumption, e.g., in a setting where the systems communicate with other systems by sending messages over channels.

\begin{theorem}\label{thm:bcImp}
	Let $(\sysSpace, \intSetFunct, \parac, \connectSet, \connectFunc)$ be a composition-order invariant $\inSet$-system algebra such that $\connectSet(\abstSys{s}) \supseteq \bigl\{ \{\inIn, \inIn' \} \mid \inIn, \inIn' \in \intSetFunct(\abstSys{s}) \wedge \inIn \neq \inIn' \bigr\}$ for all $\abstSys{s} \in \sysSpace$.
	Let $\abstSys{s}_0, \abstSys{s}_1, \abstSys{r}_1, \abstSys{r}_2 \in \sysSpace$ such that $\intSetFunct(\abstSys{s}_0)$, $\intSetFunct(\abstSys{s}_1)$, $\intSetFunct(\abstSys{r}_1)$, and $\intSetFunct(\abstSys{r}_2)$ are pairwise disjoint and $s_0^1, s_0^2 \in \intSetFunct(\abstSys{s}_0)$, $s_1^1, s_1^2 \in \intSetFunct(\abstSys{s}_1)$, $r_1^1, r_1^2 \in \intSetFunct(\abstSys{r}_1)$, and $r_2^1, r_2^2 \in \intSetFunct(\abstSys{r}_2)$, where these eight interface labels are distinct. Further let
	\begin{align*}
		\mathcal{S}_0 &\coloneqq \bigl\{ \connect{r_1^2}{e^2}(\connect{s_0^2}{e^1}(\connect{s_0^1}{r_1^1}(\abstSys{s}_0 \parac \abstSys{r}_1 \parac \abstSys{e}))) \mid e \in \sysSpace \wedge \intSetFunct(\abstSys{e}) \cap \intSetFunct(\abstSys{s}_0) = \emptyset = \intSetFunct(\abstSys{e}) \cap \intSetFunct(\abstSys{r}_1) \\ &\hspace{26em} \wedge e^1, e^2 \in \intSetFunct(\abstSys{e}) \wedge e^1 \neq e^2 \bigr\}, \\
		\mathcal{S}_1 &\coloneqq \bigl\{ \connect{e^2}{r_2^2}(\connect{s_1^2}{r_2^1}(\connect{s_1^1}{e^1}(\abstSys{s}_1 \parac \abstSys{e} \parac \abstSys{r}_2))) \mid e \in \sysSpace \wedge \intSetFunct(\abstSys{e}) \cap \intSetFunct(\abstSys{s}_1) = \emptyset = \intSetFunct(\abstSys{e}) \cap \intSetFunct(\abstSys{r}_2) \\ &\hspace{26em} \wedge e^1, e^2 \in \intSetFunct(\abstSys{e}) \wedge e^1 \neq e^2 \bigr\}, \\
		\mathcal{S}_{=} &\coloneqq \bigl\{ \connect{r_1^2}{r_2^2}(\connect{e^2}{r_2^1}(\connect{e^1}{r_1^1}(\abstSys{e} \parac \abstSys{r}_1 \parac \abstSys{r}_2))) \mid e \in \sysSpace \wedge \intSetFunct(\abstSys{e}) \cap \intSetFunct(\abstSys{r}_1) = \emptyset = \intSetFunct(\abstSys{e}) \cap \intSetFunct(\abstSys{r}_2) \\ &\hspace{26em} \wedge e^1, e^2 \in \intSetFunct(\abstSys{e}) \wedge e^1 \neq e^2 \bigr\}.
	\end{align*}
	Then, $\mathcal{S}_0 \cap \mathcal{S}_1 \cap \mathcal{S}_{=} \neq \emptyset$.
\end{theorem}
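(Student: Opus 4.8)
The plan is to identify a single system $\abstSys{t}$---the ``main proof figure'' of \autoref{fig:bc-proof}---and to show that it belongs to all three sets at once; then $\abstSys{t} \in \mathcal{S}_0 \cap \mathcal{S}_1 \cap \mathcal{S}_{=}$ witnesses nonemptiness. The system $\abstSys{t}$ is the cyclic composition of all four systems, with four connections realizing the cycle $\abstSys{s}_0 - \abstSys{r}_1 - \abstSys{r}_2 - \abstSys{s}_1 - \abstSys{s}_0$:
\begin{equation*}
	\abstSys{t} \coloneqq \connect{s_0^1}{r_1^1}\bigl(\connect{s_0^2}{s_1^1}\bigl(\connect{r_1^2}{r_2^2}\bigl(\connect{s_1^2}{r_2^1}(\abstSys{s}_0 \parac \abstSys{r}_1 \parac \abstSys{s}_1 \parac \abstSys{r}_2)\bigr)\bigr)\bigr).
\end{equation*}
Because the eight interface labels are distinct and the hypothesis on $\connectSet$ permits connecting any two distinct interfaces of a system, each of the four connections is defined where it is applied, so $\abstSys{t}$ is well-defined. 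Composition-order invariance---the associativity and commutativity of $\parac$ together with connection-order invariance---then lets me apply the four connections in any order and permute the parallel factors freely, so $\abstSys{t}$ does not depend on these choices.

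To show $\abstSys{t} \in \mathcal{S}_0$, I would take the witness $\abstSys{e} \coloneqq \connect{s_1^2}{r_2^1}(\abstSys{s}_1 \parac \abstSys{r}_2)$ with $e^1 \coloneqq s_1^1$ and $e^2 \coloneqq r_2^2$, and first verify that it satisfies the constraints in the definition of $\mathcal{S}_0$: its interface set $(\intSetFunct(\abstSys{s}_1) \cup \intSetFunct(\abstSys{r}_2)) \setminus \{s_1^2, r_2^1\}$ is disjoint from $\intSetFunct(\abstSys{s}_0)$ and $\intSetFunct(\abstSys{r}_1)$, it contains $e^1$ and $e^2$, and $e^1 \neq e^2$. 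Substituting this $\abstSys{e}$ into the defining expression of $\mathcal{S}_0$ gives $\connect{r_1^2}{r_2^2}(\connect{s_0^2}{s_1^1}(\connect{s_0^1}{r_1^1}(\abstSys{s}_0 \parac \abstSys{r}_1 \parac \connect{s_1^2}{r_2^1}(\abstSys{s}_1 \parac \abstSys{r}_2))))$. Using commutativity and associativity of $\parac$ together with the identity $\connect{\inIn}{\inIn'}(\abstSys{a}) \parac \abstSys{b} = \connect{\inIn}{\inIn'}(\abstSys{a} \parac \abstSys{b})$ from composition-order invariance, I can pull the inner connection $\connect{s_1^2}{r_2^1}$ out of the parallel composition, turning the expression into the four connections $\connect{s_0^1}{r_1^1}$, $\connect{s_0^2}{s_1^1}$, $\connect{r_1^2}{r_2^2}$, $\connect{s_1^2}{r_2^1}$ applied, in some order, to $\abstSys{s}_0 \parac \abstSys{r}_1 \parac \abstSys{s}_1 \parac \abstSys{r}_2$. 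Reordering these by connection-order invariance yields exactly $\abstSys{t}$.

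The memberships $\abstSys{t} \in \mathcal{S}_1$ and $\abstSys{t} \in \mathcal{S}_{=}$ follow by the same two moves, using the witnesses $\abstSys{e} \coloneqq \connect{s_0^1}{r_1^1}(\abstSys{s}_0 \parac \abstSys{r}_1)$ with $(e^1, e^2) \coloneqq (s_0^2, r_1^2)$ for $\mathcal{S}_1$, and $\abstSys{e} \coloneqq \connect{s_0^2}{s_1^1}(\abstSys{s}_0 \parac \abstSys{s}_1)$ with $(e^1, e^2) \coloneqq (s_0^1, s_1^2)$ for $\mathcal{S}_{=}$. In each case, pulling the connection internal to $\abstSys{e}$ outside the parallel composition and then reordering the resulting four connections reduces the defining expression to the same collection of connections $\{s_0^1, r_1^1\}$, $\{s_0^2, s_1^1\}$, $\{r_1^2, r_2^2\}$, $\{s_1^2, r_2^1\}$ applied to $\abstSys{s}_0 \parac \abstSys{r}_1 \parac \abstSys{s}_1 \parac \abstSys{r}_2$, i.e., to $\abstSys{t}$. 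Hence $\abstSys{t}$ lies in all three sets and the intersection is nonempty.

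I expect the main difficulty to be bookkeeping rather than anything conceptual: at every step one must check that the parallel compositions are defined (the relevant interface sets are pairwise disjoint), that each connected pair consists of two distinct labels still present in the current system (so that the hypothesis on $\connectSet$ guarantees admissibility), and that the side conditions of the composition-order-invariance axioms hold before invoking them. In particular, each commutation of two connections must be justified through the fact that $\connectSet$ permits reordering, which here holds because the eight labels are distinct and a connection deletes only the two labels it joins, so every connection remains admissible throughout. Keeping track of which labels survive each operation, and confirming that the witnesses $\abstSys{e}$ genuinely meet the quantifier conditions in each set, is where the care is needed.
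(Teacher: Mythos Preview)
Your proposal is correct and follows essentially the same approach as the paper: you construct the same cyclic system~$\abstSys{t}$ (the four connections $\{s_0^1,r_1^1\}$, $\{s_0^2,s_1^1\}$, $\{r_1^2,r_2^2\}$, $\{s_1^2,r_2^1\}$ applied to the parallel composition of all four systems) and use the same three witnesses $\abstSys{e}$ for $\mathcal{S}_0$, $\mathcal{S}_1$, $\mathcal{S}_{=}$, invoking composition-order invariance to pull the inner connection out of the parallel composition and to reorder connections. The only differences are cosmetic (the order in which you list the connections and parallel factors in the definition of $\abstSys{t}$), and you are right that the substance of the proof is bookkeeping.
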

\begin{proof}
	Note that $\mathcal{S}_0$ corresponds to the set of systems for all possible $\abstSys{e}$ and $a = 0$ in \autoref{fig:bc-dishonest-r2}, $\mathcal{S}_1$ corresponds to \autoref{fig:bc-dishonest-r1} for $a = 1$, and $\mathcal{S}_{=}$ corresponds to the systems in \autoref{fig:bc-dishonest-s}. Let 
	\begin{equation*}
		\abstSys{t} \coloneqq \connect{r_1^2}{r_2^2}(\connect{s_1^2}{r_2^1}(\connect{s_0^1}{r_1^1}(\connect{s_0^2}{s_1^1}(\abstSys{s}_0 \parac \abstSys{s}_1 \parac \abstSys{r}_1 \parac \abstSys{r}_2))))
	\end{equation*}	
	be the system in \autoref{fig:bc-proof}. By composition-order invariance, we have for $\abstSys{e} = \connect{s_0^2}{s_1^1}(\abstSys{s}_0 \parac \abstSys{s}_1)$,
	\begin{equation*}
		\abstSys{e} \parac \abstSys{r}_1 \parac \abstSys{r}_2 = \connect{s_0^2}{s_1^1}(\abstSys{s}_0 \parac \abstSys{s}_1) \parac \abstSys{r}_1 \parac \abstSys{r}_2 = \connect{s_0^2}{s_1^1}(\abstSys{s}_0 \parac \abstSys{s}_1 \parac \abstSys{r}_1 \parac \abstSys{r}_2).
	\end{equation*}
	Hence, we obtain for $e^1 = s_0^1$, and $e^2 = s_1^2$, that $\abstSys{t} = \connect{r_1^2}{r_2^2}(\connect{e^2}{r_2^1}(\connect{e^1}{r_1^1}(\abstSys{e} \parac \abstSys{r}_1 \parac \abstSys{r}_2))) \in \mathcal{S}_{=}$. Again using composition-order invariance, we further have for $\hat{\abstSys{e}} = \connect{s_1^2}{r_2^1}(\abstSys{s}_1 \parac \abstSys{r}_2)$, $\hat{e}^1 = s_1^1$, and $\hat{e}^2 = r_2^2$,
	\begin{align*}
		t &= \connect{r_1^2}{r_2^2}(\connect{s_0^2}{s_1^1}(\connect{s_0^1}{r_1^1}(\connect{s_1^2}{r_2^1}(\abstSys{s}_1 \parac \abstSys{r}_2 \parac \abstSys{s}_0 \parac \abstSys{r}_1)))) \\
		&= \connect{r_1^2}{\hat{e}^2}(\connect{s_0^2}{\hat{e}^1}(\connect{s_0^1}{r_1^1}(\hat{\abstSys{e}} \parac \abstSys{s}_0 \parac \abstSys{r}_1))) \in \mathcal{S}_0.
	\end{align*}
	Finally, we have for $\tilde{\abstSys{e}} = \connect{s_0^1}{r_1^1}(\abstSys{s}_0 \parac \abstSys{r}_1)$, $\tilde{e}^1 = s_0^2$, and $\tilde{e}^2 = r_1^2$,
	\begin{align*}
		t &= \connect{r_1^2}{r_2^2}(\connect{s_1^2}{r_2^1}(\connect{s_0^2}{s_1^1}(\connect{s_0^1}{r_1^1}(\abstSys{s}_0 \parac \abstSys{r}_1 \parac \abstSys{s}_1 \parac \abstSys{r}_2)))) \\
		&= \connect{\tilde{e}^2}{r_2^2}(\connect{s_1^2}{r_2^1}(\connect{\tilde{e}^1}{s_1^1}(\tilde{\abstSys{e}} \parac \abstSys{s}_1 \parac \abstSys{r}_2))) \in \mathcal{S}_1.
	\end{align*}
	Therefore, we have $t \in \mathcal{S}_0 \cap \mathcal{S}_1 \cap \mathcal{S}_{=} \neq \emptyset$.
\end{proof}

To see why this implies the claimed impossibility, assume a protocol for broadcast exists and let $\abstSys{s}_a$ for $a \in \{0,1\}$ be a system that implements the protocol for the sender to broadcast the bit~$a$ and let $\abstSys{r}_1$ and $\abstSys{r}_2$ be systems for the two receivers such that these systems have distinct interface labels.\footnote{Since interface labels are only used for connecting systems and typical protocols do not depend on them, it is reasonable to assume that such systems with distinct interface labels exist.} The validity of the broadcast protocol implies that for all systems in $\mathcal{S}_0$, the subsystem~$\abstSys{r}_1$ decides on the bit~$0$ (say with probability more than $\frac{2}{3}$) and for all systems in $\mathcal{S}_1$, the subsystem~$\abstSys{r}_2$ decides on the bit~$1$ (with probability more than $\frac{2}{3}$). The consistency condition further implies that for all systems in $\mathcal{S}_{=}$, $\abstSys{r}_1$ and $\abstSys{r}_2$ decide on the same bit (with probability more than $\frac{2}{3}$). Now \autoref{thm:bcImp} says that there is a system that satisfies all three constraints, which is impossible. Therefore no such protocol exists.

\section{Functional System Algebras}\label{sec:funSys}
\subsection{Definitions}\label{sec:funSysDefs}
We now introduce special system algebras for functional systems that take inputs at dedicated input interfaces and produce outputs at their output interfaces, where the outputs are computed as a function of the inputs. This not only allows us to model systems that take a single input at each input interface and produce a single output at each output interface, but also much more general systems. For example, to model interactive systems that successively take inputs and produce outputs, one can consider the set of sequences of values as the domain of the functions. The function corresponding to a system then maps an entire input history to the output history and is a compact description of the system.

We define the parallel composition of two systems to be the function that evaluates both systems independently. Interface connection is defined in a way such that after connecting an input interface to an output interface, the input at the former equals the output at the latter. This corresponds to having a fixed point of a certain function determined by the connected interfaces and the system. One therefore has to choose~$\connectSet$ such that fixed points for all allowed connections exist.
Ideally, there is always a unique fixed point, because in this case, the interface connection operation is uniquely determined by this condition. If there are several fixed points, one has to be chosen in each case. A functional system algebra is therefore characterized by a set of functions~$\sysSpace$, a function~$\connectSet$ determining the allowed interface connections, and an appropriate choice of fixed points~$\fpChoiceFun$. We use boldface letters for functional systems to distinguish them from abstract systems.

\begin{definition}\label{def:funSysAlg}
	Let $\inSet$ and $\mathcal{X}$ be sets, let for all finite disjoint $\inInSet, \outInSet \subseteq \inSet$, $\sysIntSpace{\inInSet}{\outInSet} \subseteq \bigl( \mathcal{X}^\outInSet \bigr)^{\mathcal{X}^\inInSet}$ be a set of functions~$\sys{s} \colon \mathcal{X}^\inInSet \to \mathcal{X}^\outInSet$, and let $\sysSpace$ be the union of all $\sysIntSpace{\inInSet}{\outInSet}$. For $\sys{s} \in \sysIntSpace{\inInSet}{\outInSet}$, $\inIn \in \inInSet$, $\outIn \in \outInSet$, and $\tuple{x} \in \mathcal{X}^{\inInSet \setminus \{\inIn\}}$, let
	\begin{equation*}
		\fixedPoints{\sys{s}}{\inIn}{\outIn}{\tuple{x}} \coloneqq \bigl\{x_\inIn \in \mathcal{X} \mid \sys{s}\bigl(\tuple{x} \cup \{(\inIn, x_\inIn)\} \bigr)(\outIn) = x_\inIn \bigr\}
	\end{equation*}
	be the set of fixed points of the function $x_\inIn \mapsto \sys{s}\bigl(\tuple{x} \cup \{(\inIn, x_\inIn)\} \bigr)(\outIn)$. Further let $\connectSet \colon \sysSpace \to \mathcal{P}\bigl(\bigl\{ \{\inIn, \outIn \} \mid \inIn, \outIn \in \inSet \bigr\} \bigr)$ such that for all $\sys{s} \in \sysIntSpace{\inInSet}{\outInSet}$ and $\{\inIn, \outIn\} \in \connectSet(\sys{s})$, we have $\inIn \in \inInSet$, $\outIn \in \outInSet$ (or $\inIn \in \outInSet$, $\outIn \in \inInSet$),\footnote{We formally cannot require $\inIn \in \inInSet$ and $\outIn \in \outInSet$ because $\{\inIn, \outIn\}$ is unordered. To simplify the notation we will, however, always assume that $\inIn \in \inInSet$ and $\outIn \in \outInSet$ when we write $\{\inIn, \outIn\} \in \connectSet(\sys{s})$, and similarly for $\{\inIn', \outIn'\}$, $\{\inIn_1, \outIn_1\}$, etc.} and for all $\tuple{x} \in \mathcal{X}^{\inInSet \setminus \{\inIn\}}$, we have $\fixedPoints{\sys{s}}{\inIn}{\outIn}{\tuple{x}} \neq \emptyset$. Finally let $\fpChoiceFun^{\sys{s}}_{\inIn, \outIn} \colon \mathcal{X}^{\inInSet \setminus \{\inIn\}} \to \mathcal{X}$ for $\sys{s} \in \sysIntSpace{\inInSet}{\outInSet}$ and $\{\inIn, \outIn\} \in \connectSet(\sys{s})$ be a function such that for all $\tuple{x} \in \mathcal{X}^{\inInSet \setminus \{\inIn\}}$, we have $\chosenFixedPoint{\sys{s}}{\inIn}{\outIn}{\tuple{x}} \in \fixedPoints{\sys{s}}{\inIn}{\outIn}{\tuple{x}}$. Then, we define
	\begin{equation*}
		\genFunSys{\sysSpace}{\connectSet}{\fpChoiceFun} \coloneqq (\sysSpace, \intSetFunct, \parac, \connectSet, \connectFunc),
	\end{equation*}
	where $\fpChoiceFun$ is the set of all $\fpChoiceFun^{\sys{s}}_{\inIn, \outIn}$ and
	\begin{itemize}
		\item for $\sys{s} \in \sysIntSpace{\inInSet}{\outInSet}$, we have $\lambda(\sys{s}) = \inInSet \cup \outInSet$,
		
		\item for pairwise disjoint $\inInSet_1, \inInSet_2, \outInSet_1, \outInSet_2 \subseteq \inSet$, $\sys{s}_1 \in \sysIntSpace{\inInSet_1}{\outInSet_1}$, and $\sys{s}_2 \in \sysIntSpace{\inInSet_2}{\outInSet_2}$, we have $\sys{s}_1 \parac \sys{s}_2 \colon \mathcal{X}^{\inInSet_1 \cup \inInSet_2} \to \mathcal{X}^{\outInSet_1 \cup \outInSet_2}$ with
		\begin{equation}\label{eq:parallelCond}
			\forall \tuple{x} \in \mathcal{X}^{\inInSet_1 \cup \inInSet_2} \ \forall j \in \{1,2\} \ \forall \outIn_j \in \outInSet_j \ \ (\sys{s}_1 \parac \sys{s}_2)(\tuple{x})(\outIn_j) = \sys{s}_j \bigl(\tuple{x}|_{\inInSet_j}\bigr)(\outIn_j),
		\end{equation}
		
		\item and for $\sys{s} \in \sysIntSpace{\inInSet}{\outInSet}$ and $\{\inIn, \outIn\} \in \connectSet(\sys{s})$, we have $\connect{\inIn}{\outIn}(\sys{s}) \colon \mathcal{X}^{\inInSet \setminus \{\inIn\}} \to \mathcal{X}^{\outInSet \setminus \{\outIn\}}$ with
		\begin{equation*}
			\forall \tuple{x} \in \mathcal{X}^{\inInSet \setminus \{\inIn\}} \ \ \connect{\inIn}{\outIn}(\sys{s})(\tuple{x}) = \sys{s}\bigl(\tuple{x} \cup \bigl\{ \bigl(\inIn, \chosenFixedPoint{\sys{s}}{\inIn}{\outIn}{\tuple{x}} \bigr) \bigr\} \bigr)|_{\outInSet \setminus \{\outIn\}}.
		\end{equation*}
	\end{itemize}
	
	If $\sysSpace$ is closed under $\parac$ and $\connectFunc$,\footnote{That is, for $\sys{s}_1, \sys{s}_2 \in \sysSpace$ for which $\parac$ is defined, $\sys{s}_1 \parac \sys{s}_2 \in \sysSpace$, and for $\sys{s} \in \sysSpace$ and $\{\inIn, \outIn\} \in \connectSet(\sys{s})$, $\connect{\inIn}{\outIn}(\sys{s}) \in \sysSpace$.} and if for all $\sys{s}_1, \sys{s}_2 \in \sysSpace$ with $\intSetFunct(\sys{s}_1) \cap \intSetFunct(\sys{s}_2) = \emptyset$, for $j \in \{1,2\}$, and for all $\inIn, \outIn \in \intSetFunct(\sys{s}_j)$, we have $\{\inIn, \outIn\} \in \connectSet(\sys{s}_1 \parac \sys{s}_2) \Longleftrightarrow \{\inIn, \outIn\} \in \connectSet(\sys{s}_j)$, then $\genFunSys{\sysSpace}{\connectSet}{\fpChoiceFun}$ is a $\inSet$-system algebra. In this case, we say $\genFunSys{\sysSpace}{\connectSet}{\fpChoiceFun}$ is a \emph{functional $\inSet$-system algebra over~$\mathcal{X}$}.
\end{definition}

\begin{remark}
	Our definition of interface connections via fixed points implies that the ``line'' connecting two systems has no effect on the values. This means in particular that it does not introduce delays or transmission errors. If, e.g., delays are required (and the domains and functions are defined at a level such that delays can be specified, see \autoref{sec:causalSys}), one can allow the connection of systems only via dedicated channel systems that introduce the desired delays.
\end{remark}

The choice of fixed points is crucial for obtaining a system algebra with desired properties. For example, if the system algebra is supposed to model a class of real-world systems, the chosen fixed point should correspond to the value generated by the real system. The choice of the fixed point can also influence whether connection-order invariance holds. If fixed points are not unique, a reasonable requirement is that they are consistently chosen in the sense that whenever two systems have the same set of fixed points for a specific interface connection, the same fixed point is chosen for both systems.

\begin{definition}
	Let $(\sysSpace, \intSetFunct, \parac, \connectSet, \connectFunc)$ be a functional $\inSet$-system algebra over~$\mathcal{X}$. We say it has \emph{unique fixed points} if $\bigl\lvert \fixedPoints{\sys{s}}{\inIn}{\outIn}{\tuple{x}} \bigr\rvert = 1$ for all $\sys{s} \in \sysIntSpace{\inInSet}{\outInSet}$, $\{\inIn, \outIn\} \in \connectSet(\sys{s})$, and $\tuple{x} \in \mathcal{X}^{\inInSet \setminus \{\inIn\}}$.
	If there exists\footnote{While $\fpChoiceFun$ uniquely determines $\connectFunc$, the converse is not true. For example, if $\sys{s}$, $\inIn$, and $\outIn$ are such that the input at interface~$\inIn$ does not influence the outputs of~$\sys{s}$ at interfaces different from~$\outIn$, the choice of $\chosenFixedPoint{\sys{s}}{\inIn}{\outIn}{\cdot}$ is irrelevant. Hence, we only require for consistently chosen fixed points that $\connectFunc$ can be explained consistently.} $\fpChoiceFun$ such that $(\sysSpace, \intSetFunct, \parac, \connectSet, \connectFunc) = \genFunSys{\sysSpace}{\connectSet}{\fpChoiceFun}$ and for all $\sys{s} \in \sysIntSpace{\inInSet}{\outInSet}, \sys{s}' \in \sysIntSpace{\inInSet'}{\outInSet'}$, $\{\inIn, \outIn\} \in \connectSet(\sys{s}) \cap \connectSet(\sys{s}')$, $\tuple{x} \in \mathcal{X}^{\inInSet \setminus \{\inIn\}}$, and $\tuple{x}' \in \mathcal{X}^{\inInSet' \setminus \{\inIn\}}$,
	\begin{equation*}
		\fixedPoints{\sys{s}}{\inIn}{\outIn}{\tuple{x}} = \fixedPoints{\sys{s}'}{\inIn}{\outIn}{\tuple{x}'} \ \ \Longrightarrow \ \ \chosenFixedPoint{\sys{s}}{\inIn}{\outIn}{\tuple{x}} = \chosenFixedPoint{\sys{s}'}{\inIn}{\outIn}{\tuple{x}'},
	\end{equation*}
	we say $(\sysSpace, \intSetFunct, \parac, \connectSet, \connectFunc)$ has \emph{consistently chosen fixed points}.
\end{definition}

\subsection{Basic Properties}\label{sec:funSysProps}
It is easy to find functional system algebras that are not connection-order invariant if fixed points are not chosen consistently. Consider for example a system~$\sys{s}$ with two fixed points~$x, x'$ for connecting interfaces $\inIn$ and $\outIn$ regardless of inputs at other interfaces. It is possible that $x$ is chosen for $\connect{\inIn}{\outIn}(\sys{s})$ but for the system $\sys{s}' \coloneqq \connect{\inIn'}{\outIn'}(\sys{s})$, which has the same two fixed points for connecting $\inIn$ and $\outIn$, $x'$ is chosen. In this case, one can have $\connect{\inIn'}{\outIn'}\bigl( \connect{\inIn}{\outIn}(\sys{s}) \bigr) \neq \connect{\inIn}{\outIn}\bigl( \connect{\inIn'}{\outIn'}(\sys{s}) \bigr)$. As the following example shows, consistently chosen fixed points are, however, not sufficient to guarantee connection-order invariance.

\paragraph{Consistently chosen fixed points do not imply connection-order invariance\ifelsarticle\else.\fi} Let $\mathcal{X} = \{0, 1\}$ and consider a system algebra where $0$ is preferred to $1$ as a fixed point, i.e., $\fixedPoints{\sys{s}}{\inIn}{\outIn}{\tuple{x}} = \{0,1\} \Rightarrow \chosenFixedPoint{\sys{s}}{\inIn}{\outIn}{\tuple{x}} = 0$. This clearly guarantees consistently chosen fixed points. Now consider the system $\sys{s} \in \sysIntSpace{\{\inIn_1, \inIn_2\}}{\{\outIn_1, \outIn_2, \outIn_3\}}$, $\sys{s}\bigl( \{ (\inIn_1, x_1), (\inIn_2, x_2) \} \bigr) = \{ (\outIn_1, 1-x_1), (\outIn_2, 1-x_2), (\outIn_3, x_1) \}$. If we connect $\inIn_2$ and $\outIn_1$, there is exactly one fixed point for each input at $\inIn_1$ because we essentially compose two functions, which both invert their input bit. Thus, $\connect{\inIn_2}{\outIn_1}(\sys{s}) \bigl(\{(\inIn_1, x_1)\} \bigr) = \{(\outIn_2, x_1), (\outIn_3, x_1)\}$. If we now connect $\inIn_1$ and $\outIn_2$, both $0$ and $1$ are fixed points and $0$ is chosen, i.e., we get $\connect{\inIn_1}{\outIn_2} \bigl(\connect{\inIn_2}{\outIn_1}(\sys{s}) \bigr) (\emptyset) = \{(\outIn_3, 0)\}$. However, first connecting $\inIn_1$ and $\outIn_2$ yields $\connect{\inIn_1}{\outIn_2}(\sys{s}) \bigl(\{(\inIn_2, x_2)\} \bigr) = \{(\outIn_1, x_2), (\outIn_3, 1-x_2)\}$. If we then connect $\inIn_2$ and $\outIn_1$, $0$ is again the preferred fixed point and therefore $\connect{\inIn_2}{\outIn_1} \bigl(\connect{\inIn_1}{\outIn_2}(\sys{s}) \bigr) (\emptyset) = \{(\outIn_3, 1)\}$, which implies $\connect{\inIn_2}{\outIn_1} \bigl(\connect{\inIn_1}{\outIn_2}(\sys{s}) \bigr) \neq \connect{\inIn_1}{\outIn_2} \bigl(\connect{\inIn_2}{\outIn_1}(\sys{s}) \bigr)$.

While consistently chosen fixed points are not sufficient, the next lemma shows that connection-order invariance follows from unique fixed points if $\connectSet$ permits reordering.
\begin{lemma}\label{lem:unifpConinv}
	Functional system algebras with unique fixed points and $\connectSet$ that permits reordering are connection-order invariant.
\end{lemma}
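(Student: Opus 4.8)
The goal is to show that for a functional system algebra with unique fixed points, if $\connectSet$ permits reordering then $\connect{j}{j'}\bigl(\connect{\inIn}{\outIn}(\sys{s})\bigr) = \connect{\inIn}{\outIn}\bigl(\connect{j}{j'}(\sys{s})\bigr)$ whenever both sides are defined. Since $\connectSet$ permits reordering, the hypothesis already guarantees that both double connections are defined and both end up being functions with the same input interface set $\inInSet \setminus \{\inIn, j\}$ and output interface set $\outInSet \setminus \{\outIn, j'\}$ (here I adopt the convention that $\inIn, j$ are inputs and $\outIn, j'$ are outputs). So the two sides are functions with the same domain and codomain, and it suffices to prove they agree on every input $\tuple{x} \in \mathcal{X}^{\inInSet \setminus \{\inIn, j\}}$.

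**The plan: unfold both sides to a single fixed-point condition.**

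First I would fix an arbitrary input $\tuple{x}$ for the doubly-connected system and unfold the definition of $\connectFunc$ from Definition 2.21 twice on each side. On the left side, evaluating $\connect{\inIn}{\outIn}(\sys{s})$ produces the chosen fixed point $\chosenFixedPoint{\sys{s}}{\inIn}{\outIn}{\cdot}$ for the inner connection, and then connecting $j, j'$ produces another chosen fixed point for the outer connection; the key is that the outer fixed point is computed with respect to the \emph{already-connected} system $\connect{\inIn}{\outIn}(\sys{s})$, not $\sys{s}$ directly. The crux of the argument is to show that, on both sides, one is ultimately solving for a pair of values $(x_\inIn, x_j) \in \mathcal{X}^2$ satisfying the \emph{same} system of two simultaneous equations
\begin{equation*}
	\sys{s}\bigl(\tuple{x} \cup \{(\inIn, x_\inIn), (j, x_j)\}\bigr)(\outIn) = x_\inIn \ \ \text{and} \ \ \sys{s}\bigl(\tuple{x} \cup \{(\inIn, x_\inIn), (j, x_j)\}\bigr)(j') = x_j,
\end{equation*}
and then reading off the remaining outputs via $\sys{s}$ restricted to $\outInSet \setminus \{\outIn, j'\}$. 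If I can show that the pair $(x_\inIn, x_j)$ extracted from the left composition and the pair extracted from the right composition both solve this symmetric system, and that this system has a \emph{unique} solution, then both sides feed the identical argument into $\sys{s}$ and hence produce identical outputs.

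**Where the work concentrates and the main obstacle.**

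The uniqueness-of-fixed-points hypothesis must be invoked twice in a nested fashion, and this is the delicate part. On the left side, the inner value $x_\inIn = \chosenFixedPoint{\sys{s}}{\inIn}{\outIn}{\cdots}$ is the unique solution of $\sys{s}(\cdots)(\outIn) = x_\inIn$ \emph{for a fixed value of $x_j$}, and the outer $x_j$ is then the unique solution of the equation at $j'$ \emph{after substituting that inner solution}. I must argue that this ``solve-inner-then-outer'' procedure yields a pair that satisfies both equations simultaneously, and that the simultaneous system itself has a unique solution so that the order of elimination is irrelevant. The main obstacle is establishing uniqueness of the simultaneous pair from the per-coordinate uniqueness hypothesis: the hypothesis gives unique fixed points for single-interface connections applied to systems in the algebra, so I would apply it to the intermediate system $\connect{\inIn}{\outIn}(\sys{s})$ (which lies in $\sysSpace$ by closure) to get uniqueness of $x_j$, and to $\sys{s}$ itself to get uniqueness of $x_\inIn$ given $x_j$. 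Chaining these, suppose $(x_\inIn, x_j)$ and $(x_\inIn', x_j')$ both solve the symmetric system; I would show $x_j$ must equal the unique fixed point for the $\{j,j'\}$-connection of $\connect{\inIn}{\outIn}(\sys{s})$ on input $\tuple{x}$, forcing $x_j = x_j'$, and then $x_\inIn = x_\inIn'$ follows from uniqueness of the $\{\inIn,\outIn\}$-fixed point of $\sys{s}$ at the now-determined $x_j$. The symmetric argument on the right side yields the same unique pair, so both sides agree, completing the proof.
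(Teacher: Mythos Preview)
Your proposal is correct and follows essentially the same approach as the paper. The paper's proof also fixes an input $\tuple{x}$, extracts the pairs $(x_\inIn, x_{\inIn'})$ and $(x'_\inIn, x'_{\inIn'})$ from the two compositions, and then uses uniqueness exactly as you describe: first the inner uniqueness forces each pair's $\inIn'$-coordinate to lie in $\fixedPoints{\connect{\inIn}{\outIn}(\sys{s})}{\inIn'}{\outIn'}{\tuple{x}}$, whose cardinality is one, yielding $x_{\inIn'} = x'_{\inIn'}$, and then uniqueness of the $\{\inIn,\outIn\}$-fixed point of $\sys{s}$ at the now-common value gives $x_\inIn = x'_\inIn$. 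Your framing via the symmetric two-equation system is a slightly more explicit conceptual wrapper around the same nested-uniqueness argument.
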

\begin{proof}
	Let $\inSet$ and $\mathcal{X}$ be sets, and let $(\sysSpace, \intSetFunct, \parac, \connectSet, \connectFunc) = \genFunSys{\sysSpace}{\connectSet}{\fpChoiceFun}$ be a functional $\inSet$-system algebra over $\mathcal{X}$ with unique fixed points such that $\connectSet$ permits reordering. Further let $\sys{s} \in \sysSpace$, $\{\inIn, \outIn\} \in \connectSet(\sys{s})$, and $\{\inIn', \outIn'\} \in \connectSet\bigl(\connect{\inIn}{\outIn}(\sys{s}) \bigr)$. We have to show that $\connect{\inIn'}{\outIn'}\bigl( \connect{\inIn}{\outIn}(\sys{s}) \bigr) = \connect{\inIn}{\outIn}\bigl( \connect{\inIn'}{\outIn'}(\sys{s}) \bigr)$. To this end, let $\tuple{x} \in \mathcal{X}^{\inInSet \setminus \{\inIn, \inIn'\}}$. By \autoref{def:funSysAlg}, there exist $x_{\inIn}, x_{\inIn'}, x'_{\inIn}, x'_{\inIn'} \in \mathcal{X}$ such that
	\begin{align*}
		\connect{\inIn'}{\outIn'}\bigl( \connect{\inIn}{\outIn}(\sys{s})\bigr) (\tuple{x}) &= \sys{s}\bigl(\tuple{x} \cup \bigl\{ \bigl(\inIn, x_{\inIn} \bigr), \bigl(\inIn', x_{\inIn'} \bigr) \bigr\} \bigr)|_{\outInSet \setminus \{\outIn, \outIn'\}}, \\
		\connect{\inIn}{\outIn}\bigl( \connect{\inIn'}{\outIn'}(\sys{s})\bigr) (\tuple{x}) &= \sys{s}\bigl(\tuple{x} \cup \bigl\{ \bigl(\inIn, x'_{\inIn} \bigr), \bigl(\inIn', x'_{\inIn'} \bigr) \bigr\} \bigr)|_{\outInSet \setminus \{\outIn, \outIn'\}}.
	\end{align*}
	We further have $x_{\inIn} \in \fixedPoints{\sys{s}}{\inIn}{\outIn}{\tuple{x} \cup \{ (\inIn', x_{\inIn'}) \}}$ and since fixed points are unique, $x_{\inIn}$ is chosen for connecting $\inIn$ and $\outIn$ on input $\tuple{x} \cup \{ (\inIn', x_{\inIn'}) \}$. This implies $x_{\inIn'} \in \fixedPoints{\connect{\inIn}{\outIn}(\sys{s})}{\inIn'}{\outIn'}{\tuple{x}}$. Moreover, $x'_{\inIn} \in \fixedPoints{\sys{s}}{\inIn}{\outIn}{\tuple{x} \cup \{ (\inIn', x'_{\inIn'} ) \}}$ and thus $x'_{\inIn'} \in \fixedPoints{\connect{\inIn}{\outIn}(\sys{s})}{\inIn'}{\outIn'}{\tuple{x}}$. Because $\abs[\Big]{\fixedPoints{\connect{\inIn}{\outIn}(\sys{s})}{\inIn'}{\outIn'}{\tuple{x}}} = 1$, this implies $x_{\inIn'} = x'_{\inIn'}$. Therefore, $\fixedPoints{\sys{s}}{\inIn}{\outIn}{\tuple{x} \cup \{ (\inIn', x_{\inIn'} ) \}} = \fixedPoints{\sys{s}}{\inIn}{\outIn}{\tuple{x} \cup \{ (\inIn', x'_{\inIn'} ) \}}$ and we obtain $x_{\inIn} = x'_{\inIn}$. We conclude $\connect{\inIn'}{\outIn'}\bigl( \connect{\inIn}{\outIn}(\sys{s})\bigr) (\tuple{x}) = \connect{\inIn}{\outIn}\bigl( \connect{\inIn'}{\outIn'}(\sys{s})\bigr) (\tuple{x})$.
\end{proof}

We next prove that the parallel composition operation of functional system algebras is always associative and commutative.
\begin{lemma}\label{lem:funasoccom}
	For any functional system algebra, parallel composition is associative and commutative.
\end{lemma}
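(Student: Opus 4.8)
The plan is to verify both properties directly from the defining equation~\eqref{eq:parallelCond}, treating $\sys{s}_1 \parac \sys{s}_2$ purely as the function it is defined to be. No fixed points or the choice function $\fpChoiceFun$ enter the argument, so the whole proof is just an unfolding of how $\parac$ acts pointwise, combined with the associativity and commutativity of set union.

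For commutativity, I would first observe that $\sys{s}_1 \parac \sys{s}_2$ is defined exactly when the interface label sets are disjoint, a symmetric condition, and that its domain $\mathcal{X}^{\inInSet_1 \cup \inInSet_2}$ and codomain $\mathcal{X}^{\outInSet_1 \cup \outInSet_2}$ are unchanged under swapping the two systems, since union is commutative. It then remains to compare values: for any $\tuple{x} \in \mathcal{X}^{\inInSet_1 \cup \inInSet_2}$ and any output label $\outIn$, that label lies in exactly one of $\outInSet_1, \outInSet_2$, and \eqref{eq:parallelCond} assigns $(\sys{s}_1 \parac \sys{s}_2)(\tuple{x})(\outIn)$ the value $\sys{s}_j(\tuple{x}|_{\inInSet_j})(\outIn)$ for the unique $j$ with $\outIn \in \outInSet_j$; the same value is assigned by $\sys{s}_2 \parac \sys{s}_1$, so the two functions coincide.

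For associativity, consider three systems $\sys{s}_1 \in \sysIntSpace{\inInSet_1}{\outInSet_1}$, $\sys{s}_2 \in \sysIntSpace{\inInSet_2}{\outInSet_2}$, and $\sys{s}_3 \in \sysIntSpace{\inInSet_3}{\outInSet_3}$ with pairwise disjoint interface labels (the condition under which all composites are defined is symmetric in the grouping). Both $(\sys{s}_1 \parac \sys{s}_2) \parac \sys{s}_3$ and $\sys{s}_1 \parac (\sys{s}_2 \parac \sys{s}_3)$ are functions $\mathcal{X}^{\inInSet_1 \cup \inInSet_2 \cup \inInSet_3} \to \mathcal{X}^{\outInSet_1 \cup \outInSet_2 \cup \outInSet_3}$, the domains and codomains agreeing because union is associative. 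To compare values, fix $\tuple{x}$ in the domain and an output label $\outIn$, say $\outIn \in \outInSet_k$ for a unique $k \in \{1,2,3\}$. Applying \eqref{eq:parallelCond} once and then a second time to the inner composite, and using the transitivity of restriction $(\tuple{x}|_A)|_B = \tuple{x}|_B$ whenever $B \subseteq A$ (immediate from the definition $f|_{X'} = \{(x,y) \in f \mid x \in X'\}$ in the preliminaries), I would show that both bracketings evaluate to $\sys{s}_k(\tuple{x}|_{\inInSet_k})(\outIn)$. For instance, on the left side with $k \in \{1,2\}$ one first peels off $\sys{s}_3$ to obtain $(\sys{s}_1 \parac \sys{s}_2)(\tuple{x}|_{\inInSet_1 \cup \inInSet_2})(\outIn)$ and then applies the formula to the inner composite, the nested restrictions collapsing to $\tuple{x}|_{\inInSet_k}$; the case $k = 3$ is immediate, and the right side is handled symmetrically. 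Since the value agrees for both bracketings at every output label, the two functions are equal.

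The main obstacle here is purely bookkeeping rather than conceptual: one must track which output set a given label belongs to and check that the nested restrictions simplify correctly under the two groupings. The only auxiliary fact needed beyond the defining equation is the transitivity of restriction, which follows directly from its set-theoretic definition; everything else reduces to the associativity and commutativity of set union together with a case distinction on the output label.
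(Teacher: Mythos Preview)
Your proposal is correct and follows essentially the same route as the paper's proof: both arguments unfold equation~\eqref{eq:parallelCond} pointwise, do a case distinction on which $\outInSet_k$ the output label lies in, and reduce both bracketings (respectively both orderings) to $\sys{s}_k(\tuple{x}|_{\inInSet_k})(\outIn)$. Your explicit mention of the transitivity of restriction is slightly more detailed than the paper's version, but the structure of the argument is the same.
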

\begin{proof}
	Let $\inSet$ and $\mathcal{X}$ be sets, and let $(\sysSpace, \intSetFunct, \parac, \connectSet, \connectFunc) = \genFunSys{\sysSpace}{\connectSet}{\fpChoiceFun}$ be a functional $\inSet$-system algebra over $\mathcal{X}$. Further let $\inInSet_1, \inInSet_2, \inInSet_3, \outInSet_1, \outInSet_2, \outInSet_3 \subseteq \inSet$ be pairwise disjoint, let $\sys{s}_j \in \sysIntSpace{\inInSet_j}{\outInSet_j}$ for $j \in \{1,2,3\}$, and let $\tuple{x} \in \mathcal{X}^{\inInSet_1 \cup \inInSet_2 \cup \inInSet_3}$ and $\outIn \in \outInSet_k$ for some $k \in \{1,2,3\}$. If $k = 3$, we have by equation~\eqref{eq:parallelCond}, $\bigl((\sys{s}_1 \parac \sys{s}_2) \parac \sys{s}_3 \bigr)(\tuple{x})(\outIn) = \sys{s}_3 \bigl(\tuple{x}|_{\inInSet_3} \bigr)(\outIn)$. If $k \neq 3$, we have $\bigl((\sys{s}_1 \parac \sys{s}_2) \parac \sys{s}_3 \bigr)(\tuple{x})(\outIn) = (\sys{s}_1 \parac \sys{s}_2) \bigl(\tuple{x}|_{\inInSet_1 \cup \inInSet_2} \bigr)(\outIn) = \sys{s}_k \bigl(\tuple{x}|_{\inInSet_k} \bigr)(\outIn)$. Similarly, $\bigl(\sys{s}_1 \parac (\sys{s}_2 \parac \sys{s}_3) \bigr)(\tuple{x})(\outIn) = \sys{s}_k \bigl(\tuple{x}|_{\inInSet_k} \bigr)(\outIn)$. Hence, $(\sys{s}_1 \parac \sys{s}_2) \parac \sys{s}_3 = \sys{s}_1 \parac (\sys{s}_2 \parac \sys{s}_3)$, i.e., $\parac$ is associative. Commutativity follows directly from equation~\eqref{eq:parallelCond}: for $\tuple{x} \in \mathcal{X}^{\inInSet_1 \cup \inInSet_2}$ and $\outIn \in \outInSet_k$ for some $k \in \{1,2\}$, we have $(\sys{s}_1 \parac \sys{s}_2)(\tuple{x})(\outIn) = \sys{s}_k \bigl(\tuple{x}|_{\inInSet_k} \bigr)(\outIn) = (\sys{s}_2 \parac \sys{s}_1)(\tuple{x})(\outIn)$.
\end{proof}

We further show that connection-order invariance implies composition-order invariance for functional system algebras with consistently chosen fixed points.
\begin{lemma}\label{lem:coninvCompinvCons}
	A functional system algebra with consistently chosen fixed points is composition-order invariant if and only if it is connection-order invariant.
\end{lemma}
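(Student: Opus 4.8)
The forward implication is immediate and requires no real work: by \autoref{def:sysAlg}'s companion definition, a composition-order invariant system algebra is by construction connection-order invariant (composition-order invariance is defined as connection-order invariance together with two extra conditions). So the entire content is in the converse, and my plan is to reduce that converse to verifying a single identity.

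Recall that composition-order invariance demands three things beyond the bare algebra: connection-order invariance, associativity and commutativity of $\parac$, and the ``mixing'' identity $\connect{\inIn}{\outIn}(\sys{s}_1) \parac \sys{s}_2 = \connect{\inIn}{\outIn}(\sys{s}_1 \parac \sys{s}_2)$ for interface-disjoint $\sys{s}_1, \sys{s}_2$ with $\{\inIn, \outIn\} \in \connectSet(\sys{s}_1)$. Under the hypothesis of the lemma, the first of these is exactly what we are assuming, and \autoref{lem:funasoccom} delivers associativity and commutativity of $\parac$ for free in any functional system algebra. Hence the ``if'' direction collapses to proving the mixing identity, and this is the only place where consistently chosen fixed points will be used.

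To establish the mixing identity I would fix $\sys{s}_1 \in \sysIntSpace{\inInSet_1}{\outInSet_1}$, $\sys{s}_2 \in \sysIntSpace{\inInSet_2}{\outInSet_2}$ with disjoint interface sets, $\inIn \in \inInSet_1$, $\outIn \in \outInSet_1$, and compare the two functions pointwise on an arbitrary $\tuple{z} \in \mathcal{X}^{(\inInSet_1 \cup \inInSet_2) \setminus \{\inIn\}}$; since $\inIn \notin \inInSet_2$ and $\outIn \notin \outInSet_2$, both sides indeed share domain $\mathcal{X}^{(\inInSet_1 \cup \inInSet_2) \setminus \{\inIn\}}$ and codomain $\mathcal{X}^{(\outInSet_1 \cup \outInSet_2) \setminus \{\outIn\}}$. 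The structural observation driving the argument is that the fixed-point equation for connecting $\inIn$ to $\outIn$ inside $\sys{s}_1 \parac \sys{s}_2$ sees only $\sys{s}_1$: by \eqref{eq:parallelCond} the output of $\sys{s}_1 \parac \sys{s}_2$ at $\outIn \in \outInSet_1$ equals $\sys{s}_1(\,\cdot\,|_{\inInSet_1})(\outIn)$, whence
\begin{equation*}
	\fixedPoints{\sys{s}_1 \parac \sys{s}_2}{\inIn}{\outIn}{\tuple{z}} = \fixedPoints{\sys{s}_1}{\inIn}{\outIn}{\tuple{z}|_{\inInSet_1 \setminus \{\inIn\}}}.
\end{equation*}
Fixing a $\fpChoiceFun$ witnessing consistently chosen fixed points (so that it reproduces $\connectFunc$), the coincidence of these two fixed-point sets forces the same chosen value $c \coloneqq \chosenFixedPoint{\sys{s}_1 \parac \sys{s}_2}{\inIn}{\outIn}{\tuple{z}} = \chosenFixedPoint{\sys{s}_1}{\inIn}{\outIn}{\tuple{z}|_{\inInSet_1 \setminus \{\inIn\}}}$ to be plugged in on both sides.

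Once the chosen fixed points are identified, the remainder is a routine unfolding of \autoref{def:funSysAlg} via \eqref{eq:parallelCond}: on the output interfaces in $\outInSet_1 \setminus \{\outIn\}$ both sides reduce to $\sys{s}_1\bigl(\tuple{z}|_{\inInSet_1 \setminus \{\inIn\}} \cup \{(\inIn, c)\}\bigr)$ restricted accordingly, while on the output interfaces in $\outInSet_2$ both sides reduce to $\sys{s}_2(\tuple{z}|_{\inInSet_2})$, because the connected input $\inIn$ belongs to $\sys{s}_1$ and never reaches $\sys{s}_2$. This yields the mixing identity and hence composition-order invariance. The one genuine obstacle is precisely the fixed-point step: one must guarantee that the two sides \emph{select the same} fixed point, not merely fixed points from the same set, and this is exactly what consistently chosen fixed points provides — absent such consistency, the choices on the two sides could differ and the identity would fail, in the same spirit as the counterexamples accompanying \autoref{lem:unifpConinv}.
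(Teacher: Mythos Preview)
Your proposal is correct and follows essentially the same approach as the paper: the forward direction is immediate from the definition, the backward direction invokes \autoref{lem:funasoccom} for associativity and commutativity, and then reduces to the mixing identity, which is verified by observing that $\fixedPoints{\sys{s}_1 \parac \sys{s}_2}{\inIn}{\outIn}{\tuple{z}} = \fixedPoints{\sys{s}_1}{\inIn}{\outIn}{\tuple{z}|_{\inInSet_1 \setminus \{\inIn\}}}$, using consistently chosen fixed points to equate the selected values, and then unfolding \eqref{eq:parallelCond} on the two output-interface cases.
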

\begin{proof}
	By definition, a functional system algebra that is not connection-order invariant is not composition-order invariant. For the other direction, let $(\sysSpace, \intSetFunct, \parac, \connectSet, \connectFunc) = \genFunSys{\sysSpace}{\connectSet}{\fpChoiceFun}$ be a connection-order invariant functional $\inSet$-system algebra over~$\mathcal{X}$ with consistently chosen fixed points. Associativity and commutativity of~$\parac$ follows from \autoref{lem:funasoccom}. It remains to show that $\connect{\inIn}{\outIn}(\sys{s}_1) \parac \sys{s}_2 = \connect{\inIn}{\outIn}(\sys{s}_1 \parac \sys{s}_2)$ for all $\sys{s}_1 \in \sysIntSpace{\inInSet_1}{\outInSet_1}$, $\sys{s}_2 \in \sysIntSpace{\inInSet_2}{\outInSet_2}$, and $\{ \inIn, \outIn \} \in \connectSet(\sys{s}_1)$ such that $\intSetFunct(\sys{s}_1) \cap \intSetFunct(\sys{s}_2) = \emptyset$. To this end, let $\tuple{x} \in \mathcal{X}^{(\inInSet_1 \cup \inInSet_2) \setminus \{\inIn\}}$, $\outIn_1 \in \outInSet_1 \setminus \{\outIn\}$, and $\outIn_2 \in \outInSet_2$. Because inputs to~$\sys{s}_1$ do not affect the outputs of~$\sys{s}_2$, we have
	\begin{equation*}
		\bigl(\connect{\inIn}{\outIn}(\sys{s}_1) \parac \sys{s}_2 \bigr) (\tuple{x}) (\outIn_2) = \sys{s}_2 (\tuple{x}|_{\inInSet_2})(\outIn_2) = \bigl(\connect{\inIn}{\outIn}(\sys{s}_1 \parac \sys{s}_2) \bigr) (\tuple{x}) (\outIn_2).
	\end{equation*}
	By the same reasoning, the set of fixed points for connecting interfaces~$\inIn$ and $\outIn$ is not influenced by~$\sys{s}_2$, i.e., $\fixedPoints{\sys{s}_1 \parac \sys{s}_2}{\inIn}{\outIn}{\tuple{x}} = \fixedPoints{\sys{s}_1}{\inIn}{\outIn}{\tuple{x}|_{\inInSet_1}}$. Consistently chosen fixed points imply $ \chosenFixedPoint{\sys{s}_1 \parac \sys{s}_2}{\inIn}{\outIn}{\tuple{x}} = \chosenFixedPoint{\sys{s}_1}{\inIn}{\outIn}{\tuple{x}|_{\inInSet_1}}$ and hence
	\begin{align*}
		\bigl(\connect{\inIn}{\outIn}(\sys{s}_1) \parac \sys{s}_2 \bigr) (\tuple{x}) (\outIn_1) &= \connect{\inIn}{\outIn}(\sys{s}_1) (\tuple{x}|_{\inInSet_1}) (\outIn_1)
		= \sys{s}_1 \Bigl(\tuple{x}|_{\inInSet_1} \cup \bigl\{ \bigl(\inIn, \chosenFixedPoint{\sys{s}_1}{\inIn}{\outIn}{\tuple{x}|_{\inInSet_1}} \bigr) \bigr\} \Bigr) (\outIn_1) \\
		&= \sys{s}_1 \Bigl(\tuple{x}|_{\inInSet_1} \cup \Bigl\{ \Bigl(\inIn, \chosenFixedPoint{\sys{s}_1 \parac \sys{s}_2}{\inIn}{\outIn}{\tuple{x}} \Bigr) \Bigr\} \Bigr) (\outIn_1) \\
		&= (\sys{s}_1 \parac \sys{s}_2) \Bigl(\tuple{x} \cup \Bigl\{ \Bigl(\inIn, \chosenFixedPoint{\sys{s}_1 \parac \sys{s}_2}{\inIn}{\outIn}{\tuple{x}} \Bigr) \Bigr\} \Bigr) (\outIn_1) \\
		&= \bigl(\connect{\inIn}{\outIn}(\sys{s}_1 \parac \sys{s}_2) \bigr) (\tuple{x}) (\outIn_1).
	\end{align*}
	Therefore, $\connect{\inIn}{\outIn}(\sys{s}_1) \parac \sys{s}_2 = \connect{\inIn}{\outIn}(\sys{s}_1 \parac \sys{s}_2)$ and hence composition-order invariance holds.
\end{proof}

Since unique fixed points imply consistently chosen fixed points, we can combine \autoref{lem:unifpConinv} and \autoref{lem:coninvCompinvCons} to obtain the following result.
\begin{theorem}\label{lem:unifpCompinv}
	Functional system algebras with unique fixed points and $\connectSet$ that permits reordering are composition-order invariant.
\end{theorem}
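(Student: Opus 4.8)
The plan is to derive this theorem as an immediate consequence of the two preceding lemmas, with the only new ingredient being the observation that unique fixed points imply consistently chosen fixed points. So the first thing I would establish is precisely this bridging implication.

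To see that unique fixed points imply consistently chosen fixed points, I would argue as follows. By \autoref{def:funSysAlg}, any admissible choice function satisfies $\chosenFixedPoint{\sys{s}}{\inIn}{\outIn}{\tuple{x}} \in \fixedPoints{\sys{s}}{\inIn}{\outIn}{\tuple{x}}$. When $\bigl\lvert \fixedPoints{\sys{s}}{\inIn}{\outIn}{\tuple{x}} \bigr\rvert = 1$ for all relevant $\sys{s}$, $\{\inIn, \outIn\}$, and $\tuple{x}$, this membership forces $\chosenFixedPoint{\sys{s}}{\inIn}{\outIn}{\tuple{x}}$ to equal the unique element of that singleton. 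Hence the choice function $\fpChoiceFun$ realizing the algebra is uniquely determined, and whenever $\fixedPoints{\sys{s}}{\inIn}{\outIn}{\tuple{x}} = \fixedPoints{\sys{s}'}{\inIn}{\outIn}{\tuple{x}'}$, both chosen fixed points coincide with the single shared element; the consistency condition of consistently chosen fixed points thus holds trivially.

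With that in hand, I would finish in two steps. Since the algebra has unique fixed points and $\connectSet$ permits reordering, \autoref{lem:unifpConinv} gives that it is connection-order invariant. Combining this with the consistency just established, \autoref{lem:coninvCompinvCons} immediately yields composition-order invariance.

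I expect no real obstacle here: the entire content is the bookkeeping of the bridging implication, which in turn reduces to the triviality that selecting an element from a one-element set leaves no freedom. The theorem is therefore a clean corollary of the structural results already proved, and the proof should be only a couple of lines invoking the two lemmas.
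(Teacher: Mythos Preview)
Your proposal is correct and matches the paper's approach exactly: the paper also observes that unique fixed points imply consistently chosen fixed points and then combines \autoref{lem:unifpConinv} with \autoref{lem:coninvCompinvCons}. There is nothing to add.
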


\subsection{Merging Input and Output Interfaces}
It is often desirable to have a system algebra that allows arbitrary interface connections, as we have assumed in \autoref{sec:bcImp}. A functional system algebra does not provide this because one can only connect input interfaces to output interfaces. A single interface connection in a functional system algebra also allows only communication in one direction, while one might prefer to connect systems to each other such that they can communicate in both directions. These drawbacks can be mitigated easily by considering an interface as a pair of an input interface and an output interface and by defining interface connection as connecting both interfaces of the corresponding pair: Let $(\sysSpace, \intSetFunct, \parac, \connectSet, \connectFunc)$ be a functional $(\inSet \times \{0,1\})$-system algebra that allows connecting arbitrary input interface to arbitrary output interfaces (i.e., $\connectSet(\sys{s}) = \bigl\{ \{\inIn, \outIn \} \mid \inIn \in \inInSet, \outIn \in \outInSet \bigr\}$ for $\sys{s} \in \sysIntSpace{\inInSet}{\outInSet}$). We then define the $\inSet$-system algebra $(\sysSpace', \intSetFunct', \parac', \connectSet', \connectFunc')$ via $\sysSpace' \coloneqq \{\sys{s} \in \sysIntSpace{\mathcal{L} \times \{1\}}{\mathcal{L} \times \{0\}} \mid \mathcal{L} \subseteq \inSet \}$, $\intSetFunct'(\sys{s}) = \mathcal{L}$ for $\sys{s} \in \sysIntSpace{\mathcal{L} \times \{1\}}{\mathcal{L} \times \{0\}}$, ${\parac'} = {\parac}$, $\connectSet(\sys{s}) = \bigl\{ \{\inIn, \inIn' \} \mid \inIn, \inIn' \in \intSetFunct(\sys{s}) \bigr\}$, $\connect{\inIn}{\inIn'}'(\sys{s}) = \connect{(\inIn', 1)}{(\inIn, 0)} \bigl(\connect{(\inIn, 1)}{(\inIn', 0)}(\sys{s}) \bigr)$. See \autoref{fig:mergedConnect} for a depiction of a system and the connection operation in $(\sysSpace', \intSetFunct', \parac', \connectSet', \connectFunc')$. Note that the system algebra $(\sysSpace', \intSetFunct', \parac', \connectSet', \connectFunc')$ is composition-order invariant if $(\sysSpace, \intSetFunct, \parac, \connectSet, \connectFunc)$ is.

\begin{figure}[t]
	\centering
	\begin{tikzpicture}[system/.style={rectangle,draw,minimum width=5cm,minimum height=1.5cm}]
		\def\delta{0.75mm}
		\def\dasdelta{2.5*\delta}
		
		\draw node (sys) [system] {$\sys{s}$};

		\draw[->,thick] ($(sys.north) - (12mm-\delta,0)$) -- ($(sys.north) - (12mm-\delta,-4mm)$);
		\draw[<-,thick] ($(sys.north) - (12mm+\delta,0)$) -- ($(sys.north) - (12mm+\delta,-4mm)$);

		\draw[<-,thick] ($(sys.north) + (12mm-\delta,0)$) -- ($(sys.north) + (12mm-\delta,4mm)$);
		\draw[->,thick] ($(sys.north) + (12mm+\delta,0)$) -- ($(sys.north) + (12mm+\delta,4mm)$);
		
		\draw[->,thick] ($(sys.east) + (0,\delta)$) -- ($(sys.east) + (15mm,\delta)$);
		\draw[<-,thick] ($(sys.east) + (0,-\delta)$) -- ($(sys.east) + (15mm,-\delta)$);
		
		\draw ($(sys.north) - (12mm-\delta,-4mm)$) -- ($(sys.north) - (12mm-\delta,-10mm+\delta)$) -- ($(sys.north) + (12mm-\delta,10mm-\delta)$) -- ($(sys.north) + (12mm-\delta,4mm)$);
		\draw ($(sys.north) - (12mm+\delta,-4mm)$) -- ($(sys.north) - (12mm+\delta,-10mm-\delta)$) -- ($(sys.north) + (12mm+\delta,10mm+\delta)$) -- ($(sys.north) + (12mm+\delta,4mm)$);
		\draw[dashed] ($(sys.north) - (2mm,0)$) -- ($(sys.north) - (2mm,-4.5mm)$) -- ($(sys.north) - (12mm-\dasdelta,-4.5mm)$) -- ($(sys.north) - (12mm-\dasdelta,-10mm+\dasdelta)$) -- ($(sys.north) + (12mm-\dasdelta,10mm-\dasdelta)$) -- ($(sys.north) + (12mm-\dasdelta,4.5mm)$) -- ($(sys.north) + (2mm,4.5mm)$) -- ($(sys.north) + (2mm,0)$);
		\draw[dashed] ($(sys.north) - (22mm,0)$) -- ($(sys.north) - (22mm,-4.5mm)$) -- ($(sys.north) - (12mm+\dasdelta,-4.5mm)$) -- ($(sys.north) - (12mm+\dasdelta,-10mm-\dasdelta)$) -- ($(sys.north) + (12mm+\dasdelta,10mm+\dasdelta)$) -- ($(sys.north) + (12mm+\dasdelta,4.5mm)$) -- ($(sys.north) + (22mm,4.5mm)$) -- ($(sys.north) + (22mm,0)$);
		
		\draw[dashed] ($(sys.east) + (0,5mm)$) -- ($(sys.east) + (9mm,5mm)$) -- ($(sys.east) + (9mm,\dasdelta)$) -- ($(sys.east) + (15mm,\dasdelta)$) -- ($(sys.east) + (15mm,-\dasdelta)$) -- ($(sys.east) + (9mm,-\dasdelta)$) -- ($(sys.east) + (9mm,-5mm)$) -- ($(sys.east) + (0,-5mm)$);

		\node[left] at ($(sys.north) - (12mm,-2mm)$) {\scriptsize$(\inIn_1, 1)$};
		\node[right] at ($(sys.north) - (12mm,-2mm)$) {\scriptsize$(\inIn_1, 0)$};
		\node[left] at ($(sys.north) - (13.5mm,-9mm)$) {$\inIn_1$};
		
		\node[left] at ($(sys.north) + (12mm,2mm)$) {\scriptsize$(\inIn_2, 1)$};
		\node[right] at ($(sys.north) + (12mm,2mm)$) {\scriptsize$(\inIn_2, 0)$};
		\node[right] at ($(sys.north) + (14mm,9mm)$) {$\inIn_2$};
		
		\node[below right] at ($(sys.east) - (\delta,0)$) {\scriptsize$(\inIn_3, 1)$};
		\node[above right] at ($(sys.east) - (\delta,0)$) {\scriptsize$(\inIn_3, 0)$};
		\node[right] at ($(sys.east) + (15mm,0)$) {$\inIn_3$};
	\end{tikzpicture}
	\caption{The system~$\connect{\inIn_1}{\inIn_2}'(\sys{s})$ in the system algebra with merged input and output interfaces.}
	\label{fig:mergedConnect}
\end{figure}

By doing this, one can use functional system algebras as the ones we develop in later sections to instantiate system algebras with undirected interfaces.

\begin{remark}
	As defined above, $\connectSet'$ allows connecting an interface to itself. While this is well-defined, one might want to exclude it since it is typically not needed.
\end{remark}

\section{The System Algebras of Monotone and Continuous Systems}
By \autoref{thm:monotoneFP} and \autoref{thm:continuousFP}, monotone functions on CPOs and $\omega$-continuous functions on $\omega$-CPOs have least fixed points. We can use this to define a functional system algebra consisting of such functions. While fixed points of monotone functions are well-studied in domain theory, we still need to show that choosing the least fixed point yields a functional system algebra with the desired properties. To obtain a system algebra, we need to prove that the set of system is closed under interface connections defined via choosing the least fixed point. We prove that this is the case both for monotone and continuous functions. Furthermore, we show that composition-order invariance holds for both system algebras under the additional assumption on the CPO that every nonempty chain has an infimum. Finally, we describe Kahn networks as a special case of the system algebra of continuous systems.

\subsection{Monotone Systems}\label{sec:monotoneSys}
Let $(\mathcal{X}, \preceq)$ be a CPO. Recall that for a set $\inInSet$, we also write~$\preceq$ for the partial order on $\mathcal{X}^{\inInSet}$ where for $\tuple{x}, \tuple{y} \in \mathcal{X}^{\inInSet}$, $\tuple{x} \preceq \tuple{y}$ if $\tuple{x}(\inIn) \preceq \tuple{y}(\inIn)$ for all $\inIn \in \inInSet$. Monotonicity is therefore also defined for functions on such tuples.
Let $\inSet$ be a set and define for finite disjoint~$\inInSet, \outInSet \subseteq \inSet$,
\begin{equation*}
	\monoSysIntSpace{\inInSet}{\outInSet} \coloneqq \bigl\{ \sys{s} \colon \mathcal{X}^{\inInSet} \to \mathcal{X}^{\outInSet} \mid \sys{s} \text{ is monotone} \bigr\}.
\end{equation*}
We can use the existence of least fixed points to define a system algebra. Before we do so, we need the following simple lemma.
\begin{lemma} \label{lem:monoSysFixpoint}
	Let $\inInSet, \outInSet \subseteq \inSet$ be finite disjoint sets, let $\sys{s} \in \monoSysIntSpace{\inInSet}{\outInSet}$, and let $\inIn \in \inInSet$, $\outIn \in \outInSet$. Then, for all $\tuple{x} \in \mathcal{X}^{\inInSet \setminus \{\inIn\}}$, there exists a least $x_{\inIn} \in \mathcal{X}$ such that $\sys{s}\bigl(\tuple{x} \cup \{(\inIn, x_{\inIn})\} \bigr)(\outIn) = x_{\inIn}$.
\end{lemma}
\begin{proof}
	For $\tuple{x} \in \mathcal{X}^{\inInSet \setminus \{\inIn\}}$, let $f \colon \mathcal{X} \to \mathcal{X}, x_{\inIn} \mapsto \sys{s}(\tuple{x} \cup \{(\inIn, x_{\inIn})\})(\outIn)$ and note that the fixed points of~$f$ precisely correspond to the values $x_{\inIn} \in \mathcal{X}$ such that $\sys{s}\bigl(\tuple{x} \cup \{(\inIn, x_{\inIn})\} \bigr)(\outIn) = x_{\inIn}$. Since $\sys{s}$ is monotone, so is~$f$, and therefore $f$ has a least fixed point by \autoref{thm:monotoneFP}.
\end{proof}

\begin{definition}
	We define $\genFunSys{\monoSysSpace}{\connectSet}{\fpChoiceFun}$, the \emph{$\inSet$-system algebra of monotone systems over $\mathcal{X}$}, as follows: For $\sys{s} \in \monoSysIntSpace{\inInSet}{\outInSet}$, $\connectSet(\sys{s}) \coloneqq \bigl\{ \{\inIn, \outIn \} \mid \inIn \in \inInSet, \outIn \in \outInSet \bigr\}$ and $\chosenFixedPoint{\sys{s}}{\inIn}{\outIn}{\tuple{x}}$ is the least element of $\fixedPoints{\sys{s}}{\inIn}{\outIn}{\tuple{x}}$ for all $\inIn \in \inInSet$, $\outIn \in \outInSet$, and $\tuple{x} \in \mathcal{X}^{\inInSet \setminus \{\inIn\}}$.
\end{definition}

For this to actually be a functional system algebra, we need that the parallel composition of systems and connecting interfaces of a system again yield a system in our algebra, i.e., a monotone function. While this is straightforward for parallel composition, it is nontrivial for interface connections. The essence of the proof we provide below is showing that $\chosenFixedPoint{\sys{s}}{\inIn}{\outIn}{\tuple{x}_1} \preceq \chosenFixedPoint{\sys{s}}{\inIn}{\outIn}{\tuple{x}_2}$ whenever $\tuple{x}_1 \preceq \tuple{x}_2$. To this end, we exploit the constructive nature of \autoref{thm:monotoneFP} that can be used to find the fixed points $\chosenFixedPoint{\sys{s}}{\inIn}{\outIn}{\tuple{x}_1}$ and $\chosenFixedPoint{\sys{s}}{\inIn}{\outIn}{\tuple{x}_2}$.
\begin{theorem}\label{thm:monoIsAlg}
	$(\monoSysSpace, \intSetFunct, \parac, \connectSet, \connectFunc) = \genFunSys{\monoSysSpace}{\connectSet}{\fpChoiceFun}$ is a functional $\inSet$-system algebra over $\mathcal{X}$.
\end{theorem}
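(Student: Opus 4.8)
The plan is to verify the three requirements of \autoref{def:funSysAlg} that are not yet in place: that the fixed-point sets are nonempty (so choosing least fixed points is well-defined), that $\monoSysSpace$ is closed under $\parac$ and $\connectFunc$, and that the compatibility condition relating $\connectSet(\sys{s}_1 \parac \sys{s}_2)$ to $\connectSet(\sys{s}_1)$ and $\connectSet(\sys{s}_2)$ holds. Nonemptiness is immediate from \autoref{lem:monoSysFixpoint}, which in fact yields a \emph{least} fixed point, so that $\chosenFixedPoint{\sys{s}}{\inIn}{\outIn}{\tuple{x}}$ is well-defined. The compatibility condition is a routine set-theoretic check: since $\sys{s}_1 \parac \sys{s}_2 \in \monoSysIntSpace{\inInSet_1 \cup \inInSet_2}{\outInSet_1 \cup \outInSet_2}$, the pairs in $\connectSet(\sys{s}_1 \parac \sys{s}_2)$ are exactly the input-output pairs of the combined interface, and pairwise disjointness of the four interface sets forces any such pair lying inside $\intSetFunct(\sys{s}_j)$ to be an input-output pair of $\sys{s}_j$, and conversely. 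Closure under parallel composition is likewise straightforward: by \eqref{eq:parallelCond} each output coordinate of $\sys{s}_1 \parac \sys{s}_2$ is obtained by restricting the input to $\inInSet_j$ and applying the monotone $\sys{s}_j$, and restriction and composition of monotone maps are monotone, so $\sys{s}_1 \parac \sys{s}_2 \in \monoSysSpace$.

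The heart of the argument is closure under interface connection, i.e. that $\connect{\inIn}{\outIn}(\sys{s})$ is monotone. Since $\connect{\inIn}{\outIn}(\sys{s})(\tuple{x}) = \sys{s}\bigl(\tuple{x} \cup \{(\inIn, \chosenFixedPoint{\sys{s}}{\inIn}{\outIn}{\tuple{x}})\}\bigr)|_{\outInSet \setminus \{\outIn\}}$ and both $\sys{s}$ and restriction are monotone, it suffices to show that the chosen least fixed point is monotone in its parameter, that is, $\chosenFixedPoint{\sys{s}}{\inIn}{\outIn}{\tuple{x}_1} \preceq \chosenFixedPoint{\sys{s}}{\inIn}{\outIn}{\tuple{x}_2}$ whenever $\tuple{x}_1 \preceq \tuple{x}_2$; the two resulting inputs to $\sys{s}$ then satisfy $\tuple{x}_1 \cup \{(\inIn, \chosenFixedPoint{\sys{s}}{\inIn}{\outIn}{\tuple{x}_1})\} \preceq \tuple{x}_2 \cup \{(\inIn, \chosenFixedPoint{\sys{s}}{\inIn}{\outIn}{\tuple{x}_2})\}$ coordinatewise.

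To prove this fixed-point inequality I would fix $\tuple{x}_1 \preceq \tuple{x}_2$, set $f_k(y) \coloneqq \sys{s}(\tuple{x}_k \cup \{(\inIn, y)\})(\outIn)$ for $k \in \{1,2\}$, and note that monotonicity of $\sys{s}$ gives both that each $f_k$ is monotone and that $f_1(y) \preceq f_2(y)$ for every $y$. Invoking the constructive description in \autoref{thm:monotoneFP}, let $y^{(k)}_\alpha$ denote the transfinite iterates of $f_k$, with $y^{(k)}_0 = \min(\mathcal{X})$, successor step $f_k$, and suprema at limit ordinals. I would then show $y^{(1)}_\alpha \preceq y^{(2)}_\alpha$ for all ordinals $\alpha$ by transfinite induction (\autoref{thm:transfInd}): the base case is equality of least elements; at a successor, $y^{(1)}_{\alpha+1} = f_1(y^{(1)}_\alpha) \preceq f_1(y^{(2)}_\alpha) \preceq f_2(y^{(2)}_\alpha) = y^{(2)}_{\alpha+1}$, using monotonicity of $f_1$ and then $f_1 \preceq f_2$; at a limit, the inductive inequalities together with the non-decreasing property of the chains make $y^{(2)}_\alpha$ an upper bound of $\{y^{(1)}_\beta \mid \beta < \alpha\}$, so the respective suprema are ordered. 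Since both iterations stabilize at their least fixed points and remain constant thereafter, comparing them at any ordinal $\hat{\alpha}$ exceeding both stabilization points gives $\chosenFixedPoint{\sys{s}}{\inIn}{\outIn}{\tuple{x}_1} = y^{(1)}_{\hat{\alpha}} \preceq y^{(2)}_{\hat{\alpha}} = \chosenFixedPoint{\sys{s}}{\inIn}{\outIn}{\tuple{x}_2}$.

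The main obstacle is precisely this last step: the chosen fixed point is defined by a transfinite construction rather than a closed formula, so its monotonicity in the parameter cannot be read off directly and must be tracked stage by stage through two coupled iterations. The only mild subtlety is that the two sequences may reach their fixed points at different ordinals, which is handled by comparing them at a common ordinal beyond both stabilization points, relying on the monotonicity of the iterates asserted in \autoref{thm:monotoneFP}.
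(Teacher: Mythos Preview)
Your proposal is correct and follows essentially the same approach as the paper: reduce closure under $\connectFunc$ to monotonicity of $\tuple{x} \mapsto \chosenFixedPoint{\sys{s}}{\inIn}{\outIn}{\tuple{x}}$, and prove the latter by a transfinite induction comparing the two iteration sequences from \autoref{thm:monotoneFP}, then compare at an ordinal beyond both stabilization points. The only cosmetic difference is that the paper performs the successor step in one application of monotonicity of $\sys{s}$ to the combined input $\tuple{x}_k \cup \{(\inIn, x_k^{\alpha})\}$, whereas you factor it as monotonicity of $f_1$ followed by $f_1 \preceq f_2$; your write-up is also a bit more explicit about the auxiliary checks (nonemptiness of $\fixedPoints{\sys{s}}{\inIn}{\outIn}{\tuple{x}}$ via \autoref{lem:monoSysFixpoint} and the $\connectSet$ compatibility condition), which the paper treats as implicit.
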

\begin{proof}
	Since the relation~$\preceq$ is defined componentwise, it is clear from the definition of parallel composition by equation~\eqref{eq:parallelCond} that $\monoSysSpace$ is closed under $\parac$. It remains to show that $\monoSysSpace$ is also closed under connecting interfaces. To this end, let $\inInSet, \outInSet \subseteq \inSet$ be finite disjoint sets and let $\sys{s} \in \monoSysIntSpace{\inInSet}{\outInSet}$. Further let $\inIn \in \inInSet$, $\outIn \in \outInSet$. We have to show that $\connect{\inIn}{\outIn}(\sys{s}) \in \monoSysIntSpace{\inInSet \setminus \{\inIn\}}{\outInSet \setminus \{\outIn\}}$. To show that $\connect{\inIn}{\outIn}(\sys{s})$ is monotone, let $\tuple{x}_1, \tuple{x}_2 \in \mathcal{X}^{\inInSet \setminus \{\inIn\}}$, such that $\tuple{x}_1 \preceq \tuple{x}_2$. We then have $\connect{\inIn}{\outIn}(\sys{s})(\tuple{x}_k) = \sys{s}\bigl(\tuple{x}_k \cup \bigl\{ \bigl(\inIn, \chosenFixedPoint{\sys{s}}{\inIn}{\outIn}{\tuple{x}_k} \bigr) \bigr\} \bigr)|_{\outInSet \setminus \{\outIn\}}$ for $k \in \{1,2\}$, where $\chosenFixedPoint{\sys{s}}{\inIn}{\outIn}{\tuple{x}_k}$ is the least fixed point of the function $x_{\inIn} \mapsto \sys{s}(\tuple{x}_k \cup \{(\inIn, x_{\inIn})\})(\outIn)$. Let $x_k^{\alpha}$ for ordinals~$\alpha$ be as in \autoref{thm:monotoneFP} such that $\chosenFixedPoint{\sys{s}}{\inIn}{\outIn}{\tuple{x}_k} = x_k^{\hat{\alpha}_k}$ for some ordinal~$\hat{\alpha}_k$. We show by transfinite induction over~$\alpha$ that $x_1^{\alpha} \preceq x_2^{\alpha}$ for all ordinals~$\alpha$.
	\begin{enumerate}[label=(\roman*)]
		\item For $\alpha = 0$, we have $x_1^{\alpha} = \min(\mathcal{X}) = x_2^{\alpha}$.
		
		\item Assume $x_1^{\alpha} \preceq x_2^{\alpha}$ for some ordinal~$\alpha$. We then have $\tuple{x}_1 \cup \{(\inIn, x_1^{\alpha})\} \preceq \tuple{x}_2 \cup \{(\inIn, x_2^{\alpha})\}$ and since $\sys{s}$ is monotone,
		\begin{equation*}
			x_1^{\alpha+1} = \sys{s} \bigl(\tuple{x}_1 \cup \{(\inIn, x_1^{\alpha})\} \bigr)(\outIn) \preceq \sys{s} \bigl(\tuple{x}_2 \cup \{(\inIn, x_2^{\alpha})\} \bigr)(\outIn) = x_2^{\alpha+1}.
		\end{equation*}
		
		\item Let $\alpha$ be a nonzero limit ordinal and assume $x_1^{\beta} \preceq x_2^{\beta}$ for all $\beta < \alpha$. We then have
		\begin{equation*}
			x_1^{\alpha} = \sup \bigl\{ x_1^{\beta} \bigm\vert \beta < \alpha \bigr\} \preceq \sup \bigl\{ x_2^{\beta} \bigm\vert \beta < \alpha \bigr\} = x_2^{\alpha}.
		\end{equation*}
	\end{enumerate}
	
	Thus, $x_1^{\alpha} \preceq x_2^{\alpha}$ for all ordinals~$\alpha$. It is easy to see (by transfinite induction) that $x_1^{\alpha}$ = $x_1^{\hat{\alpha}_1}$ for all $\alpha \geq \hat{\alpha}_1$, and $x_2^{\alpha} = x_2^{\hat{\alpha}_2}$ for all $\alpha \geq \hat{\alpha}_2$. We can therefore conclude
	\begin{equation*}
		\chosenFixedPoint{\sys{s}}{\inIn}{\outIn}{\tuple{x}_1} = x_1^{\hat{\alpha}_1} = x_1^{\max\{\hat{\alpha}_1, \hat{\alpha}_2\}} \preceq x_2^{\max\{\hat{\alpha}_1, \hat{\alpha}_2\}} = x_2^{\hat{\alpha}_2} = \chosenFixedPoint{\sys{s}}{\inIn}{\outIn}{\tuple{x}_2}.
	\end{equation*}
	Using monotonicity of $\sys{s}$, this implies
	\begin{equation*}
		\connect{\inIn}{\outIn}(\sys{s})(\tuple{x}_1) = \sys{s}\bigl(\tuple{x}_1 \cup \bigl\{ \bigl(\inIn, \chosenFixedPoint{\sys{s}}{\inIn}{\outIn}{\tuple{x}_1} \bigr) \bigr\} \bigr)|_{\outInSet \setminus \{\outIn\}} \preceq \sys{s}\bigl(\tuple{x}_2 \cup \bigl\{ \bigl(\inIn, \chosenFixedPoint{\sys{s}}{\inIn}{\outIn}{\tuple{x}_2} \bigr) \bigr\} \bigr)|_{\outInSet \setminus \{\outIn\}} = \connect{\inIn}{\outIn}(\sys{s})(\tuple{x}_2).
	\end{equation*}
	Hence, $\connect{\inIn}{\outIn}(\sys{s})$ is monotone and therefore $\connect{\inIn}{\outIn}(\sys{s}) \in \monoSysIntSpace{\inInSet \setminus \{\inIn\}}{\outInSet \setminus \{\outIn\}}$.
\end{proof}

\subsection{Continuous Systems}\label{sec:continuousSys}
Let $\inSet$ be a set and let $(\mathcal{X}, \preceq)$ be an $\omega$-CPO. It is easily verified that $\mathcal{X}^{\inInSet}$ is again an $\omega$-CPO for $\inInSet \subseteq \inSet$, and the supremum of an $\omega$-chain in $\mathcal{X}^{\inInSet}$ corresponds to the tuple of the component-wise suprema. We define for finite disjoint~$\inInSet, \outInSet \subseteq \inSet$,
\begin{equation*}
	\contSysIntSpace{\inInSet}{\outInSet} \coloneqq \bigl\{ \sys{s} \colon \mathcal{X}^{\inInSet} \to \mathcal{X}^{\outInSet} \mid \sys{s} \text{ is $\omega$-continuous} \bigr\}. \footnote{We use the symbol~$\contSysSpace$ because continuity is stronger than monotonicity and $\causalSysSpace$ will be used for the system algebra of causal systems in \autoref{sec:causalSys}.}
\end{equation*}
The following lemma can be proven analogously to \autoref{lem:monoSysFixpoint}.
\begin{lemma} \label{lem:contSysFixpoint}
	Let $\inInSet, \outInSet \subseteq \inSet$ be finite disjoint sets, let $\sys{s} \in \contSysIntSpace{\inInSet}{\outInSet}$, and let $\inIn \in \inInSet$, $\outIn \in \outInSet$. Then, for all $\tuple{x} \in \mathcal{X}^{\inInSet \setminus \{\inIn\}}$, there exists a least $x_{\inIn} \in \mathcal{X}$ such that $\sys{s}\bigl(\tuple{x} \cup \{(\inIn, x_{\inIn})\} \bigr)(\outIn) = x_{\inIn}$.
\end{lemma}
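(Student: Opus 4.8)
The plan is to follow the proof of \autoref{lem:monoSysFixpoint} essentially verbatim, replacing the appeal to \autoref{thm:monotoneFP} by \autoref{thm:continuousFP}. Fix $\tuple{x} \in \mathcal{X}^{\inInSet \setminus \{\inIn\}}$ and define $f \colon \mathcal{X} \to \mathcal{X}$ by $f(x_{\inIn}) \coloneqq \sys{s}\bigl(\tuple{x} \cup \{(\inIn, x_{\inIn})\}\bigr)(\outIn)$. As in the monotone case, the fixed points of $f$ are exactly the values $x_{\inIn}$ we are after, so it suffices to produce a least fixed point of $f$. Since $\mathcal{X}$ is an $\omega$-CPO, \autoref{thm:continuousFP} delivers such a least fixed point (namely $\sup\{x_n \mid n \in \omega\}$ with $x_0 = \min(\mathcal{X})$ and $x_{n+1} = f(x_n)$) provided we show that $f$ is $\omega$-continuous.

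The crux, and the only real difference from the monotone case, is verifying $\omega$-continuity of $f$, since there monotonicity of $f$ was immediate whereas here preservation of suprema must be checked. The natural approach is to factor $f$ as the composition $f = \pi_{\outIn} \circ \sys{s} \circ \iota$, where $\iota \colon \mathcal{X} \to \mathcal{X}^{\inInSet}$ is the insertion $x_{\inIn} \mapsto \tuple{x} \cup \{(\inIn, x_{\inIn})\}$ and $\pi_{\outIn} \colon \mathcal{X}^{\outInSet} \to \mathcal{X}$ is the projection $\tuple{y} \mapsto \tuple{y}(\outIn)$. Each factor should then be checked to be $\omega$-continuous: for $\iota$, an $\omega$-chain $(y_n)_{n \in \omega}$ in $\mathcal{X}$ is sent to a chain whose $\inIn$-component increases while all other components stay fixed at $\tuple{x}$, so the componentwise supremum equals $\tuple{x} \cup \{(\inIn, \sup_n y_n)\} = \iota(\sup_n y_n)$; for $\pi_{\outIn}$, continuity is immediate because suprema in $\mathcal{X}^{\outInSet}$ are taken componentwise; and $\sys{s}$ is $\omega$-continuous by assumption.

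It then remains to record that a composition of $\omega$-continuous functions is again $\omega$-continuous, which is a short standard argument: if $g$ and $h$ are $\omega$-continuous then $g$ is monotone by \autoref{lem:contImplMon}, hence maps an $\omega$-chain $C$ to an $\omega$-chain $g(C)$, and $h(g(\sup C)) = h(\sup g(C)) = \sup h(g(C))$. Applying this to $f = \pi_{\outIn} \circ \sys{s} \circ \iota$ yields $\omega$-continuity of $f$ and finishes the proof. I do not expect a genuine obstacle here; the verification of continuity of the insertion map $\iota$ is the one spot requiring a sliver of care, to confirm that holding the coordinates outside $\inIn$ constant does not interfere with preservation of suprema.
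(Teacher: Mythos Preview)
Your proposal is correct and matches the paper's approach: the paper simply states that \autoref{lem:contSysFixpoint} ``can be proven analogously to \autoref{lem:monoSysFixpoint},'' and your write-up is exactly that analogy, with the extra (and correct) detail of verifying $\omega$-continuity of $f$ by factoring it as $\pi_{\outIn}\circ\sys{s}\circ\iota$. The only remark is that the paper leaves all of this implicit, so your version is in fact more explicit than the original.
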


\begin{definition}
	We define $\genFunSys{\contSysSpace}{\connectSet}{\fpChoiceFun}$, the \emph{$\inSet$-system algebra of continuous systems over $\mathcal{X}$}, as follows: For $\sys{s} \in \contSysIntSpace{\inInSet}{\outInSet}$, $\connectSet(\sys{s}) \coloneqq \bigl\{ \{\inIn, \outIn \} \mid \inIn \in \inInSet, \outIn \in \outInSet \bigr\}$ and $\chosenFixedPoint{\sys{s}}{\inIn}{\outIn}{\tuple{x}}$ is the least element of $\fixedPoints{\sys{s}}{\inIn}{\outIn}{\tuple{x}}$ for all $\inIn \in \inInSet$, $\outIn \in \outInSet$, and $\tuple{x} \in \mathcal{X}^{\inInSet \setminus \{\inIn\}}$.
\end{definition}

We again need to show that connecting interfaces of a system yields a system in our algebra, i.e., in this case an $\omega$-continuous function. We do so by proving that for an $\omega$-chain $(\tuple{x}_n)_{n \in \omega}$, $\{ \chosenFixedPoint{\sys{s}}{\inIn}{\outIn}{\tuple{x}_n} \mid n \in \omega \}$ is an $\omega$-chain as well, and $\chosenFixedPoint{\sys{s}}{\inIn}{\outIn}{\sup \{\tuple{x}_n \mid n \in \omega \}} = \sup\{ \chosenFixedPoint{\sys{s}}{\inIn}{\outIn}{\tuple{x}_n} \mid n \in \omega \}$. We then use $\omega$-continuity of~$\sys{s}$ to conclude that $\connect{\inIn}{\outIn}(\sys{s})$ is $\omega$-continuous.
\begin{theorem}\label{thm:contIsAlg}
	$(\contSysSpace, \intSetFunct, \parac, \connectSet, \connectFunc) = \genFunSys{\contSysSpace}{\connectSet}{\fpChoiceFun}$ is a functional $\inSet$-system algebra over $\mathcal{X}$.
\end{theorem}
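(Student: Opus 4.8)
The plan is to mirror the structure of the proof of \autoref{thm:monoIsAlg}: closure of $\contSysSpace$ under $\parac$ is immediate from the componentwise definition in equation~\eqref{eq:parallelCond} together with the fact that suprema of $\omega$-chains in $\mathcal{X}^{\inInSet}$ are computed componentwise, so the real content is closure under interface connection, i.e., showing that $\connect{\inIn}{\outIn}(\sys{s})$ is $\omega$-continuous whenever $\sys{s} \in \contSysIntSpace{\inInSet}{\outInSet}$. By \autoref{lem:contImplMon}, $\sys{s}$ is monotone, so the monotonicity arguments underlying \autoref{thm:monoIsAlg} already give that the chosen (least) fixed point is monotone in the remaining inputs; the additional work is to show that it moreover commutes with suprema of $\omega$-chains.

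So, fix an $\omega$-chain $(\tuple{x}_n)_{n \in \omega}$ in $\mathcal{X}^{\inInSet \setminus \{\inIn\}}$ and write $\bar{\tuple{x}} \coloneqq \sup_n \tuple{x}_n$. For each $n$, and also for $\bar{\tuple{x}}$ (yielding a sequence I write $x_\infty^m$), apply the explicit construction of \autoref{thm:continuousFP} to the function $x_\inIn \mapsto \sys{s}(\tuple{x}_n \cup \{(\inIn, x_\inIn)\})(\outIn)$: set $x_n^0 = \min(\mathcal{X})$ and $x_n^{m+1} = \sys{s}(\tuple{x}_n \cup \{(\inIn, x_n^m)\})(\outIn)$, so that $\chosenFixedPoint{\sys{s}}{\inIn}{\outIn}{\tuple{x}_n} = \sup_m x_n^m$. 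A short induction on $m$ using monotonicity of $\sys{s}$ shows $x_n^m \preceq x_{n+1}^m$ for all $m$, so each $(x_n^m)_n$ is an $\omega$-chain and, consequently, $(\chosenFixedPoint{\sys{s}}{\inIn}{\outIn}{\tuple{x}_n})_n$ is an $\omega$-chain whose supremum exists. The key claim is the identity $x_\infty^m = \sup_n x_n^m$ for every $m$, proved by induction on $m$: the base case is trivial since all $x_n^0 = \min(\mathcal{X})$; for the step, the induction hypothesis and componentwise suprema give $\bar{\tuple{x}} \cup \{(\inIn, x_\infty^m)\} = \sup_n (\tuple{x}_n \cup \{(\inIn, x_n^m)\})$, and $\omega$-continuity of $\sys{s}$, evaluated at $\outIn$ (using that projection commutes with componentwise suprema), yields $x_\infty^{m+1} = \sys{s}(\bar{\tuple{x}} \cup \{(\inIn, x_\infty^m)\})(\outIn) = \sup_n \sys{s}(\tuple{x}_n \cup \{(\inIn, x_n^m)\})(\outIn) = \sup_n x_n^{m+1}$.

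With this identity in hand, the chosen fixed point commutes with the supremum by interchanging the two suprema: since $x_n^m$ is monotone in both indices, one checks directly that $\sup_m x_\infty^m = \sup_m \sup_n x_n^m$ and $\sup_n \chosenFixedPoint{\sys{s}}{\inIn}{\outIn}{\tuple{x}_n} = \sup_n \sup_m x_n^m$ are both least upper bounds of the set $\{x_n^m : m,n \in \omega\}$, hence equal. Thus $\chosenFixedPoint{\sys{s}}{\inIn}{\outIn}{\bar{\tuple{x}}} = \sup_n \chosenFixedPoint{\sys{s}}{\inIn}{\outIn}{\tuple{x}_n}$. Finally I would plug this into the definition of $\connect{\inIn}{\outIn}(\sys{s})$: since $\bar{\tuple{x}} \cup \{(\inIn, \chosenFixedPoint{\sys{s}}{\inIn}{\outIn}{\bar{\tuple{x}}})\} = \sup_n (\tuple{x}_n \cup \{(\inIn, \chosenFixedPoint{\sys{s}}{\inIn}{\outIn}{\tuple{x}_n})\})$, one more application of $\omega$-continuity of $\sys{s}$ together with restriction to $\outInSet \setminus \{\outIn\}$ gives $\connect{\inIn}{\outIn}(\sys{s})(\bar{\tuple{x}}) = \sup_n \connect{\inIn}{\outIn}(\sys{s})(\tuple{x}_n)$, where the right-hand supremum exists as the restriction of an existing supremum; this proves $\omega$-continuity and hence $\connect{\inIn}{\outIn}(\sys{s}) \in \contSysIntSpace{\inInSet \setminus \{\inIn\}}{\outInSet \setminus \{\outIn\}}$.

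I expect the main obstacle to be the key identity $x_\infty^m = \sup_n x_n^m$: this is exactly the step where $\omega$-continuity, rather than mere monotonicity, of $\sys{s}$ is indispensable, since it lets one pull the supremum over inputs out of $\sys{s}$ at each finite stage of the fixed-point iteration. The subsequent supremum interchange and the final continuity computation are routine once this identity and the componentwise nature of suprema on product $\omega$-CPOs are established.
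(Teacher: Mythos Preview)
Your argument is correct, but it reaches the key identity $\chosenFixedPoint{\sys{s}}{\inIn}{\outIn}{\bar{\tuple{x}}} = \sup_n \chosenFixedPoint{\sys{s}}{\inIn}{\outIn}{\tuple{x}_n}$ by a different route than the paper. You work at the level of the Kleene iterates, proving $x_\infty^m = \sup_n x_n^m$ by induction on~$m$ (invoking $\omega$-continuity of~$\sys{s}$ at each step) and then interchanging the two suprema. The paper instead argues directly with the fixed points: monotonicity of the chosen fixed point in the remaining inputs gives $\sup_n \chosenFixedPoint{\sys{s}}{\inIn}{\outIn}{\tuple{x}_n} \preceq \chosenFixedPoint{\sys{s}}{\inIn}{\outIn}{\bar{\tuple{x}}}$; a single application of $\omega$-continuity of~$\sys{s}$ to the $\omega$-chain $\bigl(\tuple{x}_n \cup \{(\inIn,\chosenFixedPoint{\sys{s}}{\inIn}{\outIn}{\tuple{x}_n})\}\bigr)_n$ shows that $\sup_n \chosenFixedPoint{\sys{s}}{\inIn}{\outIn}{\tuple{x}_n}$ is itself a fixed point of $x_\inIn \mapsto \sys{s}(\bar{\tuple{x}} \cup \{(\inIn,x_\inIn)\})(\outIn)$; and the least-fixed-point property then yields the reverse inequality. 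The paper's argument is a bit shorter and uses continuity only once, while yours is more explicitly constructive and avoids appealing to the least-fixed-point characterization as a separate step; both are standard ways to establish Scott continuity of the least-fixed-point operator.
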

\begin{proof}
	As in the proof of \autoref{thm:monoIsAlg}, it is straightforward to verify that $\contSysSpace$ is closed under parallel composition. To show that $\contSysSpace$ is closed under connecting interfaces, let $\inInSet, \outInSet \subseteq \inSet$ be finite disjoint sets and let $\sys{s} \in \contSysIntSpace{\inInSet}{\outInSet}$. Further let $\inIn \in \inInSet$, $\outIn \in \outInSet$. To show that $\connect{\inIn}{\outIn}(\sys{s}) \in \contSysIntSpace{\inInSet \setminus \{\inIn\}}{\outInSet \setminus \{\outIn\}}$, let $(\tuple{x}_n)_{n \in \omega}$ be an $\omega$-chain in~$\mathcal{X}^{\inInSet \setminus \{\inIn\}}$. We then have $\connect{\inIn}{\outIn}(\sys{s})(\tuple{x}_n) = \sys{s}\bigl(\tuple{x}_n \cup \bigl\{ \bigl(\inIn, \chosenFixedPoint{\sys{s}}{\inIn}{\outIn}{\tuple{x}_n} \bigr) \bigr\} \bigr)|_{\outInSet \setminus \{\outIn\}}$ for all $n \in \omega$, where $\chosenFixedPoint{\sys{s}}{\inIn}{\outIn}{\tuple{x}_n}$ is the least fixed point of the function $x_{\inIn} \mapsto \sys{s}(\tuple{x}_n \cup \{(\inIn, x_{\inIn})\})(\outIn)$. Let $\hat{\tuple{x}} \coloneqq \sup \{\tuple{x}_n \mid n \in \omega \}$. Since $\sys{s}$ is monotone by \autoref{lem:contImplMon}, one can show as in the proof of \autoref{thm:monoIsAlg} that $\chosenFixedPoint{\sys{s}}{\inIn}{\outIn}{\tuple{x}} \preceq \chosenFixedPoint{\sys{s}}{\inIn}{\outIn}{\tuple{x}'}$ for all $\tuple{x}, \tuple{x}' \in \mathcal{X}^{\inInSet \setminus \{\inIn\}}$ with $\tuple{x} \preceq \tuple{x}'$. Thus, $\{ \chosenFixedPoint{\sys{s}}{\inIn}{\outIn}{\tuple{x}_n} \mid n \in \omega \}$ is an $\omega$-chain. Because we also have $\chosenFixedPoint{\sys{s}}{\inIn}{\outIn}{\tuple{x}_n} \preceq \chosenFixedPoint{\sys{s}}{\inIn}{\outIn}{\hat{\tuple{x}}}$ for all $n \in \omega$, we obtain $\sup \{ \chosenFixedPoint{\sys{s}}{\inIn}{\outIn}{\tuple{x}_n} \mid n \in \omega \} \preceq \chosenFixedPoint{\sys{s}}{\inIn}{\outIn}{\hat{\tuple{x}}}$. Using $\omega$-continuity of~$\sys{s}$, we further have
	\begin{align*}
		\sys{s}\Bigl(\hat{\tuple{x}} \cup \Bigl\{ \bigl(\inIn, \sup \bigl\{ \chosenFixedPoint{\sys{s}}{\inIn}{\outIn}{\tuple{x}_n} \bigm\vert n \in \omega \bigr\} \bigr) \Bigr\} \Bigr) (\outIn) &= \sys{s}\Bigl(\sup \Bigl\{\tuple{x}_n \cup \bigl\{ \bigl(\inIn, \chosenFixedPoint{\sys{s}}{\inIn}{\outIn}{\tuple{x}_n} \bigr) \bigr\} \Bigm\vert n \in \omega \Bigr\} \Bigr)(\outIn) \\
		&= \sup \Bigl\{ \sys{s}\Bigl(\tuple{x}_n \cup \bigl\{ \bigl(\inIn, \chosenFixedPoint{\sys{s}}{\inIn}{\outIn}{\tuple{x}_n} \bigr) \bigr\} \Bigr)(\outIn) \Bigm\vert n \in \omega \Bigr\} \\
		&= \sup \{ \chosenFixedPoint{\sys{s}}{\inIn}{\outIn}{\tuple{x}_n} \mid n \in \omega \}.
	\end{align*}
	This implies that $\sup \{ \chosenFixedPoint{\sys{s}}{\inIn}{\outIn}{\tuple{x}_n} \mid n \in \omega \}$ is a fixed point of $x_{\inIn} \mapsto \sys{s}\bigl(\hat{\tuple{x}} \cup \{(\inIn, x_{\inIn})\} \bigr)(\outIn)$. Since $\chosenFixedPoint{\sys{s}}{\inIn}{\outIn}{\hat{\tuple{x}}}$ is the least fixed point of that function, we have $\chosenFixedPoint{\sys{s}}{\inIn}{\outIn}{\hat{\tuple{x}}} \preceq \sup \{ \chosenFixedPoint{\sys{s}}{\inIn}{\outIn}{\tuple{x}_n} \mid n \in \omega \}$, and thus $\chosenFixedPoint{\sys{s}}{\inIn}{\outIn}{\hat{\tuple{x}}} = \sup\{ \chosenFixedPoint{\sys{s}}{\inIn}{\outIn}{\tuple{x}_n} \mid n \in \omega \}$. Together with $\omega$-continuity of~$\sys{s}$, this implies	
	\begin{align*}
		\connect{\inIn}{\outIn}(\sys{s})(\sup\{\tuple{x}_n \mid n \in \omega\}) &= \connect{\inIn}{\outIn}(\sys{s})\bigl(\hat{\tuple{x}} \bigr) = \sys{s}\bigl(\hat{\tuple{x}} \cup \bigl\{ \bigl(\inIn, \chosenFixedPoint{\sys{s}}{\inIn}{\outIn}{\hat{\tuple{x}}} \bigr) \bigr\} \bigr)|_{\outInSet \setminus \{\outIn\}} \\
		&= \sys{s}\Bigl(\sup \Bigl\{\tuple{x}_n \cup \bigl\{ \bigl(\inIn, \chosenFixedPoint{\sys{s}}{\inIn}{\outIn}{\tuple{x}_n} \bigr) \bigr\} \Bigm\vert n \in \omega \Bigr\} \Bigr)|_{\outInSet \setminus \{\outIn\}} \\
		&= \sup \Bigl\{ \sys{s}\bigl(\tuple{x}_n \cup \bigl\{ \bigl(\inIn, \chosenFixedPoint{\sys{s}}{\inIn}{\outIn}{\tuple{x}_n} \bigr) \bigr\} \bigr)|_{\outInSet \setminus \{\outIn\}} \Bigm\vert n \in \omega \Bigr\} \\
		&= \sup \bigl\{ \connect{\inIn}{\outIn}(\sys{s})(\tuple{x}_n) \bigm\vert n \in \omega \bigr\}.
	\end{align*}
	We conclude that $\connect{\inIn}{\outIn}(\sys{s})$ is $\omega$-continuous and therefore $\connect{\inIn}{\outIn}(\sys{s}) \in \contSysIntSpace{\inInSet \setminus \{\inIn\}}{\outInSet \setminus \{\outIn\}}$.
\end{proof}

\subsection{Composition-Order Invariance of Monotone and Continuous Systems}\label{sec:monoCompInv}
We want to show that the system algebras of monotone and continuous systems are composition-order invariant. We can do so if we additionally require that all nonempty chains in~$\mathcal{X}$ have an infimum. Before we prove the main result of this section, we need the following lemma about fixed points (cf.~\cite[(A.6) in Chapter~I, §2.6]{GD03}).

\begin{lemma}\label{lem:smallerFP}
	Let $(\mathcal{X}, \preceq)$ be a poset such that every nonempty chain in $\mathcal{X}$ has an infimum. Let $f \colon \mathcal{X} \to \mathcal{X}$ be a monotone function and let $x \in \mathcal{X}$ such that $f(x) \preceq x$. Then there exists $\hat{x} \in \mathcal{X}$ such that $f(\hat{x}) = \hat{x} \preceq x$.
\end{lemma}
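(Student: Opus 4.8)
The plan is to iterate $f$ downward from $x$ transfinitely and then invoke the Hartogs-type stabilization result of \autoref{lem:chainFixHartogs}. Concretely, I would define a class function $\alpha \mapsto x_\alpha$ by transfinite recursion: set $x_0 \coloneqq x$, put $x_{\alpha+1} \coloneqq f(x_\alpha)$ at successor steps, and $x_\alpha \coloneqq \inf\{x_\beta \mid \beta < \alpha\}$ at nonzero limit ordinals. For this to be well-defined at limits I need $\{x_\beta \mid \beta < \alpha\}$ to be a nonempty chain, which will follow from the induction below, so that its infimum exists by the standing hypothesis that every nonempty chain in $\mathcal{X}$ has an infimum.

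The technical core is a transfinite induction establishing two invariants simultaneously: that the sequence is non-increasing (i.e.\ $x_\gamma \preceq x_\beta$ whenever $\beta \le \gamma$) and that $f(x_\alpha) \preceq x_\alpha$ for every $\alpha$. The base case is exactly the hypothesis $f(x) \preceq x$. At a successor $\alpha+1$, monotonicity applied to $f(x_\alpha) \preceq x_\alpha$ yields $f(x_{\alpha+1}) = f(f(x_\alpha)) \preceq f(x_\alpha) = x_{\alpha+1}$, which simultaneously gives $x_{\alpha+1} \preceq x_\alpha$ and re-establishes the second invariant. The limit step is where I expect the main difficulty: here $x_\alpha = \inf\{x_\beta \mid \beta < \alpha\}$ lies below every $x_\beta$ with $\beta < \alpha$, so monotonicity gives $f(x_\alpha) \preceq f(x_\beta) = x_{\beta+1} \preceq x_\beta$ for all such $\beta$; hence $f(x_\alpha)$ is a lower bound of $\{x_\beta \mid \beta < \alpha\}$ and therefore $f(x_\alpha) \preceq x_\alpha$, as required. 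One must also note the chain property used to form the infimum, which is immediate from the non-increasing invariant restricted to $\beta < \alpha$, together with nonemptiness (the set contains $x_0$ since $\alpha > 0$).

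Finally, I would apply \autoref{lem:chainFixHartogs} to the dual poset $(\mathcal{X}, \succeq)$: in that order the sequence $(x_\alpha)$ satisfies the hypothesis of the lemma, namely $x_\beta \succeq x_\alpha$ (equivalently $x_\alpha \preceq x_\beta$) for all $\beta \le \alpha$, which is just the non-increasing invariant. The lemma then furnishes an ordinal $\hat{\alpha}$ with $x_{\hat{\alpha}} = x_{\hat{\alpha}+1} = f(x_{\hat{\alpha}})$. Setting $\hat{x} \coloneqq x_{\hat{\alpha}}$ gives a fixed point, and since the sequence is non-increasing we have $\hat{x} = x_{\hat{\alpha}} \preceq x_0 = x$, completing the proof. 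The only subtlety worth flagging is orienting \autoref{lem:chainFixHartogs} correctly, as it is stated for ascending rather than descending sequences.
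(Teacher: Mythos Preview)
Your proposal is correct and follows essentially the same approach as the paper: the same transfinite recursion $x_0 = x$, $x_{\alpha+1} = f(x_\alpha)$, $x_\alpha = \inf\{x_\beta \mid \beta < \alpha\}$ at limits, the same simultaneous invariant $f(x_\alpha) \preceq x_\alpha \preceq x_\beta$ for $\beta \le \alpha$, and the same application of \autoref{lem:chainFixHartogs} to the dual order $(\mathcal{X}, \succeq)$ to obtain stabilization. The only cosmetic difference is that at the limit step you write $f(x_\alpha) \preceq f(x_\beta) = x_{\beta+1} \preceq x_\beta$ whereas the paper writes $f(x_\alpha) \preceq f(x_{\alpha'}) \preceq x_{\alpha'}$ directly from the inductive hypothesis $f(x_{\alpha'}) \preceq x_{\alpha'}$; both are fine.
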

\begin{proof}
	Let $x_0 \coloneqq x$, $x_{\alpha+1} \coloneqq f(x_{\alpha})$ for any ordinal~$\alpha$, and $x_{\alpha} \coloneqq \inf \{ x_{\beta} \mid \beta < \alpha \}$ for nonzero limit ordinals~$\alpha$ (where we show below that this infimum exists). We claim that we have for all ordinals~$\alpha$
	\begin{equation*}
		\forall \beta \leq \alpha \ \ f(x_{\alpha}) \preceq x_{\alpha} \preceq x_{\beta}.
	\end{equation*}
	We prove this claim by transfinite induction over~$\alpha$. For $\alpha = 0$, the claim follows since $f(x) \preceq x$. Assume the claim is true for some ordinal~$\alpha$. We then have $x_{\alpha+1} = f(x_{\alpha}) \preceq x_{\alpha}$. For $\beta \leq \alpha$, we therefore have by assumption $x_{\alpha+1} \preceq x_{\alpha} \preceq x_{\beta}$. For $\beta = \alpha + 1$, we trivially have $x_{\alpha+1} \preceq x_{\beta}$. Since~$f$ is monotone, we thus have for all $\beta \leq \alpha + 1$,
	\begin{equation*}
		f(x_{\alpha+1}) \preceq f(x_{\alpha}) = x_{\alpha+1} \preceq x_{\beta}.
	\end{equation*}
	Now let $\alpha$ be a nonzero limit ordinal and assume that for all $\alpha' < \alpha$, we have $f(x_{\alpha'}) \preceq x_{\alpha'} \preceq x_{\beta}$ for all $\beta \leq \alpha'$. This implies that $\{ x_{\beta} \mid \beta < \alpha \}$ is a chain and therefore the infimum exists. We then have by definition of $x_{\alpha}$ that $x_{\alpha} \preceq x_{\beta}$ for all $\beta \leq \alpha$. By monotonicity of~$f$, we further have for all $\alpha' < \alpha$,
	\begin{equation*}
		f(x_\alpha) = f(\inf \{ x_{\beta} \mid \beta < \alpha \}) \preceq f(x_{\alpha'}) \preceq x_{\alpha'}.
	\end{equation*}
	Hence, $f(x_\alpha)$ is a lower bound of $\{ x_{\beta} \mid \beta < \alpha \}$, and therefore $f(x_\alpha) \preceq \inf \{ x_{\beta} \mid \beta < \alpha \} = x_{\alpha} \preceq x_{\beta}$ for all $\beta \leq \alpha$. Thus, the claim holds for all ordinals~$\alpha$.
	
	Our claim and \autoref{lem:chainFixHartogs} applied to the poset~$(\mathcal{X}, \succeq)$ yield that there exists an ordinal~$\hat{\alpha}$ such that $x_{\hat{\alpha}} = x_{\hat{\alpha}+1}$. This implies $f(x_{\hat{\alpha}}) = x_{\hat{\alpha}+1} = x_{\hat{\alpha}} \preceq x_0 = x$.
\end{proof}

We now use this lemma to prove composition-order invariance. For showing connection-order invariance, we consider a system~$\sys{s}$ and connecting interfaces~$\inIn$ to $\outIn$, and $\inIn'$ to $\outIn'$. For an input~$\tuple{x}$ at the remaining interfaces, consider the function~$f \colon \mathcal{X}^2 \to \mathcal{X}^2$ that maps $(x_0, x_1)$ to the outputs of $\sys{s}$ at interfaces $\outIn$ and $\outIn'$ on input $\tuple{x} \cup \bigl\{ \bigl(\inIn, x_0 \bigr), \bigl(\inIn', x_1 \bigr) \bigr\}$. When we first connect $\inIn$ to $\outIn$, and then $\inIn'$ to $\outIn'$, the system algebra chooses fixed points $x_{\inIn}$ and $x_{\inIn'}$ for the two connections, where $(x_{\inIn}, x_{\inIn'})$ is a fixed point of~$f$. We show that it is the least fixed point, and that the same is true for connecting the interfaces in the other order, which implies that the two resulting systems are equal.

Note that $(x_{\inIn}, x_{\inIn'})$ being the least fixed point of $f$ is not trivial: We know by definition of the connection operation that $x_{\inIn}$ is the least fixed point of $f( \cdot, x_{\inIn'} )(0)$, but for some~$x_1 \prec x_{\inIn'}$, the function $f( \cdot, x_1)(0)$ could have a fixed point $x_0 \prec x_{\inIn}$ such that $(x_0, x_1)$ is a fixed point of $f$ that does not get chosen for the connections since $x_0$ is not the least fixed point of $f( \cdot, x_1)(0)$. Our proof proceeds as follows: For the least fixed point~$\bigl(\hat{x}_0, \hat{x}_1 \bigr)$ of~$f$, we prove that $\hat{x}_0$ is the least fixed point of $f( \cdot, \hat{x}_1)(0)$ by showing that if there were a fixed point less than~$\hat{x}_0$, we could use the lemma above to obtain a fixed point of~$f$ less than $\bigl(\hat{x}_0, \hat{x}_1 \bigr)$. We then use this to conclude that $(x_\inIn, x_{\inIn'}) = (\hat{x}_0, \hat{x}_1)$.
\begin{theorem}\label{thm:monoCompInv}
	Let $\inSet$ be a set, let $(\mathcal{X}, \preceq)$ be a CPO and let $(\sysSpace, \intSetFunct, \parac, \connectSet, \connectFunc) = \genFunSys{\monoSysSpace}{\connectSet}{\fpChoiceFun}$ be the $\inSet$-system algebra of monotone systems over $\mathcal{X}$, or let $(\mathcal{X}, \preceq)$ be an $\omega$-CPO and let $(\sysSpace, \intSetFunct, \parac, \connectSet, \connectFunc) = \genFunSys{\contSysSpace}{\connectSet}{\fpChoiceFun}$ be the $\inSet$-system algebra of continuous systems over~$\mathcal{X}$. If every nonempty chain in~$\mathcal{X}$ has an infimum, the system algebra is composition-order invariant.
\end{theorem}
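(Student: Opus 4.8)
The plan is to reduce the statement to connection-order invariance and then prove the latter through a least-fixed-point argument carried out on the product domain $\mathcal{X}^2$. First observe that both system algebras have consistently chosen fixed points: the chosen value is always the least element of $\fixedPoints{\sys{s}}{\inIn}{\outIn}{\tuple{x}}$, which exists by \autoref{thm:monotoneFP} (resp.\ \autoref{thm:continuousFP}), so whenever two fixed-point sets coincide their least elements do too. By \autoref{lem:coninvCompinvCons} it therefore suffices to establish connection-order invariance. That $\connectSet$ permits reordering is immediate from $\connectSet(\sys{s}) = \bigl\{ \{\inIn, \outIn\} \mid \inIn \in \inInSet, \outIn \in \outInSet \bigr\}$: after connecting one input-output pair the surviving interfaces are still among the original inputs and outputs, so every admissible second connection was admissible before, and two such connections never share an interface. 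The real content is thus to show $\connect{\inIn'}{\outIn'}\bigl(\connect{\inIn}{\outIn}(\sys{s})\bigr) = \connect{\inIn}{\outIn}\bigl(\connect{\inIn'}{\outIn'}(\sys{s})\bigr)$, where $\inIn,\inIn'$ are distinct inputs and $\outIn,\outIn'$ distinct outputs.

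Fix an input $\tuple{x}$ on the remaining interfaces and let $f \colon \mathcal{X}^2 \to \mathcal{X}^2$ send $(x_0, x_1)$ to the pair of outputs of $\sys{s}$ at $\outIn$ and $\outIn'$ on input $\tuple{x} \cup \{(\inIn, x_0), (\inIn', x_1)\}$. Since $\mathcal{X}^2$ is again a CPO (resp.\ $\omega$-CPO) and $f$ is monotone (resp.\ $\omega$-continuous), $f$ has a least fixed point $(\hat{x}_0, \hat{x}_1)$. Unwinding \autoref{def:funSysAlg}, connecting $\inIn$ to $\outIn$ first selects, for each value $x_1$ at $\inIn'$, the least fixed point $\phi_0(x_1)$ of $x_0 \mapsto f(x_0, x_1)(0)$, and then connecting $\inIn'$ to $\outIn'$ selects the least fixed point $x_{\inIn'}$ of $g(x_1) \coloneqq f\bigl(\phi_0(x_1), x_1\bigr)(1)$, producing a fixed point $(x_{\inIn}, x_{\inIn'})$ of $f$ with $x_{\inIn} = \phi_0(x_{\inIn'})$. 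The goal is to show this pair equals $(\hat{x}_0, \hat{x}_1)$; by the symmetry of the two coordinates the reverse connection order selects the same pair, and then both composed systems output $\sys{s}\bigl(\tuple{x} \cup \{(\inIn, \hat{x}_0), (\inIn', \hat{x}_1)\}\bigr)$ restricted to the surviving outputs, giving equality.

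The crux, and the step I expect to be hardest, is to verify that $(\hat{x}_0, \hat{x}_1)$ is actually the selected pair, i.e.\ that $\hat{x}_0$ is the \emph{least} fixed point of $x_0 \mapsto f(x_0, \hat{x}_1)(0)$: a priori $\phi_0(\hat{x}_1)$ could lie strictly below $\hat{x}_0$. Suppose $y_0 \coloneqq \phi_0(\hat{x}_1) \prec \hat{x}_0$. Then $f(y_0, \hat{x}_1)(0) = y_0$, and monotonicity of $f$ with $y_0 \preceq \hat{x}_0$ gives $f(y_0, \hat{x}_1)(1) \preceq f(\hat{x}_0, \hat{x}_1)(1) = \hat{x}_1$, so $f(y_0, \hat{x}_1) \preceq (y_0, \hat{x}_1)$. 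This is exactly the hypothesis of \autoref{lem:smallerFP} applied to $f$ on $\mathcal{X}^2$ — and this is precisely where the assumption that every nonempty chain has an infimum enters — yielding a fixed point $\hat{w} \preceq (y_0, \hat{x}_1)$ of $f$. Minimality of $(\hat{x}_0, \hat{x}_1)$ forces $(\hat{x}_0, \hat{x}_1) \preceq \hat{w} \preceq (y_0, \hat{x}_1)$, hence $\hat{x}_0 \preceq y_0 \prec \hat{x}_0$, a contradiction. So $\phi_0(\hat{x}_1) = \hat{x}_0$.

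It remains to identify $x_{\inIn'}$ with $\hat{x}_1$. From $\phi_0(\hat{x}_1) = \hat{x}_0$ we get $g(\hat{x}_1) = f(\hat{x}_0, \hat{x}_1)(1) = \hat{x}_1$, so $\hat{x}_1$ is a fixed point of $g$; and it is the least one because any fixed point $z_1$ of $g$ yields the fixed point $\bigl(\phi_0(z_1), z_1\bigr)$ of $f$, whence $\hat{x}_1 \preceq z_1$ by minimality of $(\hat{x}_0, \hat{x}_1)$. Therefore $x_{\inIn'} = \hat{x}_1$ and $x_{\inIn} = \phi_0(\hat{x}_1) = \hat{x}_0$, so the first connection order selects $(\hat{x}_0, \hat{x}_1)$; the symmetric computation for the reverse order selects the same least fixed point of $f$, and equality of the two composed systems follows. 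The main subtlety throughout is that connecting interfaces one at a time only guarantees leastness coordinatewise, so the infimum hypothesis and \autoref{lem:smallerFP} are essential to upgrade this to leastness of the joint fixed point $(\hat{x}_0, \hat{x}_1)$.
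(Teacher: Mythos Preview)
Your proposal is correct and follows essentially the same approach as the paper: reduce to connection-order invariance via \autoref{lem:coninvCompinvCons}, introduce the two-variable map $f$ on $\mathcal{X}^2$, and use \autoref{lem:smallerFP} to show that its least fixed point $(\hat{x}_0,\hat{x}_1)$ coincides with the pair produced by either order of one-at-a-time connections. The only cosmetic difference is that you name the intermediate maps $\phi_0$ and $g$ explicitly, whereas the paper works directly with the fixed-point sets $\fixedPoints{\connect{\inIn}{\outIn}(\sys{s})}{\inIn'}{\outIn'}{\tuple{x}}$; the logic and the use of the infimum hypothesis are identical.
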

\begin{proof}
	Since the least fixed points are chosen, $\genFunSys{\monoSysSpace}{\connectSet}{\fpChoiceFun}$ and $\genFunSys{\contSysSpace}{\connectSet}{\fpChoiceFun}$ have consistently chosen fixed points. Hence, \autoref{lem:coninvCompinvCons} implies that we only have to prove connection-order invariance. Let $\inInSet, \outInSet \subseteq \inSet$ be finite disjoint sets and let $\sys{s} \in \sysIntSpace{\inInSet}{\outInSet}$. Further let $\inIn, \inIn' \in \inInSet$ and $\outIn, \outIn' \in \outInSet$ be distinct elements. We have to show that $\connect{\inIn'}{\outIn'}\bigl( \connect{\inIn}{\outIn}(\sys{s}) \bigr) = \connect{\inIn}{\outIn}\bigl( \connect{\inIn'}{\outIn'}(\sys{s}) \bigr)$. To this end, let $\tuple{x} \in \mathcal{X}^{\inInSet \setminus \{\inIn, \inIn'\}}$. By \autoref{def:funSysAlg}, there exist $x_{\inIn}, x_{\inIn'}, x'_{\inIn}, x'_{\inIn'} \in \mathcal{X}$ such that
	\begin{align*}
		\connect{\inIn'}{\outIn'}\bigl( \connect{\inIn}{\outIn}(\sys{s})\bigr) (\tuple{x}) &= \sys{s}\bigl(\tuple{x} \cup \bigl\{ \bigl(\inIn, x_{\inIn} \bigr), \bigl(\inIn', x_{\inIn'} \bigr) \bigr\} \bigr)|_{\outInSet \setminus \{\outIn, \outIn'\}}, \\
		\connect{\inIn}{\outIn}\bigl( \connect{\inIn'}{\outIn'}(\sys{s})\bigr) (\tuple{x}) &= \sys{s}\bigl(\tuple{x} \cup \bigl\{ \bigl(\inIn, x'_{\inIn} \bigr), \bigl(\inIn', x'_{\inIn'} \bigr) \bigr\} \bigr)|_{\outInSet \setminus \{\outIn, \outIn'\}}.
	\end{align*}
	Let $f \colon \mathcal{X}^2 \to \mathcal{X}^2$ be the function with 
	\begin{equation*}
		f(x_0, x_1) = \Bigl( \sys{s}\bigl(\tuple{x} \cup \bigl\{ \bigl(\inIn, x_0 \bigr), \bigl(\inIn', x_1 \bigr) \bigr\} \bigr) (\outIn), \ \sys{s}\bigl(\tuple{x} \cup \bigl\{ \bigl(\inIn, x_0 \bigr), \bigl(\inIn', x_1 \bigr) \bigr\} \bigr) (\outIn') \Bigr).
	\end{equation*}
	We then have by definition of~$\fpChoiceFun$ and $\connectFunc$ that $x_{\inIn}$ is the least fixed point of $f( \cdot, x_{\inIn'} )(0)$ and that $x'_{\inIn'}$ is the least fixed point of $f( x'_{\inIn}, \cdot)(1)$. We further have that $(x_\inIn, x_{\inIn'})$ and $(x'_\inIn, x'_{\inIn'})$ are fixed points of~$f$. We will show that $(x_\inIn, x_{\inIn'})$ and $(x'_\inIn, x'_{\inIn'})$ are the least fixed point of~$f$ and therefore equal.
	
	Note that if $(\mathcal{X}, \preceq)$ is a CPO or an $\omega$-CPO, then $(\mathcal{X}^2, \preceq)$ is a CPO or an $\omega$-CPO, respectively. Moreover, if~$\sys{s}$ is monotone, so is $f$, and if $\sys{s}$ is $\omega$-continuous, so is $f$; in both cases, $f$ is monotone by \autoref{lem:contImplMon}. Hence, $f$ has a least fixed point~$\bigl(\hat{x}_0, \hat{x}_1 \bigr)$ by \autoref{thm:monotoneFP} and \autoref{thm:continuousFP}. This implies that $\hat{x}_0$ is a fixed point of $f( \cdot, \hat{x}_1)(0)$. We claim that it is the least fixed point of that function. Assume toward a contradiction that $\tilde{x}_0 \prec \hat{x}_0$ is the least fixed point of $f(\cdot, \hat{x}_1)(0)$. We then have by monotonicity of~$f$ for some $y_1 \in \mathcal{X}$,
	\begin{equation*}
		(\tilde{x}_0, y_1) = f( \tilde{x}_0, \hat{x}_1 ) \preceq f( \hat{x}_0, \hat{x}_1 ) = ( \hat{x}_0, \hat{x}_1 ).
	\end{equation*}
	Therefore, $f( \tilde{x}_0, \hat{x}_1 ) = (\tilde{x}_0, y_1 ) \preceq ( \tilde{x}_0, \hat{x}_1 )$. Thus, \autoref{lem:smallerFP} implies that $f$ has a fixed point~$(\bar{x}_0, \bar{x}_1 ) \preceq ( \tilde{x}_0, \hat{x}_1 ) \prec (\hat{x}_0, \hat{x}_1 )$, which contradicts $(\hat{x}_0, \hat{x}_1 )$ being the least fixed point of~$f$.
	
	Since $\hat{x}_0$ is the least fixed point of $f ( \cdot, \hat{x}_1 )(0)$ and $(\hat{x}_0, \hat{x}_1 )$ is a fixed point of~$f$, we have $\hat{x}_1 \in \fixedPoints{\connect{\inIn}{\outIn}(\sys{s})}{\inIn'}{\outIn'}{\tuple{x}}$. We also have that $x_{\inIn'}$ is the least element of $\fixedPoints{\connect{\inIn}{\outIn}(\sys{s})}{\inIn'}{\outIn'}{\tuple{x}}$ and $\hat{x}_1 \preceq x_{\inIn'}$. This implies $\hat{x}_1 = x_{\inIn'}$. As we have seen before, $x_{\inIn}$ is the least fixed point of $f( \cdot, x_{\inIn'} )(0)$ and $\hat{x}_0$ is the least fixed point of $f( \cdot, \hat{x}_1)(0)$. Thus, $x_{\inIn} = \hat{x}_0$. An analogous argument shows that $(x'_\inIn, x'_{\inIn'} ) = ( \hat{x}_0, \hat{x}_1 ) = (x_\inIn, x_{\inIn'})$. We therefore have $\connect{\inIn'}{\outIn'}\bigl( \connect{\inIn}{\outIn}(\sys{s})\bigr) (\tuple{x}) = \connect{\inIn}{\outIn}\bigl( \connect{\inIn'}{\outIn'}(\sys{s})\bigr) (\tuple{x})$.
\end{proof}

\subsection{Kahn Networks}\label{sec:Kahn}
Kahn networks~\cite{Kahn74} are an instantiation of the system algebra of continuous systems. There, the set~$\mathcal{X}$ consists of sequences over some set of values~$\valsp$. The interpretation is that the systems successively take inputs from $\valsp$ at each interface and produce outputs in $\valsp$. An input $(v_1, v_2, v_3) \in \mathcal{X}$ to an interface then corresponds to the input history where $v_1$ was input first, then $v_2$, and finally~$v_3$. Hence, in this model the order of inputs at each interface is relevant, but the order of inputs at different interfaces is not modeled.

More formally, let $\valsp$ be some nonempty set of values, let
\begin{equation*}
	\mathcal{X} \coloneqq \bigl\{ (\val_{\alpha})_{\alpha \in \beta} \in \valsp^{\beta} \mid \beta \leq \omega \bigr\},
\end{equation*}
and let $\sqsubseteq$ be the initial segment relation on~$\mathcal{X}$, i.e,
\begin{equation*}
	\bigl(\val^1_{\alpha} \bigr)_{\alpha \in \beta_1} \sqsubseteq \bigl(\val^2_{\alpha} \bigr)_{\alpha \in \beta_2} \ \vcentcolon\Longleftrightarrow \ \beta_1 \leq \beta_2 \ \wedge \ \forall \alpha \in \beta_1 \ \bigl(\val^1_{\alpha} = \val^2_{\alpha} \bigr).
\end{equation*}
It is straightforward to verify that $(\mathcal{X}, \sqsubseteq)$ is a CPO: For a chain~$\mathcal{C} \subseteq \mathcal{X}$, the supremum of $\mathcal{C}$ is the sequence~$v$ in which for $\alpha \leq \omega$, $v_{\alpha}$ is defined if and only if there is a sequence~$\val'$ in $\mathcal{C}$ in which $v'_{\alpha}$ is defined, and in this case, $v_{\alpha} = v'_{\alpha}$. This is well-defined since all elements in $\mathcal{C}$ are comparable and thus, if two sequences in $\mathcal{C}$ define $v'_{\alpha}$ and $\tilde{v}_{\alpha}$, we have $v'_{\alpha} = \tilde{v}_{\alpha}$. One can therefore consider the system algebra of continuous systems over $\mathcal{X}$. Such systems correspond to Kahn networks~\cite{Kahn74}. By \autoref{thm:monoCompInv}, the system algebra is composition-order invariant since every chain also has an infimum (replace ``there is a sequence in~$\mathcal{C}$'' by ``for all sequences in~$\mathcal{C}$'' in the argument for the supremum).

\paragraph{Example\ifelsarticle\else.\fi}
As a simple example of a system in this model, consider the system~$\sys{s}$ with input interface~$\inIn$ and output interfaces $\outIn_1, \outIn_2$ that forwards all inputs to both $\outIn_1$ and $\outIn_2$. Formally, the system is described by the function~$\sys{s} \colon \mathcal{X}^{\{\inIn\}} \to \mathcal{X}^{\{\outIn_1, \outIn_2\}}$ with
\begin{equation*}
	\sys{s}(\tuple{x})(\outIn_1) = \sys{s}(\tuple{x})(\outIn_2) = \tuple{x}(\inIn).
\end{equation*}
This function is clearly $\omega$-continuous and therefore~$\sys{s}$ is a Kahn network. We can now connect interfaces~$\inIn$ and $\outIn_1$ to obtain the system~$\connect{\inIn}{\outIn_1}(\sys{s})$ with no input interfaces and a single output interface~$\outIn_2$. To determine the output at that interface, we have to find the least fixed point of the function $x_{\inIn} \mapsto \sys{s}(\{(\inIn, x_{\inIn})\})(\outIn_1) = x_{\inIn}$. Since all $x \in \mathcal{X}$ are fixed points of this function, the least fixed point is $\min \mathcal{X}$, i.e., the empty sequence. This also matches the intuition that $\connect{\inIn}{\outIn_1}(\sys{s})$ should not output anything because every other output would have to be caused by itself.

\paragraph{Generalizations\ifelsarticle\else.\fi}
Kahn only considered $\omega$-continuous systems and argued that continuity in contrast to monotonicity prevents a system from only outputting a value after receiving infinitely many inputs \cite{Kahn74}. Note, however, that allowing all monotone systems also yields a valid composition-order invariant system algebra.

Another generalization is to not only consider finite and infinite sequences, but also include transfinite sequences of inputs and outputs. By doing so, one can model systems that output something after infinitely many other outputs. This makes sense if one considers systems that can produce infinitely many outputs in finite time. Conversely, one can also place a finite upper bound on the number of inputs and outputs per interface, i.e., $\mathcal{X} \coloneqq \bigl\{ (\val_{\alpha})_{\alpha \in \beta} \in \valsp^{\beta} \mid \beta \leq n \bigr\}$ for some $n \in \N$. The special case $n = 1$ has been considered by Tackmann~\cite{Tac14}. Note, however, that it is not possible in this model to allow arbitrarily many but only finitely many inputs and outputs, because this does not yield an $\omega$-CPO: The supremum of an $\omega$-chain $C_0 \sqsubseteq C_1 \sqsubseteq C_2 \sqsubseteq \ldots$, where $C_n$ contains $n$ elements, is infinite.

\section{The System Algebra of Causal Systems}\label{sec:causalSys}
\subsection{Causal Systems}
We now develop a system algebra where the inputs and outputs of systems are (partially) ordered, i.e., a single input or output of a system is an element of some poset~$(P, \preceq)$. An input or output history at some interface is then a subset of~$P$. We interpret~$p_1 \prec p_2$ as $p_2$ \emph{could} causally depend on $p_1$, but not vice versa. This can (but does not need to) mean that $p_1$ occurred at an earlier time than $p_2$. For the ease of presentation, we will say that $p_1$ is before $p_2$ and $p_2$ is after $p_1$ if $p_1 \prec p_2$, keeping in mind that $\preceq$ does not necessarily correspond to a relation on time. As a natural instantiation in which time is made explicit and $\prec$ does mean at an earlier time, consider $P = \valsp \times \timesp$ for some set~$\valsp$ and a poset~$(\timesp, \leq)$ and let
\begin{equation}\label{eq:valtimPO}
	(\val_1, \tim_1) \preceq (\val_2, \tim_2) \ \ \vcentcolon\Longleftrightarrow \ \ \tim_1 < \tim_2 \ \vee \ (\val_1, \tim_1) = (\val_2, \tim_2).
\end{equation}
We can then interpret~$(\val, \tim)$ as the value~$\val$ being input or output at time~$\tim$.

We again fix some arbitrary set~$\inSet$ of interface labels and define a functional system algebra where a system is a function that for each input interface takes an input history as an input and outputs the output history for each output interface. That is, we define a functional system algebra over~$\mathcal{X}$, where $\mathcal{X}$ consists of subsets of~$P$. We only allow well-ordered subsets, which means that at each input interface, if there are inputs, there is a first one, and one input is given after another. Inputs at different interfaces (which can be connected to different systems) can however be incomparable (e.g., occur at the same time). Only considering well-ordered inputs per interface seems to be a reasonable restriction if systems in computer science are to be modeled, because such systems are typically discrete and are started at some point, i.e., there is a first output. This restriction is discussed further below. Formally, we define
\begin{equation*}
	\mathcal{X} \coloneqq \{X \subseteq P \mid (X, \preceq) \text { is well-ordered}\}.
\end{equation*}

The systems we consider respect causality, i.e., an output can only depend on inputs that are given before that output. We formalize this by requiring that for every change in the output history, there is an earlier change in the input history.
\begin{definition}\label{def:causalSys}
	For finite disjoint $\inInSet, \outInSet \subseteq \inSet$, a \emph{causal $(\inInSet, \outInSet)$-system over $\mathcal{X}$} is a function $\sys{s} \colon \mathcal{X}^{\inInSet} \to \mathcal{X}^{\outInSet}$ such that
	\begin{equation*}
		\forall \tuple{X}, \tuple{X}' \in \mathcal{X}^{\inInSet} \ \forall \outIn \in \outInSet \ \forall y \in \bigl(\sys{s}(\tuple{X})(\outIn) \symdif \sys{s}(\tuple{X}')(\outIn) \bigr) \ \exists \inIn \in \inInSet \ \exists x \in \bigl(\tuple{X}(\inIn) \symdif \tuple{X}'(\inIn) \bigr) \ \ x \prec y.
	\end{equation*}
	We denote the set of causal $(\inInSet, \outInSet)$-systems over~$\mathcal{X}$ by $\causalSysIntSpace{\inInSet}{\outInSet}$.
\end{definition}
As we show in \autoref{app:equivalence-causal}, our definition corresponds to a generalization of strict causality as considered by Matsikoudis and Lee~\cite{ML15} to systems with several interfaces.

We define the relation~$\sqsubseteq$ on $\mathcal{X}$ such that $X_1 \sqsubseteq X_2$ if $X_1$ is an \emph{initial segment} of $X_2$, i.e., $X_2$ contains all elements from~$X_1$ and all additional elements are greater than all elements in~$X_1$:
\begin{equation*}
	X_1 \sqsubseteq X_2 \ \vcentcolon\Longleftrightarrow \ X_1 \subseteq X_2 \ \wedge \ \forall x_1 \in X_1 \ \forall x_2 \in X_2 \setminus X_1 \ (x_1 \prec x_2).
\end{equation*}
It is easy to verify that $(\mathcal{X}, \sqsubseteq)$ is a poset.


\paragraph{Examples and comparison to Kahn networks\ifelsarticle\else.\fi}
In this paragraph, we consider the intuitive special case $P = \valsp \times \R$ with the partial order as defined in \eqref{eq:valtimPO}. In contrast to Kahn networks, causal systems for this choice of~$P$ directly allow modeling systems that produce outputs depending on the precise time of the inputs. As we will see below, there are also more subtle differences.

If we ignore the concrete times and only keep the order of the elements in an input or output $X \in \mathcal{X}$, we obtain a sequence of values in $\valsp$, i.e., an element of the domain of Kahn networks. Moreover, the relation~$\sqsubseteq$ on~$\mathcal{X}$ then corresponds to the relation~$\sqsubseteq$ on sequences we defined for Kahn networks. However, causality does not imply that the function is monotone with respect to~$\sqsubseteq$. To see this, consider the function~$\sys{s} \colon \mathcal{X}^{\{\inIn\}} \to \mathcal{X}^{\{\outIn\}}$ with
\begin{equation*}
	\sys{s}(\tuple{X})(\outIn) = \begin{cases}
		\{ (1, 1) \}, &(1, 0) \in \tuple{X}(\inIn), \\
		\{ (0, 1) \}, &(1, 0) \notin \tuple{X}(\inIn).
	\end{cases}
\end{equation*}
This function corresponds to a system with one input interface~$\inIn$ and one output interface~$\outIn$ that outputs at time~$1$ a bit indicating whether $1$ has been input at time~$0$. This is a causal system with respect to \autoref{def:causalSys} because outputs can only change at time~$1$, and if they do, the inputs must have changed at time~$0$, which is before time~$1$. On the other hand, $\sys{s}$ is not monotone with respect to~$\sqsubseteq$: We have $\emptyset \sqsubseteq \{ (1, 0) \}$, but on input~$\emptyset$, the system outputs~$\{(0,1)\}$, and on input~$\{ (1, 0) \}$, the system outputs~$\{(1,1)\}$, where $\{(0,1)\} \not\sqsubseteq \{(1,1)\}$.

Note that our definition of causality requires that for every change in the outputs, there is a change in the inputs \emph{strictly} before that. Including systems that produce outputs without any delay makes it impossible to allow arbitrary connections: Consider a system that at one interface takes as input a bit~$b$ and outputs $1-b$ at another interface \emph{at the same time}, and outputs there the bit~$0$ at time~$1$ if no inputs are given at that time. If we now connect these two interfaces, one cannot consistently assign a value to that interface.

This restriction means in particular that the function $\mathcal{X}^{\{\inIn\}} \to \mathcal{X}^{\{\outIn_1, \outIn_2\}}$ that maps the input to itself at both output interface is not a causal system, because changing an input at some point in time results in a change of the outputs at the same time. To model a causal system that corresponds to the Kahn network we gave as an example in \autoref{sec:Kahn}, which forwards all inputs to both output interfaces, we therefore need to add some delay to the inputs. For the sake of simplicity, we here add a fixed delay~$\delta > 0$ to all inputs. We obtain the function~$\sys{s} \colon \mathcal{X}^{\{\inIn\}} \to \mathcal{X}^{\{\outIn_1, \outIn_2\}}$ with
\begin{equation*}
	\sys{s}(\tuple{X})(\outIn_1) = \sys{s}(\tuple{X})(\outIn_2) = \{ (\val, \tim + \delta) \mid (\val, \tim) \in \tuple{X}(\inIn) \}.
\end{equation*}
One can easily verify that this function does satisfy \autoref{def:causalSys}. For connecting interfaces~$\inIn$ and $\outIn_1$, $\emptyset$ is the \emph{unique} fixed point: It clearly is a fixed point, and there is no other fixed point since if there is some input, the output is delayed by~$\delta$ and hence different from the input. Therefore, the resulting system produces no output, which is consistent with our analysis of the corresponding Kahn network. Note, however, that in contrast to the Kahn network, we here have a unique fixed point, so the output~$\emptyset$ is not only intuitive, but also formally the only option. We show in \autoref{sec:causalFP} that causality always guarantees unique fixed points.

\paragraph{On the restriction to well-ordered subsets\ifelsarticle\else.\fi}
One cannot in general define a consistent system algebra of causal systems for $\mathcal{X} = \mathcal{P}(P)$. To see this, let $P = \Q$ and let $\preceq$ be the usual order on~$\Q$. For this poset, an output~$\tim \in \Q$ can be interpreted as an output from a unary set at time~$\tim$. Consider a system~$\sys{s}$ with input interface~$\inIn$ and output interfaces~$\outIn$ and $\outIn'$ that for all $n \in \N \setminus \{0\}$ generates an output at time $1/n$ at both output interfaces if and only if it did not receive an input before time~$1/n$. Formally, $\sys{s}(\{(\inIn, X)\}) = \{(\outIn, Y), (\outIn', Y)\}$ with $Y = \{1/n \mid n \in \mathds{N} \setminus \{0\} \wedge \forall \tim \in X \ (\tim \geq 1/n) \}$. This system clearly respects causality, but, e.g., for $X = \emptyset$, the output is not well-ordered because for each output, there is another output before. When we connect interfaces~$\inIn$ and $\outIn$, the resulting system does not have a well-defined behavior: If it outputs something at time $1/n$ at interface $\outIn'$, it did not receive any input before, i.e., it did not output anything at interface~$\outIn$ before. In particular, it did not output anything before time $1/2n$. But then, it would output something at time $1/2n$, a contradiction. Hence, the system cannot output anything at all. But then, it would output something at time~$1$, also a contradiction.
		
This example shows that we at least have to exclude infinite descending chains, i.e., $\mathcal{X}$ can only contain subsets of $P$ on which the ordering is well-founded. This is implied by only considering well-ordered subsets as we do. One could generalize~$\mathcal{X}$ to include non-well-ordered sets that contain no infinite descending chains, as done by Matsikoudis and Lee~\cite{ML15}. A nonempty subset of such sets contains minimal elements, but not necessarily a unique minimum, allowing for incomparable or concurrent inputs and outputs. Since our systems, in contrast to the functions considered by Matsikoudis and Lee, can have multiple interfaces, we do not lose much generality by restricting ourselves to well-ordered sets: At different interfaces, we can still have inputs or outputs at incomparable or equal times. We also find it natural to have several interfaces for potentially simultaneous inputs or outputs because this makes the assumptions on the concurrency explicit. Moreover, well-ordered sets are easier to handle and allow more intuitive proofs.

\subsection{Fixed Points of Causal Functions} \label{sec:causalFP}
To define interface connection for causal systems, we need to prove a result about fixed points. We show that causal functions, which correspond to a single-interface variant of \autoref{def:causalSys}, have a unique fixed point. We also show how this fixed point can be obtained, i.e., we provide a constructive proof. A similar result was proven non-constructively by Naundorf~\cite{Nau00}; Matsikoudis and Lee have provided a constructive proof of a related result~\cite{ML15}. However, we find that our construction is simpler and better applicable in our setting.

The intuition is similar to that of other constructive fixed point theorems: Start with $X_0 = \emptyset$ and then apply the function~$f$ over and over again until a fixed point is reached. In contrast to monotone functions (see \autoref{thm:monotoneFP} and \autoref{thm:continuousFP}), simply setting $X_{\alpha+1} = f(X_{\alpha})$ does not work because an input can prevent future outputs; consider for example a function~$f$ that on input~$\emptyset$ outputs~$\{x_0, x_1\}$ with $x_0 \prec x_1$, and on input~$\{x_0\}$ outputs~$\{x_0\}$. This does not contradict causality but the fixed point~$\{x_0\}$ is not reached if we set $X_{\alpha+1} = f(X_{\alpha})$ and $f(\{x_0, x_1\}) \neq \{x_0\}$. We therefore only add the least new element in $f(X_{\alpha})$ to $X_{\alpha}$ to obtain $X_{\alpha+1}$. This means for the example above that $X_1 = X_0 \cup \{ \min(f(X_0) \setminus X_0) \} = \{x_0\}$. For nonzero limit ordinals~$\alpha$, we set $X_{\alpha} = \bigcup_{\beta < \alpha} X_{\beta}$.
\begin{theorem} \label{thm:causalFP}
	Let $\mathcal{X}$ be as above and $f \colon \mathcal{X} \to \mathcal{X}$ such that
	\begin{equation} \label{eq:fpthmfAss}
		\forall X, X' \in \mathcal{X} \ \forall y \in \bigl(f(X) \symdif f \bigl(X' \bigr) \bigr) \ \exists x \in \bigl(X \symdif X' \bigr) \ \ x \prec y.
	\end{equation}
	Then, $f$ has a unique fixed point. This fixed point equals~$X_{\hat{\alpha}}$ for some ordinal~$\hat{\alpha}$, where
	\begin{alignat*}{2}
		&&X_0 &= \emptyset, \\
		\text{for any ordinal } \alpha, \qquad &&X_{\alpha+1} &= \begin{cases}
			X_{\alpha}, &f(X_{\alpha}) \setminus X_{\alpha} = \emptyset, \\
			X_{\alpha} \cup \{ \min(f(X_{\alpha}) \setminus X_{\alpha}) \}, &\text{otherwise},
		\end{cases} \\
		\text{and for nonzero limit ordinals } \alpha, \qquad &&X_{\alpha} &= \bigcup_{\beta < \alpha} X_{\beta}.
	\end{alignat*}
\end{theorem}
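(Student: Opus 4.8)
The plan is to obtain existence from the displayed transfinite construction and to settle uniqueness by a separate, shorter argument. The genuinely delicate part is not convergence but the \emph{well-definedness} of the construction: I must show that every $X_\alpha$ lies in $\mathcal{X}$ (is well-ordered) and that the sequence is $\sqsubseteq$-increasing, i.e.\ each $X_\beta$ is an initial segment of $X_\alpha$ for $\beta\le\alpha$. Since adding a single element to a well-ordered set need not preserve well-ordering in a general poset (the new element could be incomparable to existing ones), this is precisely where the causality hypothesis~\eqref{eq:fpthmfAss} must be invoked.

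Accordingly, the first and hardest step is to prove by transfinite induction the invariant $Q(\alpha)$: ``$X_\alpha$ is well-ordered, and $X_\beta\sqsubseteq X_\alpha$ for all $\beta\le\alpha$.'' The limit case is routine, since $X_\alpha=\bigcup_{\beta<\alpha}X_\beta$ is the union of a $\sqsubseteq$-chain, and a union of an initial-segment chain of well-ordered sets is again well-ordered with each member as an initial segment. The successor case $\alpha\to\alpha+1$ is the crux. Writing $x_\alpha\coloneqq\min(f(X_\alpha)\setminus X_\alpha)$, which exists because $f(X_\alpha)\in\mathcal{X}$ is well-ordered, it suffices to show $x\prec x_\alpha$ for every $x\in X_\alpha$; this yields both well-ordering of $X_{\alpha+1}$ and $X_\alpha\sqsubseteq X_{\alpha+1}$. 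To use causality, I observe that each $x\in X_\alpha$ first enters at a successor step, so $x=x_\gamma=\min(f(X_\gamma)\setminus X_\gamma)$ for a unique $\gamma<\alpha$; in particular $x\in f(X_\gamma)$ and $X_\gamma,X_{\gamma+1}\sqsubseteq X_\alpha$ by the induction hypothesis, so every element of $X_\alpha\setminus X_{\gamma+1}$ lies strictly above $x_\gamma$. Applying~\eqref{eq:fpthmfAss} to the pair $X_\gamma,X_\alpha$, I would first rule out $x\notin f(X_\alpha)$: otherwise $x\in f(X_\gamma)\symdif f(X_\alpha)$ would produce some $x''\in X_\alpha\setminus X_\gamma$ with $x''\prec x=x_\gamma$, whence $x''\in X_\alpha\setminus X_{\gamma+1}$ and so $x''\succ x_\gamma$, a contradiction. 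This simultaneously yields the useful inclusion $X_\alpha\subseteq f(X_\alpha)$. Knowing $x\in f(X_\alpha)$, the elements $x$ and $x_\alpha$ are comparable in the well-order of $f(X_\alpha)$, and a second application of causality, together with the minimality defining $x_\gamma$ in $f(X_\gamma)$, excludes $x\succ x_\alpha$, leaving $x\prec x_\alpha$ as required.

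With $Q(\alpha)$ established for all $\alpha$, convergence is immediate: the $X_\alpha$ form a $\subseteq$-increasing family in the set $(\mathcal{X},\subseteq)$, so \autoref{lem:chainFixHartogs} provides an ordinal $\hat\alpha$ with $X_{\hat\alpha}=X_{\hat\alpha+1}$. By the construction this equality forces $f(X_{\hat\alpha})\setminus X_{\hat\alpha}=\emptyset$, i.e.\ $f(X_{\hat\alpha})\subseteq X_{\hat\alpha}$; combined with the inclusion $X_{\hat\alpha}\subseteq f(X_{\hat\alpha})$ found above, this gives $f(X_{\hat\alpha})=X_{\hat\alpha}$, so $X_{\hat\alpha}$ is a fixed point.

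For uniqueness I would argue that the construction is irrelevant and causality alone suffices. Suppose $Y,Z\in\mathcal{X}$ are fixed points with $Y\neq Z$, so $Y\symdif Z\neq\emptyset$. Since $f(Y)=Y$ and $f(Z)=Z$, applying~\eqref{eq:fpthmfAss} to the pair $Y,Z$ shows that every $y\in Y\symdif Z$ has some $x\in Y\symdif Z$ with $x\prec y$; hence $Y\symdif Z$ has no minimal element and admits an infinite strictly descending chain. By the pigeonhole principle, infinitely many of its members lie in $Y$ or infinitely many lie in $Z$, producing an infinite descending chain inside a well-ordered set, which is impossible. Therefore $Y=Z$, and $X_{\hat\alpha}$ is the unique fixed point. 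I expect the successor step of the well-definedness induction to be the main obstacle, as it is the only place where the full strength of causality and the initial-segment bookkeeping must interact.
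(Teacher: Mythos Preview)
Your proposal is correct and follows essentially the same route as the paper: a transfinite induction establishing that the $X_\alpha$ are well-ordered and $\sqsubseteq$-increasing, the inclusion $X_\alpha\subseteq f(X_\alpha)$, stabilization via \autoref{lem:chainFixHartogs}, and a separate uniqueness argument from causality alone. The only organisational difference is that the paper strengthens the inductive invariant to include ``every element of $X_\alpha$ is strictly below every element of $f(X_\alpha)\setminus X_\alpha$'' and proves $X_\alpha\subseteq f(X_\alpha)$ as a second, independent induction; this makes the successor step a one-liner, whereas you recover the same facts by tracking the birth ordinal $\gamma$ of each $x\in X_\alpha$ and applying~\eqref{eq:fpthmfAss} twice. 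Your uniqueness argument via an infinite descending chain in $Y\symdif Z$ is a slight variant of the paper's (which takes $\min(Y_1\setminus Y_2)$ directly and avoids dependent choice), but both are valid.
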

\begin{proof}
	Note that $X_{\alpha+1}$ is well-defined if $X_{\alpha}$ is well-ordered, which is implied by the following claim.
	\begin{claim} \label{claim:fpthmC1}
		For all ordinals~$\alpha$, $X_{\alpha}$ is well-ordered, we have $X_{\beta} \sqsubseteq X_{\alpha}$ for all $\beta \leq \alpha$, and
		\begin{equation*}
			\forall x_1 \in X_{\alpha} \ \forall x_2 \in f(X_{\alpha}) \setminus X_{\alpha} \ (x_1 \prec x_2).
		\end{equation*}
	\end{claim}
	\begin{claimproof}
		The proof is by transfinite induction over~$\alpha$.
		\begin{enumerate}[label=(\roman*)]
			\item For $\alpha = 0$, there is nothing to show since $X_{\alpha} = \emptyset$.
			
			\item Assume the claim holds for some ordinal~$\alpha$. If $X_{\alpha + 1} = X_{\alpha}$, the claim also holds for $\alpha + 1$. Otherwise, $X_{\alpha + 1} = X_{\alpha} \cup \{ \min(f(X_{\alpha}) \setminus X_{\alpha}) \}$. To see that $X_{\alpha+1}$ is well-ordered, let $S \subseteq X_{\alpha+1}$, $S \neq \emptyset$. If $S \cap X_{\alpha} = \emptyset$, $S$ contains only one element, which is then the least element. Otherwise, the least element of~$S$ is $\min(S \cap X_{\alpha})$ because $X_{\alpha+1} \setminus X_{\alpha}$ only contains the element $\min(f(X_{\alpha}) \setminus X_{\alpha})$, and since the claim holds for $\alpha$, this element is greater than all elements in~$X_{\alpha}$.
			
			Since the claim holds for $\alpha$, we have $\forall x_1 \in X_{\alpha} \ \forall x_2 \in f(X_{\alpha}) \setminus X_{\alpha} \ (x_1 \prec x_2)$.
			This implies $X_{\alpha} \sqsubseteq X_{\alpha+1}$. Since $X_{\beta} \sqsubseteq X_{\alpha}$ for all $\beta \leq \alpha$, we also have $X_{\beta} \sqsubseteq X_{\alpha+1}$ for all $\beta \leq \alpha + 1$.
			
			Let $x_1 \in X_{\alpha+1}$ and $x_2 \in f(X_{\alpha+1}) \setminus X_{\alpha + 1}$. We then have $x_1 \preceq \min(f(X_{\alpha}) \setminus X_{\alpha})$. If $x_2 \in f(X_{\alpha})$, then $x_2 \in (f(X_{\alpha}) \setminus X_{\alpha} ) \setminus \{\min(f(X_{\alpha}) \setminus X_{\alpha}) \}$; hence, $x_1 \prec x_2$. Otherwise, $x_2 \in f(X_{\alpha + 1}) \setminus f(X_{\alpha})$. 
			By~\eqref{eq:fpthmfAss}, we then have $\min(f(X_{\alpha}) \setminus X_{\alpha}) \prec x_2$, and since $x_1 \preceq \min(f(X_{\alpha}) \setminus X_{\alpha})$, we also have $x_1 \prec x_2$. Thus, the claim holds for $\alpha + 1$.
		
			\item Now let $\alpha$ be a nonzero limit ordinal and assume the claim holds for all $\beta < \alpha$. To show that $X_{\alpha}$ is well-ordered, let $S \subseteq X_{\alpha}$, $S \neq \emptyset$. Let $\beta$ be the least ordinal such that $S \cap X_{\beta} \neq \emptyset$. By definition of~$X_{\alpha}$, we have $\beta < \alpha$ and therefore, $s \coloneqq \min(S \cap X_{\beta})$ exists. To see that~$s$ is the least element of~$S$, let $x \in S \setminus X_{\beta}$. Then, $x \in X_{\gamma}$ for some $\gamma$ with $\beta < \gamma < \alpha$. This implies $X_{\beta} \sqsubseteq X_{\gamma}$, and therefore $s \prec x$. Hence, $X_{\alpha}$ is well-ordered.
			
			To show that $X_{\beta} \sqsubseteq X_{\alpha}$ for all $\beta \leq \alpha$, note that $X_{\beta} \subseteq X_{\alpha}$ and let $x_1 \in X_{\beta}$, $x_2 \in X_{\alpha} \setminus X_{\beta}$. Then, there exists some ordinal~$\gamma$ such that $x_2 \in X_{\gamma} \setminus X_{\beta}$ and $\beta < \gamma < \alpha$. Because we assume that the claim holds for all ordinals less than $\alpha$, we obtain $X_{\beta} \sqsubseteq X_{\gamma}$. Hence, $x_1 \prec x_2$ and thus $X_{\beta} \sqsubseteq X_{\alpha}$.
		
			Let $x_1 \in X_{\alpha}$ and $x_2 \in f(X_{\alpha}) \setminus X_{\alpha}$. By definition of~$X_{\alpha}$, there exists some $\beta < \alpha$ such that $x_1 \in X_{\beta}$. As shown above, we have $X_{\beta} \sqsubseteq X_{\alpha}$. Therefore, $x_2 \in f(X_{\alpha}) \setminus X_{\beta}$. If $x_2 \in f(X_{\beta})$, we have $x_1 \prec x_2$ since $x_2 \in f(X_{\beta}) \setminus X_{\beta}$ and the claim holds for $\beta$. Otherwise, we have $x_2 \in f(X_{\alpha}) \setminus f(X_{\beta})$. By~\eqref{eq:fpthmfAss}, this implies $\min(X_{\alpha} \setminus X_{\beta}) \prec x_2$. Because $x_1 \in X_{\beta} \sqsubseteq X_{\alpha}$, we conclude $x_1 \prec \min(X_{\alpha} \setminus X_{\beta}) \prec x_2$. Altogether, the claim holds for $X_{\alpha}$.
		\end{enumerate}
		
		By \autoref{thm:transfInd}, the claim holds for all ordinals~$\alpha$.
	\end{claimproof}
	
	We now have by \autoref{claim:fpthmC1} and \autoref{lem:chainFixHartogs} that there exists an ordinal~$\hat{\alpha}$ such that $X_{\hat{\alpha}} = X_{\hat{\alpha}+1}$. Therefore, $f(X_{\hat{\alpha}}) \setminus X_{\hat{\alpha}} = \emptyset$. To conclude that $X_{\hat{\alpha}}$ is a fixed point of~$f$, we show that $X_{\hat{\alpha}} \subseteq f(X_{\hat{\alpha}})$, and hence $f(X_{\hat{\alpha}}) = X_{\hat{\alpha}}$.
	\begin{claim}\label{claim:fpthmC2}
		For all ordinals~$\alpha$, $X_{\alpha} \subseteq f(X_{\alpha})$.
	\end{claim}
	\begin{claimproof}
		We prove the claim by transfinite induction over~$\alpha$.
		\begin{enumerate}[label=(\roman*)]
			\item For $\alpha = 0$, the claim trivially holds.
			
			\item Assume the claim holds for some ordinal~$\alpha$ and let $x \in X_{\alpha+1}$. Then, $x \in f(X_{\alpha})$ because either $x \in X_{\alpha} \subseteq f(X_{\alpha})$ or $x = \min(f(X_{\alpha}) \setminus X_{\alpha}) \in f(X_{\alpha})$. Assume toward a contradiction that $x \notin f(X_{\alpha + 1})$. We then have $x \in f(X_{\alpha}) \setminus f(X_{\alpha+1})$. By~\eqref{eq:fpthmfAss} and the definition of $X_{\alpha + 1}$, this implies $\min(f(X_{\alpha}) \setminus X_{\alpha}) \prec x$. Since $x \in X_{\alpha + 1}$, we either have $x = \min(f(X_{\alpha}) \setminus X_{\alpha})$ or $x \in X_{\alpha}$ and therefore $x \prec \min(f(X_{\alpha}) \setminus X_{\alpha})$ by \autoref{claim:fpthmC1}. In both cases, we obtain $x \preceq \min(f(X_{\alpha}) \setminus X_{\alpha}) \prec x$, a contradiction. Hence, $x \in f(X_{\alpha + 1})$.
			
			\item Let $\alpha$ be a nonzero limit ordinal and assume $X_{\beta} \subseteq f(X_{\beta})$ for all $\beta < \alpha$. Let $x \in X_{\alpha}$. Then, $x \in X_{\beta} \subseteq f(X_{\beta})$ for some $\beta < \alpha$. By \autoref{claim:fpthmC1}, we have $X_{\beta} \sqsubseteq X_{\alpha}$. Assume toward a contradiction that $x \notin f(X_{\alpha})$. Then, \eqref{eq:fpthmfAss} implies $\min(X_{\alpha} \setminus X_{\beta}) \prec x$. However, since $x \in X_{\beta} \sqsubseteq X_{\alpha}$, we also have $x \prec \min(X_{\alpha} \setminus X_{\beta})$, a contradiction. Thus, $X_{\alpha} \subseteq f(X_{\alpha})$. \qedhere
		\end{enumerate}
	\end{claimproof}
	
	To prove uniqueness, let $Y_1$ and $Y_2$ be fixed points of~$f$. If $Y_1 \neq Y_2$, we can assume without loss of generality that $Y_1 \setminus Y_2 \neq \emptyset$ and
	\begin{equation} \label{eq:fpthmunas}
		\forall y \in Y_2 \setminus Y_1 \ \bigl(y \not\prec \min(Y_1 \setminus Y_2) \bigr)
	\end{equation}
	(otherwise, swap the roles of $Y_1$ and $Y_2$). Since $\min(Y_1 \setminus Y_2) \in Y_1 \setminus Y_2 = f(Y_1) \setminus f(Y_2)$, we have by \eqref{eq:fpthmfAss} that there exists some $y \in Y_1 \symdif Y_2$ such that $y \prec \min(Y_1 \setminus Y_2)$. Then, $y \in Y_1 \setminus Y_2$ implies $y \prec y$, and $y \in Y_2 \setminus Y_1$ contradicts \eqref{eq:fpthmunas}. Therefore, we must have $Y_1 = Y_2$. This concludes the proof.
\end{proof}

While for some functions, $\hat{\alpha}$ will be a small finite number, using ordinals beyond~$\omega$ cannot be avoided in general. To illustrate this, and to show that the iteration can still be carried out in this case, consider the following example. Again let $P = \Q$ and let $\preceq$ be the usual order on~$\Q$. Let $f$ be the function that on the empty input outputs $\N$, and for each input~$x \in \Q$, additionally outputs $(x + \lfloor x+1 \rfloor) / 2 \succ x$. That is
\begin{equation*}
	f(X) = \N \cup \{(x + \lfloor x+1 \rfloor) / 2 \mid x \in X \}.
\end{equation*}
We then have
\begin{equation*}
	X_0 = \emptyset, X_1 = \bigl\{\min f(\emptyset) \bigr\} = \{0\}, X_2 = \bigl\{0, \tfrac{1}{2} \bigr\}, X_3 = \bigl\{0, \tfrac{1}{2}, \tfrac{3}{4} \bigr\}, \ldots, X_n = \bigl\{0, \tfrac{1}{2}, \ldots, \tfrac{2^{n-1}-1}{2^{n-1}} \bigr\}, \ldots
\end{equation*}
Taking the first limit, we obtain $X_\omega = \bigcup_{\beta < \omega} X_{\beta} = \bigl\{ \tfrac{2^{n}-1}{2^{n}} \bigm\vert n \in \N \bigr\}$. We then continue as before:
\begin{equation*}
	X_{\omega + 1} = X_\omega \cup \{1\}, X_{\omega + 2} = X_\omega \cup \{1, 1+\tfrac{1}{2}\}, \ldots, X_{\omega + n} = X_\omega \cup \bigl\{1, 1+\tfrac{1}{2}, \ldots, 1+\tfrac{2^{n-1}-1}{2^{n-1}} \bigr\}, \ldots
\end{equation*}
For the next limit we have $X_{\omega + \omega} = X_{\omega \cdot 2} = \bigl\{ m + \tfrac{2^{n}-1}{2^{n}} \bigm\vert m \in \{0,1\}, n \in \N \bigr\}$. Continuing this process, we arrive at
\begin{equation*}
	X_{\omega \cdot \omega} = \bigl\{ m + \tfrac{2^{n}-1}{2^{n}} \bigm\vert m, n \in \N \bigr\}.
\end{equation*}
It is easy to see that $X_{\omega \cdot \omega}$ is a fixed point of~$f$ and therefore $\hat{\alpha} = \omega \cdot \omega$.

If we again interpret an output~$\tim \in \Q$ as an output from a unary set at time~$\tim$, the system corresponding to the function~$f$ after connecting interfaces generates infinitely many outputs in a finite time. This is often called \emph{Zeno behavior}, named after Zeno's paradoxes. Such behavior can be avoided by requiring \emph{delta causality} instead of just causality, which places a lower bound~$\delta$ on the reaction time of a system \cite{LeeSan98,ML15}. While this seems to be a reasonable restriction in practice, it makes the definitions more cumbersome. Also, the system described above does have a well-defined behavior, and the fixed point is what one expects. Having less restrictions on the type of system can also improve mathematical convenience when modeling systems.

\subsection{Defining the System Algebra}
We can apply \autoref{thm:causalFP} to show that causal systems have a unique fixed point for arbitrary interface connections.
\begin{corollary} \label{cor:sysFixpoint}
	Let $\inInSet, \outInSet \subseteq \inSet$ be finite disjoint sets, let $\sys{s} \in \causalSysIntSpace{\inInSet}{\outInSet}$, and let $\inIn \in \inInSet$, $\outIn \in \outInSet$. Then, for all $\tuple{X} \in \mathcal{X}^{\inInSet \setminus \{\inIn\}}$, there exists a unique $X_{\inIn} \in \mathcal{X}$ such that $\sys{s}\bigl(\tuple{X} \cup \{(\inIn, X_{\inIn})\} \bigr)(\outIn) = X_{\inIn}$.
\end{corollary}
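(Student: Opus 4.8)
The plan is to reduce the statement directly to \autoref{thm:causalFP} by freezing the inputs at all interfaces other than~$\inIn$. For a fixed $\tuple{X} \in \mathcal{X}^{\inInSet \setminus \{\inIn\}}$, I would define the single-variable function $f \colon \mathcal{X} \to \mathcal{X}$ via $f(X) \coloneqq \sys{s}\bigl(\tuple{X} \cup \{(\inIn, X)\} \bigr)(\outIn)$. Exactly as in the proof of \autoref{lem:monoSysFixpoint}, the fixed points of~$f$ are precisely the values $X_\inIn \in \mathcal{X}$ with $\sys{s}\bigl(\tuple{X} \cup \{(\inIn, X_\inIn)\} \bigr)(\outIn) = X_\inIn$, so it suffices to show that~$f$ satisfies the hypothesis~\eqref{eq:fpthmfAss} of \autoref{thm:causalFP}; that theorem then delivers a unique fixed point, which is exactly the asserted unique~$X_\inIn$.

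The one thing to check is thus~\eqref{eq:fpthmfAss} for~$f$. I would take $X, X' \in \mathcal{X}$, set $\tuple{X}_1 \coloneqq \tuple{X} \cup \{(\inIn, X)\}$ and $\tuple{X}_2 \coloneqq \tuple{X} \cup \{(\inIn, X')\}$, and let $y \in f(X) \symdif f(X') = \sys{s}(\tuple{X}_1)(\outIn) \symdif \sys{s}(\tuple{X}_2)(\outIn)$. Causality of~$\sys{s}$ (\autoref{def:causalSys}) then provides some $\inIn^* \in \inInSet$ and $x \in \tuple{X}_1(\inIn^*) \symdif \tuple{X}_2(\inIn^*)$ with $x \prec y$. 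The crucial—though easy—observation is that $\tuple{X}_1$ and $\tuple{X}_2$ coincide on every interface except~$\inIn$, so the symmetric difference at interface~$\inIn^*$ is empty unless $\inIn^* = \inIn$; hence necessarily $\inIn^* = \inIn$ and $x \in \tuple{X}_1(\inIn) \symdif \tuple{X}_2(\inIn) = X \symdif X'$, which is exactly what~\eqref{eq:fpthmfAss} demands.

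I expect no real obstacle here: the argument is pure definition-chasing, and the only point needing attention is that the existential over interfaces in the causality condition collapses to the single interface~$\inIn$, precisely because the two input tuples fed to~$\sys{s}$ differ only there. Once this is established, \autoref{thm:causalFP} finishes the proof and simultaneously exhibits how the fixed point~$X_\inIn$ can be obtained by the transfinite iteration described there.
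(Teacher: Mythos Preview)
Your proposal is correct and follows essentially the same route as the paper: define $f(X) = \sys{s}(\tuple{X} \cup \{(\inIn, X)\})(\outIn)$, observe that the two input tuples differ only at interface~$\inIn$ so that the causality witness must lie in $X \symdif X'$, and then invoke \autoref{thm:causalFP}. The paper's argument is the same, just stated slightly more tersely.
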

\begin{proof}
	For $\tuple{X} \in \mathcal{X}^{\inInSet \setminus \{\inIn\}}$, let $f \colon \mathcal{X} \to \mathcal{X}, X \mapsto \sys{s}(\tuple{X} \cup \{(\inIn, X)\})(\outIn)$ and note that the fixed points of~$f$ precisely correspond to the values $X_{\inIn} \in \mathcal{X}$ such that $\sys{s}\bigl(\tuple{X} \cup \{(\inIn, X_{\inIn})\} \bigr)(\outIn) = X_{\inIn}$. It is therefore sufficient to show that $f$ has a unique fixed point. To verify that $f$ satisfies the condition of \autoref{thm:causalFP}, let $X, X' \in \mathcal{X}$ and let $y \in \bigl(f(X) \symdif f(X') \bigr)$. Then, $y \in \bigl(\sys{s}(\tuple{X} \cup \{(\inIn, X)\})(\outIn) \symdif \sys{s}(\tuple{X} \cup \{(\inIn, X')\})(\outIn) \bigr)$. We therefore have by \autoref{def:causalSys} that there exists $x \in X \symdif X'$ such that $x \prec y$. Hence, \autoref{thm:causalFP} implies that $f$ has a unique fixed point.
\end{proof}

\begin{definition}\label{def:causalSysAlg}
	We define $\genFunSys{\causalSysSpace}{\connectSet}{\fpChoiceFun}$, the \emph{$\inSet$-system algebra of causal systems over $\mathcal{X}$}, as follows: For $\sys{s} \in \causalSysIntSpace{\inInSet}{\outInSet}$, $\connectSet(\sys{s}) \coloneqq \bigl\{ \{\inIn, \outIn \} \mid \inIn \in \inInSet, \outIn \in \outInSet \bigr\}$ and $\chosenFixedPoint{\sys{s}}{\inIn}{\outIn}{\tuple{X}}$ is the unique element of $\fixedPoints{\sys{s}}{\inIn}{\outIn}{\tuple{X}}$ for all $\inIn \in \inInSet$, $\outIn \in \outInSet$, and $\tuple{X} \in \mathcal{X}^{\inInSet \setminus \{\inIn\}}$.
\end{definition}

To show that causal systems form a functional system algebra, we have to prove that connecting interfaces of a causal system again yields a causal system. To this end, we first show that changing an input~$\tuple{X}$ can only change the fixed point after that change in~$\tuple{X}$.
\begin{lemma}\label{lem:fpmindif}
	Let $\inInSet, \outInSet \subseteq \inSet$ be finite disjoint sets, let $\sys{s} \in \causalSysIntSpace{\inInSet}{\outInSet}$, and let $\inIn \in \inInSet$, $\outIn \in \outInSet$. Further let $\tuple{X}, \tuple{X}' \in \mathcal{X}^{\inInSet \setminus \{\inIn\}}$. We then have
	\begin{equation*}
		\forall y \in \chosenFixedPoint{\sys{s}}{\inIn}{\outIn}{\tuple{X}} \symdif \chosenFixedPoint{\sys{s}}{\inIn}{\outIn}{\tuple{X}'} \ \exists \inIn' \in \inInSet \setminus \{\inIn\} \ \exists x \in \tuple{X}(\inIn') \symdif \tuple{X}'(\inIn') \ (x \prec y).
	\end{equation*}
\end{lemma}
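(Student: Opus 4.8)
Let me write $Z \coloneqq \chosenFixedPoint{\sys{s}}{\inIn}{\outIn}{\tuple{X}}$ and $Z' \coloneqq \chosenFixedPoint{\sys{s}}{\inIn}{\outIn}{\tuple{X}'}$ for the two fixed points, which exist and are unique by \autoref{cor:sysFixpoint}. The plan is to reduce the statement to the causality property of $\sys{s}$ (\autoref{def:causalSys}) by feeding the fixed points back into the connected interface. Concretely, I would set $\tuple{W} \coloneqq \tuple{X} \cup \{(\inIn, Z)\}$ and $\tuple{W}' \coloneqq \tuple{X}' \cup \{(\inIn, Z')\}$ in $\mathcal{X}^{\inInSet}$. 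Because $Z$ and $Z'$ are fixed points, $\sys{s}(\tuple{W})(\outIn) = Z$ and $\sys{s}(\tuple{W}')(\outIn) = Z'$, so the output symmetric difference at $\outIn$ is exactly $Z \symdif Z'$. Applying \autoref{def:causalSys} to $\tuple{W}, \tuple{W}'$ and $\outIn$ then yields, for every $y \in Z \symdif Z'$, an interface $\inIn'' \in \inInSet$ and some $x \in \tuple{W}(\inIn'') \symdif \tuple{W}'(\inIn'')$ with $x \prec y$. If $\inIn'' \neq \inIn$, then $\tuple{W}(\inIn'') = \tuple{X}(\inIn'')$ and $\tuple{W}'(\inIn'') = \tuple{X}'(\inIn'')$, and we are exactly done for that $y$. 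The only obstacle is the feedback case $\inIn'' = \inIn$, where $x$ lies again in $Z \symdif Z'$ rather than in an input difference at some $\inIn' \neq \inIn$.

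To rule out this self-referential case I would argue by minimality. Let $B$ be the set of $y \in Z \symdif Z'$ that \emph{violate} the desired conclusion, and suppose toward a contradiction that $B \neq \emptyset$. I first need that $B$ has a $\preceq$-minimal element. Since $B \subseteq Z \cup Z'$ and both $Z$ and $Z'$ are well-ordered, any nonempty $U \subseteq Z \cup Z'$ has a minimal element: $U \cap Z$ and $U \cap Z'$ each possess a minimum whenever nonempty (being subsets of well-ordered, hence totally ordered, sets), and a short case analysis (comparing the two minima, allowing for incomparability) shows that at least one of them is minimal in all of $U$. Thus $Z \cup Z'$ is well-founded and $B$ has a minimal element $y$.

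I would then apply the causality consequence above to this minimal $y$. In the case $\inIn'' \neq \inIn$ we obtain the conclusion for $y$ directly, contradicting $y \in B$. In the feedback case $\inIn'' = \inIn$ we get $x \in Z \symdif Z'$ with $x \prec y$, and I split on whether $x \in B$: if $x \in B$, this contradicts the minimality of $y$; if $x \notin B$, then $x$ satisfies the conclusion, so there are $\inIn' \in \inInSet \setminus \{\inIn\}$ and $x' \in \tuple{X}(\inIn') \symdif \tuple{X}'(\inIn')$ with $x' \prec x$, whence $x' \prec x \prec y$ shows $y$ satisfies the conclusion too, again contradicting $y \in B$. Every branch is contradictory, so $B = \emptyset$, which is the claim.

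The genuinely delicate point is precisely the feedback term: causality alone is too weak, since it may always route the ``earlier cause'' through the connected interface $\inIn$ and stay inside $Z \symdif Z'$. The well-foundedness of $Z \symdif Z'$ (a subset of a union of two well-ordered sets, which need \emph{not} itself be totally ordered) is what makes the minimality argument go through and breaks this potential infinite regress; I expect establishing that minimal elements exist to be the only nontrivial step, the rest being a direct unfolding of \autoref{def:causalSys} and \autoref{cor:sysFixpoint}.
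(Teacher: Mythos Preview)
Your proposal is correct and follows essentially the same argument as the paper: define the set of violators, pick a $\preceq$-minimal one using that $Z \cup Z'$ is well-founded (as a union of two well-ordered sets), apply causality of $\sys{s}$ at $\outIn$, and derive a contradiction by chasing the ``cause'' either directly to an external interface or through one more step via transitivity. The paper's write-up is slightly more compressed (it immediately uses $y_0 \in E$ to force the feedback case and minimality to force $x \notin E$, rather than splitting into your explicit subcases), but the logic is identical.
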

\begin{proof}
	Let
	\begin{equation*}
		E \coloneqq \bigl\{ y \in \chosenFixedPoint{\sys{s}}{\inIn}{\outIn}{\tuple{X}} \symdif \chosenFixedPoint{\sys{s}}{\inIn}{\outIn}{\tuple{X}'} \mid \forall \inIn' \in \inInSet \setminus \{\inIn\} \ \forall x \in \tuple{X}(\inIn') \symdif \tuple{X}'(\inIn') \ (x \not\prec y) \bigr\}
	\end{equation*}
	and assume toward a contradiction that $E \neq \emptyset$. Then, there exists $y_0 \in E$ such that $\forall y \in E \ (y \not\prec y_0)$ because $\chosenFixedPoint{\sys{s}}{\inIn}{\outIn}{\tuple{X}}$ and $\chosenFixedPoint{\sys{s}}{\inIn}{\outIn}{\tuple{X}'}$ are well-ordered and $E \subseteq \chosenFixedPoint{\sys{s}}{\inIn}{\outIn}{\tuple{X}} \cup \chosenFixedPoint{\sys{s}}{\inIn}{\outIn}{\tuple{X}'}$. Since $\chosenFixedPoint{\sys{s}}{\inIn}{\outIn}{\tuple{X}} = \sys{s}\bigl(\tuple{X} \cup \{(\inIn, \chosenFixedPoint{\sys{s}}{\inIn}{\outIn}{\tuple{X}})\} \bigr)(\outIn)$ and $\chosenFixedPoint{\sys{s}}{\inIn}{\outIn}{\tuple{X}'} = \sys{s}\bigl(\tuple{X}' \cup \{(\inIn, \chosenFixedPoint{\sys{s}}{\inIn}{\outIn}{\tuple{X}'})\} \bigr)(\outIn)$, \autoref{def:causalSys} implies that there exists $\inIn' \in \inInSet$ and $x \in \bigl(\tuple{X} \cup \{(\inIn, \chosenFixedPoint{\sys{s}}{\inIn}{\outIn}{\tuple{X}})\} \bigr)(\inIn') \symdif \bigl(\tuple{X}' \cup \{(\inIn, \chosenFixedPoint{\sys{s}}{\inIn}{\outIn}{\tuple{X}'})\} \bigr)(\inIn')$ such that $x \prec y_0$. Because $y_0 \in E$, we have $\inIn' = \inIn$, i.e., $x \in \chosenFixedPoint{\sys{s}}{\inIn}{\outIn}{\tuple{X}} \symdif \chosenFixedPoint{\sys{s}}{\inIn}{\outIn}{\tuple{X}'}$. We further have $x \notin E$ since $x \prec y_0$ and $\forall y \in E \ (y \not\prec y_0)$. Thus, there exist $\inIn'' \in \inInSet \setminus \{\inIn\}$ and $x' \in \tuple{X}(\inIn'') \symdif \tuple{X}'(\inIn'')$ such that $x' \prec x$. This implies $x' \prec y_0$, contradicting $y_0 \in E$. Hence, we have $E = \emptyset$.
\end{proof}

\begin{theorem}\label{thm:causalSysAlg}
	$(\causalSysSpace, \intSetFunct, \parac, \connectSet, \connectFunc) = \genFunSys{\causalSysSpace}{\connectSet}{\fpChoiceFun}$ is a composition-order invariant functional $\inSet$-system algebra over $\mathcal{X}$.
\end{theorem}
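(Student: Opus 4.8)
The plan is to verify the three hypotheses needed to invoke \autoref{lem:unifpCompinv}: that $\genFunSys{\causalSysSpace}{\connectSet}{\fpChoiceFun}$ is a functional system algebra, that it has unique fixed points, and that $\connectSet$ permits reordering. Unique fixed points are immediate from \autoref{cor:sysFixpoint}, which gives $\bigl\lvert \fixedPoints{\sys{s}}{\inIn}{\outIn}{\tuple{X}} \bigr\rvert = 1$ for every admissible connection, so the choice function $\fpChoiceFun$ simply selects this unique element and the nonemptiness requirement on $\connectSet$ from \autoref{def:causalSys}'s companion definition is met. Once the functional-system-algebra property is established, composition-order invariance follows directly from \autoref{lem:unifpCompinv}, so the real content is verifying closure.

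Closure under $\parac$ is routine: by \eqref{eq:parallelCond} each output of $\sys{s}_1 \parac \sys{s}_2$ at an interface $\outIn_j \in \outInSet_j$ depends only on the inputs at $\inInSet_j$, so a change in this output is witnessed, via causality of $\sys{s}_j$, by an earlier change of an input at some interface in $\inInSet_j \subseteq \inInSet_1 \cup \inInSet_2$; hence $\sys{s}_1 \parac \sys{s}_2$ is again causal. The $\connectSet$ condition for parallel composition in \autoref{def:causalSysAlg} holds because $\connectSet(\sys{s})$ consists of exactly the input–output pairs of $\sys{s}$ and the interface sets of the factors are disjoint, so no new connectable pairs appear within a single factor.

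The heart of the proof, and the step I expect to be the main obstacle, is closure under interface connection: showing $\connect{\inIn}{\outIn}(\sys{s}) \in \causalSysIntSpace{\inInSet \setminus \{\inIn\}}{\outInSet \setminus \{\outIn\}}$. I would write $\tuple{Z} \coloneqq \tuple{X} \cup \{(\inIn, \chosenFixedPoint{\sys{s}}{\inIn}{\outIn}{\tuple{X}})\}$ and $\tuple{Z}' \coloneqq \tuple{X}' \cup \{(\inIn, \chosenFixedPoint{\sys{s}}{\inIn}{\outIn}{\tuple{X}'})\}$, so that for $\outIn'' \in \outInSet \setminus \{\outIn\}$ the two outputs of $\connect{\inIn}{\outIn}(\sys{s})$ equal $\sys{s}(\tuple{Z})(\outIn'')$ and $\sys{s}(\tuple{Z}')(\outIn'')$. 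Given $y$ in the symmetric difference of these outputs, \autoref{def:causalSys} applied to $\sys{s}$ yields an interface $\inIn''' \in \inInSet$ and an element $x \in \tuple{Z}(\inIn''') \symdif \tuple{Z}'(\inIn''')$ with $x \prec y$. If $\inIn''' \neq \inIn$, then $x$ already lies in $\tuple{X}(\inIn''') \symdif \tuple{X}'(\inIn''')$ and we are done. The only delicate case is $\inIn''' = \inIn$, where $x$ merely records a change in the chosen fixed point, i.e. $x \in \chosenFixedPoint{\sys{s}}{\inIn}{\outIn}{\tuple{X}} \symdif \chosenFixedPoint{\sys{s}}{\inIn}{\outIn}{\tuple{X}'}$; here \autoref{lem:fpmindif} is exactly what is needed, producing a genuine input change $x'$ at some $\inIn'' \in \inInSet \setminus \{\inIn\}$ with $x' \prec x \prec y$, which witnesses causality of $\connect{\inIn}{\outIn}(\sys{s})$. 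This is where the nontrivial work sits, since the argument must trace an output change back through the fixed-point equation to an actual input.

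Finally, I would check that $\connectSet$ permits reordering: since $\connectSet(\sys{s}) = \bigl\{ \{\inIn, \outIn\} \mid \inIn \in \inInSet, \outIn \in \outInSet \bigr\}$ and connecting $\inIn$ to $\outIn$ removes exactly those two labels, any pair $\{j, j'\}$ connectable in $\connect{\inIn}{\outIn}(\sys{s})$ satisfies $j \in \inInSet \setminus \{\inIn\}$ and $j' \in \outInSet \setminus \{\outIn\}$, hence is connectable in $\sys{s}$, and $\{\inIn, \outIn\}$ remains connectable after first connecting $\{j, j'\}$ because $\inIn$ and $\outIn$ survive the removal of $j$ and $j'$. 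With all hypotheses of \autoref{lem:unifpCompinv} verified, the algebra is composition-order invariant.
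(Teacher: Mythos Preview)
Your proposal is correct and follows essentially the same approach as the paper: closure under $\parac$ is routine, closure under interface connection is handled exactly as you describe via causality of $\sys{s}$ together with \autoref{lem:fpmindif} for the case $\inIn''' = \inIn$, and composition-order invariance then follows from \autoref{cor:sysFixpoint} and \autoref{lem:unifpCompinv}. You are slightly more explicit than the paper in verifying that $\connectSet$ permits reordering and that the $\connectSet$-compatibility condition for parallel composition holds, but these are indeed immediate from the definition of $\connectSet$ and the paper leaves them implicit.
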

\begin{proof}
	Note that $\causalSysSpace$ is closed under parallel composition since the causality condition in \autoref{def:causalSys} is placed on each output interface separately. To show that $\causalSysSpace$ is also closed under connecting interfaces, let $\inInSet, \outInSet \subseteq \inSet$ be finite disjoint sets, let $\sys{s} \in \causalSysIntSpace{\inInSet}{\outInSet}$, $\inIn \in \inInSet$, and let $\outIn \in \outInSet$. Further let $\tuple{X}, \tuple{X}' \in \mathcal{X}^{\inInSet \setminus \{\inIn\}}$, $\outIn' \in \outInSet \setminus \{\outIn\}$, and $y \in \bigl(\connect{\inIn}{\outIn}(\sys{s})(\tuple{X})(\outIn') \symdif \connect{\inIn}{\outIn}(\sys{s})(\tuple{X}')(\outIn') \bigr)$. We have to show that $\exists \inIn' \in \inInSet \setminus \{\inIn\} \ \exists x \in (\tuple{X}(\inIn') \symdif \tuple{X}'(\inIn')) \ x \prec y$. By the definition of $\connect{\inIn}{\outIn}(\sys{s})$ and \autoref{def:causalSys}, we have
	\begin{equation*}
		\exists \inIn' \in \inInSet \ \exists x \in \bigl((\tuple{X} \cup \{(\inIn, \chosenFixedPoint{\sys{s}}{\inIn}{\outIn}{\tuple{X}})\})(\inIn') \symdif (\tuple{X}' \cup \{(\inIn, \chosenFixedPoint{\sys{s}}{\inIn}{\outIn}{\tuple{X}'})\})(\inIn') \bigr) \ \ x \prec y.
	\end{equation*}
	If $\inIn' \neq \inIn$, we are done. Otherwise, let $x \in \chosenFixedPoint{\sys{s}}{\inIn}{\outIn}{\tuple{X}} \symdif \chosenFixedPoint{\sys{s}}{\inIn}{\outIn}{\tuple{X}'}$ such that $x \prec y$. \autoref{lem:fpmindif} then implies that $\exists \inIn'' \in \inInSet \setminus \{\inIn\} \ \exists x' \in \bigl(\tuple{X}(\inIn'') \symdif \tuple{X}'(\inIn'') \bigr) \ x' \prec x \prec y$. Hence, we have $\connect{\inIn}{\outIn}(\sys{s}) \in \causalSysSpace$. This shows that $\genFunSys{\causalSysSpace}{\connectSet}{\fpChoiceFun}$ is a functional system algebra. Using \autoref{cor:sysFixpoint} and \autoref{lem:unifpCompinv}, we conclude that it is composition-order invariant.
\end{proof}

\section{Conclusion and Future Work}
We have introduced the concept of a \emph{system algebra}, which captures the composition of systems as algebraic operations, and identified \emph{composition-order invariance} as an important property of a system algebra. We have then introduced \emph{functional system algebras}, which consist of systems that correspond to functions mapping inputs to outputs. As instantiations of functional system algebras, we have considered monotone and continuous systems, where Kahn networks are an important special case of them. We have shown their composition-order invariance, providing further insights into these well-studied systems. We have finally introduced the system algebra of causal systems, which allows to model systems that depend on time.

Future work includes using the system algebra of causal systems to analyze protocols that deal with time such as clock synchronization protocols. We are further developing a theory of probabilistic system algebras that can be used to describe and study cryptographic protocols.

\paragraph{Acknowledgments\ifelsarticle\else.\fi}
\addcontentsline{toc}{section}{Acknowledgments}
Ueli Maurer was supported by the Swiss National Science Foundation (SNF), project No.~200020-132794. Christopher Portmann and Renato Renner are supported by the European Commission FP7 Project RAQUEL (grant No.~323970), US Air Force Office of Scientific Research (AFOSR) via grant~FA9550-16-1-0245, the Swiss National Science Foundation (via the National Centre of Competence in Research `Quantum Science and Technology'), and the European Research Council -- ERC (grant No.~258932). Björn Tackmann was supported by the Swiss National Science Foundation (SNF) via Fellowship No.~P2EZP2\_155566 and in part by the NSF grants CNS-1228890 and CNS-1116800.

\appendix

\section{Comparison with Strict Causality by Matsikoudis and Lee}\label{app:equivalence-causal}
We show below that the definition of strict causality by Matsikoudis and Lee~\cite{ML15} and our definition of causality are equivalent up to syntactical differences for functions that are compatible with both formalisms. Beforehand, we recall the relevant definitions from \cite{ML15}. The domains of the functions considered there consist of so-called \emph{signals}. A signal is a partial function~$\sigma \colon \tilde{\timesp} \to \valsp$, for some poset~$(\tilde{\timesp}, \preceq)$ and some set~$\valsp$ \cite[Definition~2.2]{ML15}. We denote the set of all signals by~$\Sigma$. Strict causality is then defined as follows.
\begin{definition}\label{def:strictly-causal-ML15}
	Let $F \colon \Sigma \to \Sigma$ be a partial function. Then, $F$ is \emph{strictly causal} if there is a partial function~$f \colon \Sigma \times \tilde{\timesp} \to \Sigma$ such that for all $\sigma$ in the domain of~$F$ and for all $\tau \in \tilde{\timesp}$,
	\begin{equation*}
		F(\sigma)(\tau) = f\bigl(\sigma|_{\{\tau' \in \tilde{\timesp} \mid \tau' \prec \tau\}}, \tau \bigr),
	\end{equation*}
	where equality here means that either both terms are undefined, or they are both defined and equal.
\end{definition}

To be compatible with these definitions, we only consider the special case of our system algebra where $\mathcal{X} = \{X \subseteq P \mid (X, \preceq) \text { is well-ordered}\}$ for a poset~$(P, \preceq)$ with $P = \valsp \times \timesp$ for some set~$\valsp$ and a poset~$(\timesp, \leq)$, and
$(\val_1, \tim_1) \preceq (\val_2, \tim_2)$ if and only if $\tim_1 < \tim_2$ or $(\val_1, \tim_1) = (\val_2, \tim_2)$.
Let $\inSet$ be a set and consider the poset~$\bigl(\tilde{\timesp}, \preceq \bigr)$, where $\tilde{\timesp} \coloneqq \inSet \times \timesp$ and $(\inIn_1, \tim_1) \preceq (\inIn_2, \tim_2)$ if and only if $\tim_1 < \tim_2$ or $(\inIn_1, \tim_1) = (\inIn_2, \tim_2)$. This means that, intuitively, the interface identifier $\inIn \in \inSet$ in our model becomes part of the ``time'' $\tau \in \tilde{\timesp}$ during the translation to the model of~\cite{ML15}. For $\inInSet \subseteq \inSet$, a tuple~$\tuple{X} \in \mathcal{X}^\inInSet$ can then be viewed as a signal via the injection
\begin{equation*}
	\varphi_{\inInSet} \colon \mathcal{X}^\inInSet \to \Sigma, \ \tuple{X} \mapsto \bigl\{\bigl((\inIn, \tim), \val \bigr) \in (\inInSet \times \timesp) \times \valsp \mid \inIn \in \inInSet \wedge (\val, \tim) \in \tuple{X}(\inIn) \bigr\}.
\end{equation*}
Note that $\varphi_{\inInSet}(\tuple{X})$ is indeed a partial function $\tilde{\timesp} \to \valsp$ because for each $(\inIn,\tim) \in \inInSet \times \timesp$, there is at most one~$\val \in \valsp$ such that $(\val, \tim) \in \tuple{X}(\inIn)$ since $\tuple{X}(\inIn)$ is well-ordered.

The following lemma shows that the Definitions~\ref{def:causalSys} and~\ref{def:strictly-causal-ML15} are essentially equivalent for the functions considered here.
\begin{lemma}\label{lem:causality-equivalence}
	Let $\inInSet$ and $\outInSet$ be finite disjoint sets and let $\sys{s} \colon \mathcal{X}^\inInSet \to \mathcal{X}^\outInSet$. Then, $\sys{s}$ is a causal $(\inInSet,\outInSet)$-system over~$\mathcal{X}$ in the sense of \autoref{def:causalSys} if and only if $\varphi_{\outInSet} \circ \sys{s} \circ \varphi_{\inInSet}^{-1}$ is strictly causal in the sense of \autoref{def:strictly-causal-ML15}.\footnote{The domain of $\varphi_{\outInSet} \circ \sys{s} \circ \varphi_{\inInSet}^{-1}$ is the set of all $\sigma \in \Sigma$ such that there exists $\tuple{X} \in \mathcal{X}^{\inInSet}$ with $\varphi_{\inInSet}(\tuple{X}) = \sigma$.}
\end{lemma}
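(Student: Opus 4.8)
The plan is to prove both implications through an intermediate characterization of strict causality that eliminates the auxiliary function~$f$. First I would show that $F \colon \Sigma \to \Sigma$ is strictly causal in the sense of \autoref{def:strictly-causal-ML15} if and only if, for all $\sigma, \sigma'$ in the domain of~$F$ and all $\tau \in \tilde{\timesp}$,
\begin{equation*}
	\sigma|_{\{\tau' \prec \tau\}} = \sigma'|_{\{\tau' \prec \tau\}} \ \Longrightarrow \ F(\sigma)(\tau) = F(\sigma')(\tau),
\end{equation*}
where equality of partial-function values means that both are undefined or both defined and equal. The forward direction is immediate from the definition. For the converse I would construct~$f$ by setting $f(\rho, \tau) \coloneqq F(\sigma)(\tau)$ for any $\sigma$ with $\sigma|_{\{\tau' \prec \tau\}} = \rho$; the displayed implication guarantees this is well-defined, and $f$ is left undefined elsewhere. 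Contraposing, strict causality becomes: whenever $F(\sigma)(\tau) \neq F(\sigma')(\tau)$, there is some $\tau' \prec \tau$ at which $\sigma$ and $\sigma'$ disagree.

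Next I would set up a dictionary translating between symmetric differences of well-ordered sets and disagreements of the associated signals. Writing $\sigma = \varphi_{\inInSet}(\tuple{X})$ and $\sigma' = \varphi_{\inInSet}(\tuple{X}')$, I would verify that the signals disagree at a point $(\inIn, \tim) \in \tilde{\timesp}$ precisely when there is a value~$\val$ with $(\val, \tim) \in \tuple{X}(\inIn) \symdif \tuple{X}'(\inIn)$, and likewise on the output side for $\varphi_{\outInSet} \circ \sys{s}$. Here I would use that each $\tuple{X}(\inIn)$ is well-ordered, hence carries at most one element at each time~$\tim$, so that $\varphi_{\inInSet}(\tuple{X})$ is genuinely a partial function of~$\tim$. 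Combined with the observation that, by the definitions of the orders on $P$ and on $\tilde{\timesp}$, both $x \prec y$ in $P$ and $\tau' \prec \tau$ in $\tilde{\timesp}$ reduce to a strict inequality of the time components, the contraposed strict-causality condition and the condition of \autoref{def:causalSys} become the same statement, and the equivalence follows. I would also note that disagreements of the signals can only occur at points whose interface coordinate lies in $\inInSet$ (resp.\ $\outInSet$), so that the quantifiers over $\tilde{\timesp}$ match the quantifiers over $\inInSet$ and $\outInSet$ in \autoref{def:causalSys}.

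The main obstacle I anticipate is the careful handling of partiality in the dictionary. Disagreement of two signals at $(\outIn, \tim)$ can arise in two ways---one signal defined and the other undefined, or both defined with different values---and in the latter case the symmetric difference $\sys{s}(\tuple{X})(\outIn) \symdif \sys{s}(\tuple{X}')(\outIn)$ contains two elements at the same time~$\tim$ rather than one. I would check that this causes no mismatch between the quantifiers: since the causal requirement imposed on an element $y = (\val, \tim)$ depends only on its time component~$\tim$ (via $x \prec y \Leftrightarrow \tim' < \tim$), quantifying over all elements of the output symmetric difference and quantifying over all times at which the output signals disagree yield equivalent conditions. Establishing this equivalence of the ``$\exists y$'' and ``the signals disagree at~$\tau$'' quantifications, in both directions and allowing for the undefined-versus-defined case, is the one place where the argument requires genuine care rather than direct translation.
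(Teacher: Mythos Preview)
Your proposal is correct and close in spirit to the paper's argument, but the organization differs in one notable respect. You first establish an intermediate characterization of strict causality---$F$ is strictly causal iff $\sigma|_{\{\tau'\prec\tau\}}=\sigma'|_{\{\tau'\prec\tau\}}$ implies $F(\sigma)(\tau)=F(\sigma')(\tau)$---and then translate via your dictionary. The paper instead bypasses this by taking $f(\sigma,\tau)\coloneqq F(\sigma)(\tau)$ directly and, for each $\sigma$ in the domain and each $\tau=(\outIn,\tim)$, applying causality of~$\sys{s}$ to the specific pair $\tuple{X}=\varphi_{\inInSet}^{-1}(\sigma)$ and $\tuple{X}'=\varphi_{\inInSet}^{-1}\bigl(\sigma|_{\{\tau'\prec\tau\}}\bigr)$. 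This is shorter but tacitly uses that the truncated signal $\sigma|_{\{\tau'\prec\tau\}}$ lies again in the image of~$\varphi_{\inInSet}$ (true because truncating each $\tuple{X}(\inIn)$ to times $<\tim$ preserves well-orderedness); your route avoids needing this closure property. For the backward direction the two arguments are essentially identical. Your explicit dictionary and your discussion of the ``two-element symmetric difference at one time'' case make transparent exactly the points the paper handles in one line; either presentation is fine.
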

\begin{proof}
	First assume that $\sys{s}$ is a causal $(\inInSet,\outInSet)$-system over~$\mathcal{X}$ and consider the partial function~$f \colon \Sigma \times \timesp \to \Sigma$ defined by
	\begin{equation*}
		f(\sigma, \tau) \coloneqq \varphi_{\outInSet} \circ \sys{s} \circ \varphi_{\inInSet}^{-1}(\sigma)(\tau).
	\end{equation*}
	Now let $\sigma \in \Sigma$ and $\tuple{X} \in \mathcal{X}^\inInSet$ such that $\varphi_{\inInSet}(\tuple{X}) = \sigma$. Further let $\tau = (\outIn, \tim) \in \outInSet \times \timesp$. We then have by \autoref{def:causalSys} for $\tuple{X}' \coloneqq \varphi_{\inInSet}^{-1}\bigl(\sigma|_{\{\tau' \in \tilde{\timesp} \mid \tau' \prec \tau\}} \bigr)$,
	\begin{equation*}
		\forall y \in \bigl(\sys{s}(\tuple{X})(\outIn) \symdif \sys{s}(\tuple{X}')(\outIn) \bigr) \ \exists \inIn \in \inInSet \ \exists x \in \bigl(\tuple{X}(\inIn) \symdif \tuple{X}'(\inIn)\bigr) \ \ x \prec y.
	\end{equation*}
	By definition of $\tuple{X}'$, there are no $\inIn \in \inInSet$ and $(\val', \tim') \in \bigl(\tuple{X}(\inIn) \symdif \tuple{X}'(\inIn) \bigr)$ with $\tim' < \tim$. This implies that there is no $\val \in \valsp$ with $(\val, \tim) \in \bigl(\sys{s}(\tuple{X})(\outIn) \symdif \sys{s}(\tuple{X}')(\outIn) \bigr)$, and therefore
	\begin{equation*}
		\begin{split}
			\varphi_{\outInSet} \circ \sys{s} \circ \varphi_{\inInSet}^{-1}(\sigma)(\tau) &= \varphi_{\outInSet} \circ \sys{s}(\tuple{X})(\tau) = \varphi_{\outInSet} \circ \sys{s}(\tuple{X}')(\tau) = \varphi_{\outInSet} \circ \sys{s} \circ \varphi_{\inInSet}^{-1}\bigl(\sigma|_{\{\tau' \in \tilde{\timesp} \mid \tau' \prec \tau\}} \bigr)(\tau) \\
			&= f \bigl(\sigma|_{\{\tau' \in \tilde{\timesp} \mid \tau' \prec \tau\}}, \tau \bigr).
		\end{split}
	\end{equation*}
	Hence, $\varphi_{\outInSet} \circ \sys{s} \circ \varphi_{\inInSet}^{-1}$ is strictly causal.
	
	To prove the opposite direction, assume $\varphi_{\outInSet} \circ \sys{s} \circ \varphi_{\inInSet}^{-1}$ is strictly causal and let $f \colon \Sigma \times \timesp \to \Sigma$ be a partial function such that for all $\sigma$ in the domain of~$\varphi_{\outInSet} \circ \sys{s} \circ \varphi_{\inInSet}^{-1}$ and for all $\tau \in \tilde{\timesp}$, we have $\varphi_{\outInSet} \circ \sys{s} \circ \varphi_{\inInSet}^{-1}(\sigma)(\tau) = f\bigl(\sigma|_{\{\tau' \in \tilde{\timesp} \mid \tau' \prec \tau\}}, \tau \bigr)$. Now let $\tuple{X}, \tuple{X}' \in \mathcal{X}^{\inInSet}$, $\outIn \in \outInSet$, and $(\val, \tim) = y \in \bigl(\sys{s}(\tuple{X})(\outIn) \symdif \sys{s}(\tuple{X}')(\outIn) \bigr)$. We then have $\varphi_{\outInSet}(\sys{s}(\tuple{X}))(\outIn, \tim) \neq \varphi_{\outInSet}(\sys{s}(\tuple{X}'))(\outIn, \tim)$ and thus for $\tau \coloneqq (\outIn, \tim)$,
	\begin{equation*}
		f\bigl(\varphi_{\inInSet}(\tuple{X})|_{\{\tau' \in \tilde{\timesp} \mid \tau' \prec \tau\}}, \tau \bigr) = \varphi_{\outInSet} \circ \sys{s} (\tuple{X})(\tau) \neq \varphi_{\outInSet} \circ \sys{s} (\tuple{X}')(\tau) = f\bigl(\varphi_{\inInSet}(\tuple{X}')|_{\{\tau' \in \tilde{\timesp} \mid \tau' \prec \tau\}}, \tau \bigr).
	\end{equation*}
	Hence, $\varphi_{\inInSet}(\tuple{X})|_{\{\tau' \in \tilde{\timesp} \mid \tau' \prec \tau\}} \neq \varphi_{\inInSet}(\tuple{X}')|_{\{\tau' \in \tilde{\timesp} \mid \tau' \prec \tau\}}$. Therefore, there exist $\inIn \in \inInSet$ and $x \in \bigl(\tuple{X}(\inIn) \symdif \tuple{X}'(\inIn)\bigr)$ with $x \prec y$. This shows that $\sys{s}$ satisfies \autoref{def:causalSys} and concludes the proof.
\end{proof}

Note that for all finite~$\inInSet \subseteq \inSet$ and for all $\tuple{X} \in \mathcal{X}^{\inInSet}$, the relation~$\prec$ is well-founded on the domain of $\varphi_{\inInSet}(\tuple{X})$: If it contained an infinite descending chain, then there would also be an infinite descending chain in $\tuple{X}(\inIn)$ for some $\inIn \in \inInSet$ (since $\inInSet$ is finite), contradicting that $\tuple{X}(\inIn)$ is well-ordered. Thus, \cite[Theorem~4.10]{ML15} can be applied, which together with \autoref{lem:causality-equivalence} implies that for all causal $(\inInSet,\outInSet)$-systems~$\sys{s}$ over~$\mathcal{X}$, the partial function $\varphi_{\outInSet} \circ \sys{s} \circ \varphi_{\inInSet}^{-1}$ is \emph{strictly contracting}. Hence, all results in \cite{ML15} about strictly contracting partial functions can be used.

\ifelsarticle
	\section*{References}
	\bibliography{funsys}
\else
	\printbibliography
\fi

\end{document}